\theoremstyle{plain}
\newtheorem{theorem}{Theorem}[section]
\newtheorem{proposition}[theorem]{Proposition}
\newtheorem{lemma}[theorem]{Lemma}
\theoremstyle{definition}
\newtheorem{remark}[theorem]{Remark}
\newtheorem{example}[theorem]{Example}
\newtheorem{assumption}[theorem]{Assumption}
\theoremstyle{remark}
\numberwithin{equation}{section}
\DeclareMathOperator*{\sgn}{sign}
\DeclareMathOperator{\D}{D}
\DeclareMathOperator{\Trace}{Trace}
\newcommand{\FF}{\mathbb{F}}
\newcommand{\NN}{\mathbb{N}}
\newcommand{\RR}{\mathbb{R}}
\newcommand{\bfD}{\mathbf{D}}
\newcommand{\cA}{\mathcal{A}}
\newcommand{\cB}{\mathcal{B}}
\newcommand{\cF}{\mathcal{F}}
\newcommand{\cV}{\mathcal{V}}
\newcommand{\cX}{\mathcal{X}}
\newcommand{\cZ}{\mathcal{Z}}
\newcommand{\fa}{\mathfrak{a}}
\newcommand{\fP}{\mathfrak{P}}
\newcommand{\fS}{\mathfrak{S}}
\newcommand{\diff}{\mathrm{d}}
\newcommand{\dd}{\,\mathrm{d}}
\newcommand{\1}{\mathbf{1}}
\newcommand*{\EX}[2][]{E^{#1}\left [ #2 \right ]}
\newcommand*{\cEX}[3][]{E^{#1}\left[ #2 \,\middle\vert\, #3 \right]}
\newcommand*{\PR}[2][P]{{#1}\left [ #2 \right ]}
\newcommand*{\wt}[1]{\widetilde{#1}}
\newcommand*{\wh}[1]{\widehat{#1}}
\newcommand*{\ol}[1]{\overline{#1}}
\newcommand{\tr}{\top}
\begin{document}
\title{%
Hedging with Small Uncertainty Aversion%
\footnote{The authors thank Ibrahim Ekren, Paolo Guasoni, Martin Herdegen, David Hobson, Jan Kallsen, Ariel Neufeld, Marcel Nutz, Oleg Reichmann, Martin Schweizer, and H.~Mete Soner for fruitful discussions and two referees and an associate editor for helpful comments.}
}
\date{}
\author{%
  Sebastian Herrmann%
  \thanks{
  ETH Z\"urich, Department of Mathematics, R\"amistrasse 101, CH-8092 Z\"urich, Switzerland, email
  \href{mailto:sebastian.herrmann@math.ethz.ch}{\nolinkurl{sebastian.herrmann@math.ethz.ch}}.
  Financial support by the Swiss Finance Institute is gratefully acknowledged.
  }
  \and
  Johannes Muhle-Karbe%
  \thanks{
  University of Michigan, Department of Mathematics, 530 Church Street, Ann Arbor, MI 48109, USA, email
  \href{mailto:johanmk@umich.edu}{\nolinkurl{johanmk@umich.edu}}.
  }
  \and
  Frank Thomas Seifried%
  \thanks{
  Department IV -- Mathematics, Universit\"at Trier, Universit\"atsring 19, D-54296 Trier, Germany, email
  \href{mailto:seifried@uni-trier.de}{\nolinkurl{seifried@uni-trier.de}}.
  }
}
\maketitle

\begin{abstract}
We study the pricing and hedging of derivative securities with uncertainty about the volatility of the underlying asset. Rather than taking all models from a prespecified class equally seriously, we penalise less plausible ones based on their ``distance'' to a reference local volatility model. In the limit for small uncertainty aversion, this leads to explicit formulas for prices and hedging strategies in terms of the security's cash gamma.
\end{abstract}

\vspace{0.5em}

{\small
\noindent \emph{Keywords} volatility uncertainty; ambiguity aversion; option pricing and hedging; asymptotics.

\vspace{0.25em}
\noindent \emph{AMS MSC 2010}
Primary,
91G20, 91B16; 
Secondary,
93E20.

\vspace{0.25em}
\noindent \emph{JEL Classification}
G13, C61, C73.
}

\section{Introduction}
\label{sec:introduction}

Prices on financial markets are, ultimately, driven by human decisions. Whence, every quantitative model is at best a useful approximation. Moreover, even if a given model is well-suited for a particular application, its parameters typically cannot be estimated with arbitrary precision and may change suddenly due to unpredictable shocks. Therefore, it is crucial to assess the impact of model uncertainty and to derive decision rules that explicitly take it into account. The present study tackles this problem for the valuation and hedging of derivatives with uncertainty about the volatility of the underlying. 

The most widely used approach in the extant literature, dating back to the seminal papers of Avellaneda, Levy, and Par\'as  \cite{AvellanedaLevyParas1995} and Lyons \cite{Lyons1995}, is the so-called \emph{uncertain volatility model} (UVM).\footnote{A related approach is Mykland's ``conservative delta hedging''~\cite{Mykland2000,Mykland2003.prediction}.} Here, the single probabilistic model used in the classical setting is replaced by a whole class of different alternatives. These are evaluated with a worst-case approach, both with respect to the ``risk'' inherent in a given model and with respect to the ``Knightian uncertainty'' about which model should be applied in the first place. More specifically, the UVM assumes only that the true volatility evolves within a band $[\sigma_{\min},\sigma_{\max}]$, without any assumptions on its dynamics. In this setting, one then tries to determine the cheapest hedge that removes all downside risk under \emph{any} conceivable model, corresponding to infinite aversion against both risk and uncertainty. This approach is well-suited to determine universal no-arbitrage bounds and has initiated a tremendous amount of research; see \cite{Hobson1998.lookback, Hobson1998.coupling, Frey2000, DenisMartini2006, Hobson2011, AcciaioBeiglbockPenknerSchachermayer2015, BeiglbockHenryLaborderePenkner2013, NeufeldNutz2013, PossamaiRoyerTouzi2013, BiaginiBouchardKardarasNutz2015, DolinskySoner2014, GalichonHenryLabordereTouzi2014, Nutz2014, BouchardNutz2015, HouObloj2015} and the references therein. However, by its very definition it is also very conservative: unless a given model is ruled out a priori, it is treated in exactly the same way as the most plausible alternatives, say point estimates derived from market prices or statistical procedures. If the volatility band is chosen very wide, then the bid-ask spread induced by the worst-case approach is very large and a market maker quoting these bid-ask prices will not find customers in any competitive environment. If one artificially tightens the band of volatility scenarios to obtain competitive prices, the risk that the true volatility strays out of the band increases. Finally, the worst-case approach often incorporates options' specific sensitivities to volatility only in a bang-bang fashion. For instance, the ask price of a call option always equals the Black--Scholes price with volatility equal to the maximal value $\sigma_{\max}$. This holds even if the price of the underlying is far away from the strike, where the option's sensitivity to changes in volatility is very low. A more nuanced attitude towards model uncertainty should reflect that in the context of hedging, agents are typically more concerned about volatility misspecification when it matters -- that is, if the option value's sensitivity to volatility is high. 

To address these issues, one can consider preferences that interpolate between the classical and the worst-case approach by weighing different models according to their plausibility. Maccheroni, Marinacci, and Rustichini \cite{MaccheroniMarinacciRustichini2006} provide a decision-theoretic, axiomatic foundation for criteria of this kind. They show that uncertainty-averse decision makers rank payoffs $Y$ according to a numerical representation of their preferences of the following form: 
\begin{align}
\label{eqn:intro:numerical representation}
\inf_P \left(\EX[P]{U(Y)} + \alpha(P) \right).
\end{align}
Here, the \emph{utility function} $U$ and the \emph{penalty functional}\footnote{\label{ftn:penalty}This terminology stems from the literature on robust control \cite{HansenSargent2007} and also from the robust representation of convex risk measures (see, e.g., \cite[Section 5.2]{FollmerSchied2013}). Note that the penalty is not imposed on the agent, but on her fictitious adversary who minimises over $P$. Hence, the penalty in \eqref{eqn:intro:numerical representation} is added and not subtracted.} $\alpha$ describe the decision makers' attitudes towards risk and uncertainty, respectively. One interpretation is that the decision maker behaves as if she was facing a malevolent opponent (``nature'') who takes advantage of her uncertainty by choosing a probability scenario $P$ that minimises her penalised expected utility. The standard expected utility framework corresponds to the case where $\alpha$ is zero for a single model $P$ and $+\infty$ otherwise. If $\alpha$ is zero for a whole class of models $\fP$ that are considered (equally) reasonable and $+\infty$ otherwise, we are in the worst-case approach and \eqref{eqn:intro:numerical representation} reduces to the \emph{multiple priors preferences} of Gilboa and Schmeidler~\cite{GilboaSchmeidler1989}. Another well-known special case of \eqref{eqn:intro:numerical representation} is given by the \emph{multiplier preferences} of Hansen and Sargent \cite{HansenSargent2001}, which penalise models based on their relative entropy with respect to a fixed reference model. With this approach, the user is not required to draw a strict line between ``reasonable'' and ``wrong'' models; instead, less plausible models are penalised according to how much they differ from the reference model. This specification excludes any uncertainty about volatility as the relative entropy is only finite for models that are absolutely continuous with respect to the reference model. Accordingly, this line of research has mostly focused on portfolio choice problems with uncertainty about future expected returns. With a different penalty functional, however, the same philosophy can also be applied to price and hedge options with uncertainty about the volatility of the underlying, as we do in the present study.

More specifically, we use the following general approach:

\renewcommand{\labelenumi}{(\roman{enumi})}
\begin{enumerate}
\item Choose a \emph{reference model} that you believe describes the true dynamics of the stock price reasonably well. Make sure that this model matches the market prices of liquidly traded derivatives on the stock.

\item Devise a nonnegative \emph{penalty functional} that attaches a penalty to each alternative model and reflects how seriously you take it; the higher a model's penalty, the less plausible it is. Typically, your penalty functional is based on some ``distance'' to your reference model.

\item For each trading strategy, evaluate its performance using \eqref{eqn:intro:numerical representation}. That is, compute the expected utility from the corresponding terminal Profit\&Loss (henceforth P\&L; defined as the terminal wealth from trading minus the payoffs of the options you have sold) in each model, add the model's penalty, and then take the infimum over all models.

\item Find a trading strategy that maximises this performance criterion.
\end{enumerate}

In this paper, we choose a local volatility reference model \cite{Dupire1994}. This allows to incorporate the spot prices of vanilla options, but still isolates the impact of model uncertainty from additional complexities arising from incompleteness or additional state variables.\footnote{In view of the well-known deficiencies of local volatility models in capturing the dynamics of option prices, cf., e.g., \cite{DumasFlemingWhaley1998,Gatheral2006}, an extension to more general reference models is an important direction for future research; see~Remark~\ref{rem:stochastic volatility} for some further discussion.}

The penalty functionals we consider penalise deviations of the actual (instantaneous) volatility from its counterpart in the reference model. A typical example is the mean-squared  distance  between those two volatilities:\footnote{A similar penalty has been used by \cite{AvellanedaFriedmanHolmesSamperi1997} in the context of local volatility calibration with prior beliefs.}
\begin{align}
\label{eqn:intro:penalty functional}
\alpha(P)
&= \EX[P]{\frac{1}{2\psi}\int_0^T U'(Y_t) (\sigma_t - \bar\sigma(t,S_t))^2 \dd t}.
\end{align}
Here, $Y_t$ is your P\&L at time $t$, defined as wealth from trading minus the reference value of the option at time $t$, $\bar\sigma(t,S_t)$ is the local volatility of the reference model, and $\sigma_t$ is the actual instantaneous volatility of returns of the stock price under the model $P$.\footnote{The term $U'(Y_t)$ in \eqref{eqn:intro:penalty functional} renders the preferences invariant under affine transformations of the utility function; see also Remark~\ref{rem:marginal utility}.} The parameter $\psi > 0$ describes the magnitude of uncertainty aversion. Indeed, small values of $\psi$ lead to high penalties for models that deviate from the reference model, so that those alternative models are taken less and less seriously as $\psi$ approaches zero.

To obtain explicit formulas, we pass to the limit where uncertainty aversion $\psi$ tends to zero.\footnote{Asymptotic analyses of option pricing and hedging problems with the worst-case approach have been carried out by \cite{Lyons1995,AhnMuniSwindle1997,AhnMuniSwindle1999,FouqueRen2014}.} That is, we consider the problem at hand as a perturbation of its classical counterpart for the reference model, and then correct prices and hedges to take into account model uncertainty in an asymptotically optimal manner. 

\begin{figure}
\centering
\includegraphics[scale=.4]{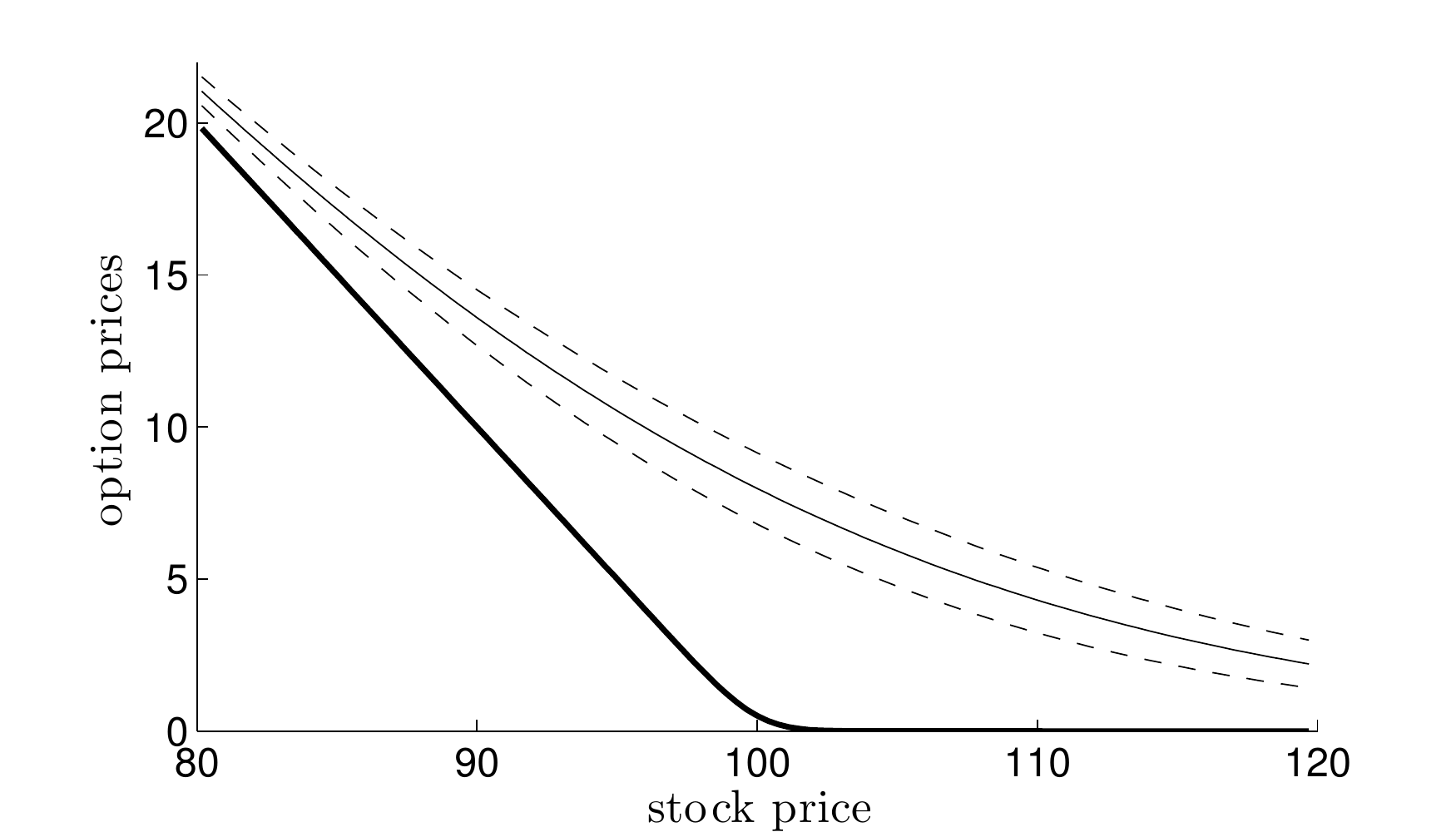}
\caption{Payoff of a ``smooth put'' with strike $100$ (solid, thick), Black--Scholes value of a $1$ year ``smooth put'' (solid, thin), and first-order bid and ask prices $p_b(\psi)$ and $p_a(\psi)$ (dashed), all as functions of the current stock price; the parameters are $\bar\sigma \equiv 20\%$ and $\psi = 10^{-3}$.}
\label{fig:smooth put:value}
\end{figure}

For a single vanilla option, hedged by trading in the underlying stock $S$, we establish a rigorous second-order expansion of the value $v(\psi)$ of this hedging problem for small values of the uncertainty aversion parameter $\psi$. The leading-order term in this expansion is attained by the delta hedge in the reference model; almost optimal strategies $\theta^\psi$ are identified by matching the next-to-leading order term. In analogy to perturbation analyses with transaction costs~\cite{WhalleyWilmott1997}, the leading-order coefficient of the expansion as well as the formula for the almost optimal strategy are determined by the solution $\wt w$ to a linear second-order parabolic partial differential equation (PDE) with a source term. The prices at which you are indifferent between a flat position and a long or short position in the option can in turn be described as follows. Starting from some initial capital $x_0$ and a flat position in options, your indifference bid and ask prices $p_b(\psi)$ and $p_a(\psi)$ for the option have the expansions
\begin{align*}
p_b(\psi)
&= \bar V - \wt w \psi + o(\psi)
\quad\text{and}\quad
p_a(\psi)
= \bar V + \wt w \psi + o(\psi),
\end{align*}
where $\bar V = \bar V(0,S_0)$ is the option price in the reference model. Whence, $\wt w\psi$ is the leading-order discount or premium that you demand as a compensation for exposing yourself to model uncertainty. We can thus interpret $\wt w$ as a measure for the option's susceptibility to model misspecification and call it the option's \emph{cash equivalent (of small uncertainty aversion)}. To better understand this correction term, consider its Feynman--Kac representation (cf.~Proposition \ref{prop:Feynman-Kac}):\footnote{In Section \ref{sec:introduction}, we only display the formulas for the special case of the penalty functional \eqref{eqn:intro:penalty functional}. Our analysis also applies to more general penalty functionals; cf.~Section~\ref{sec:hedging under uncertainty}.}

\begin{figure}
\centering
\begin{subfigure}{.5\textwidth}
  \centering
  \includegraphics[scale=.4]{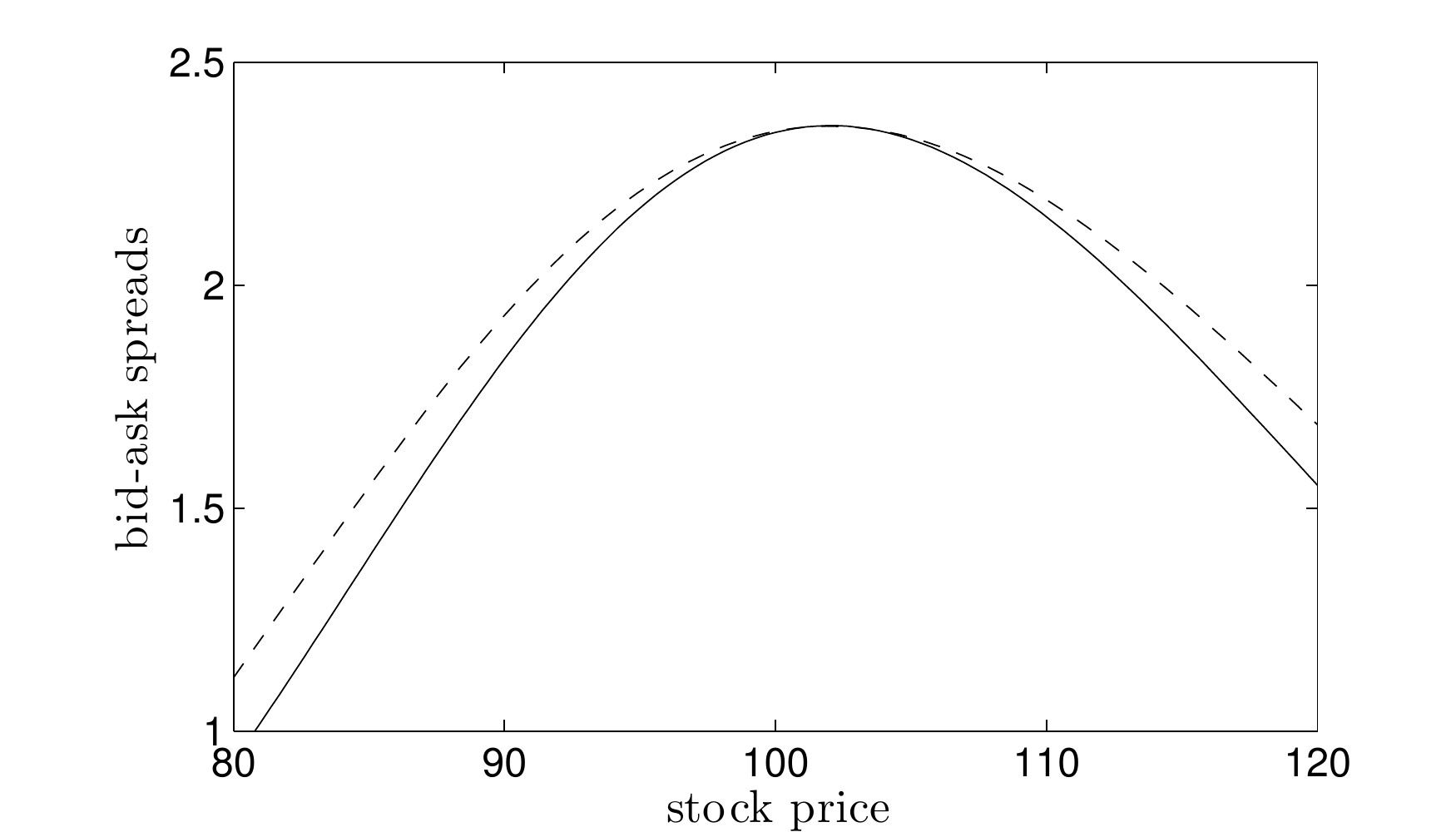}
\end{subfigure}%
\begin{subfigure}{.5\textwidth}
  \centering
  \includegraphics[scale=.4]{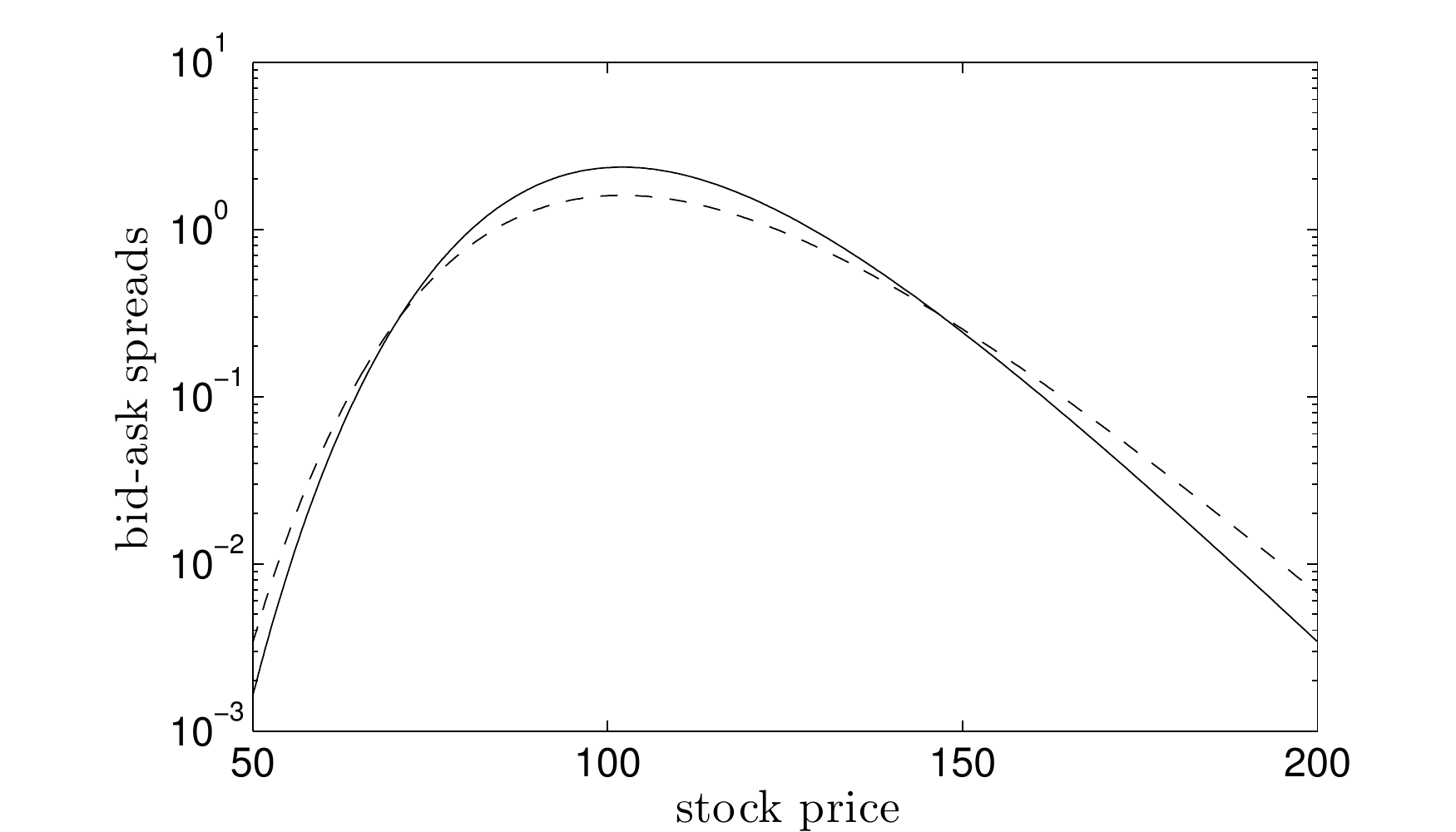}
\end{subfigure}%
\caption{First-order bid-ask spread $p_a(\psi)-p_b(\psi)$ (solid) and bid-ask spread corresponding to the UVM (dashed) with two different volatility bands and stock price ranges; the setting is the same as in Figure \ref{fig:smooth put:value}. In the left panel, the volatility band $[17.04\%,22.96\%]$ is chosen such that the at-the-money spreads coincide. In the right panel, the band is $[18\%,22\%]$ and the vertical axis has a logarithmic scale.}
\label{fig:smooth put:spreads}
\end{figure}

\begin{align}
\label{eqn:intro:Feynman-Kac}
\wt w(t,s)
&=\frac{1}{2}\cEX{\int_t^T \left(\bar\sigma(u,S_u) \bar\Gamma^\$_u\right)^2 \dd u}{S_t = s}.
\end{align}
We see that the option's sensitivity to volatility uncertainty is determined by the expected volatility-weighted \emph{cash gamma} $\bar\Gamma^\$_u :=  S_u^2 \bar V_{ss}(u,S_u)$ accumulated over the remaining lifetime of the option.\footnote{Our results can be formally linked to the UVM with a random, time-dependent volatility band depending on the option's cash gamma and the agent's uncertainty aversion; cf.~Remark~\ref{rem:BSB}. Note that the cash gamma also plays a crucial role in the asymptotic analysis of other frictions such as discrete rebalancing \cite{BertsimasKoganLo2000}, transaction costs \cite{WhalleyWilmott1997}, price impact \cite{MoreauMuhleKarbeSoner2015}, or jumps \cite{CernyDenklKallsen2013}.} Also note that formula \eqref{eqn:intro:Feynman-Kac} is independent of the investor's utility function $U$. Thus, in the setting of this paper, uncertainty aversion dominates risk aversion at the leading order.

Let us illustrate our results for a deliberately simple case of a constant volatility reference model and a ``smooth put'' option, whose payoff is a smoothed version of a standard put option with strike $100$.\footnote{\label{ftn:smooth put}Specifically, the payoff is the Black--Scholes value of a standard put option with strike $100$ and maturity $1$ day.} Figure~\ref{fig:smooth put:value} displays the first-order approximations of the indifference bid and ask prices $p_b(\psi)$ and $p_a(\psi)$ for $\psi = 10^{-3}$ as functions of the stock price. It is instructive to compare the corresponding first-order approximation of the bid-ask spread, $2 \wt w \psi$, with the bid-ask spread implied by the UVM. In the left panel of Figure~\ref{fig:smooth put:spreads}, the volatility band is chosen such that the at-the-money UVM-spread\footnote{As the option payoff is convex, this spread is simply the difference between the Black--Scholes values of the ``smooth put'' corresponding to the two endpoints of the volatility band.} equals our spread $2\wt w \psi$. The only reason for this choice is to improve the visual comparability of both spreads. Indeed, the worst-case approach requires the volatility band to be chosen in accordance with the agent's beliefs, which would arguably yield a wider volatility band. We see that the spread $2\wt w \psi$ becomes smaller than the UVM-spread if the stock price moves away from the strike to a region where the cash gamma of the option is lower. The right panel of Figure~\ref{fig:smooth put:spreads} compares the spreads for a wider range of stock prices and an even narrower volatility band. Our bid-ask spread exceeds the UVM-spread if the stock price is close to the strike, where the cash gamma and the option's sensitivity to changes in volatility is high. Conversely, our bid-ask spread is lower than the UVM-spread if the stock price is far away from the strike, where the option's sensitivity to volatility is low. This illustrates that our approach adjusts to the structure of the option's sensitivity to volatility misspecification in a more nuanced manner than the UVM.

Let us emphasise, however, that the bid-ask spread implied by our results is \emph{not} generically close to that of the UVM. For instance, the UVM-spread for two (smooth) put options is twice the spread for a single one, whereas the spread implied by our results scales quadratically in the option's cash gamma and thus grows by a factor of four (cf.~\eqref{eqn:intro:Feynman-Kac}). Put differently, if you are already short one option, you require a larger compensation for selling another option of the same type as this makes your position more vulnerable to volatility misspecification.

We next address the corresponding asymptotically optimal hedging strategy. Up to some technical modifications (cf.~Theorem~\ref{thm:second order}), it is given by the delta hedge derived from the uncertainty-adjusted option price:
\begin{align*}
\theta^\psi_t
&= \bar\Delta_t + \wt w_s(t,S_t) \psi,
\end{align*}
where $\bar\Delta_t = \bar V_s(t,S_t)$ is the option's \emph{delta} in the reference model.\footnote{Here and in the following, subscripts on functions denote the corresponding partial derivatives.} The adjustment $\wt w_s(t,S_t) \psi$ can be interpreted as a hedge against movements of the stock price into zones of high susceptibility to model misspecification; see~Section~\ref{sec:main results} for more details. Like the price adjustment, the leading-order hedge correction is also determined by uncertainty aversion alone, independent of risk aversion.\footnote{The strategy adjustment may become dependent on risk aversion for other penalty functionals; cf.~the discussion following Theorem~\ref{thm:second order}.}

For the results described so far, we provide a rigorous verification theorem subject to sufficient regularity conditions, which should be seen as a proof of concept. To illustrate the much wider scope of our approach, we also provide formal\footnote{A rigorous verification of these results would proceed along the same lines as for the simpler benchmark case discussed here. In order not to drown the ideas in further technicalities resulting from even more state variables, regularity conditions, etc. we do not pursue this here.} extensions that allow to cover practically relevant settings including exotic options,\footnote{Worst-case hedges for some exotics have been derived by \cite{Hobson1998.lookback, BrownHobsonRogers2001, CoxObloj2011.notouch, CoxObloj2011.touch, HobsonNeuberger2012, HobsonKlimmek2012, HobsonKlimmek2015, Stebegg2014}, for example.} option portfolios, or semi-static hedging with vanilla options and the underlying stock. We find that for many exotics like Asian, lookback, or barrier options, the above representations for the price and hedge corrections remain valid. The only change is that the option's price in the reference model and its corresponding cash gamma depend on further state variables in these cases. For other exotics such as options written on the realised variance, further greeks computed in the reference model come into play; cf.~Example~\ref{ex:exotic:realised variance}. Concerning option portfolios and semi-static hedging, suppose that your book consists of a variety of non-traded options with different maturities and payoff profiles, which you want to hedge by dynamic trading in the stock, and by setting up a static position in some liquidly traded vanilla options. Suppose further that you can buy or sell at time $0$ any quantities $\lambda_1,\ldots,\lambda_M$ of $M$ different liquid vanilla options, say calls, at prices to which your reference model is calibrated.\footnote{This is the analogue of the \emph{Lagrangian uncertain volatility model} \cite{AvellanedaParas1996}; also compare \cite{Mykland2003.options}.} Then given a choice of $\lambda$, the cash equivalent of the combined portfolio is\footnote{The same formula obtains -- mutatis mutandis -- for exotics of Asian-, lookback-, or barrier-type, after adding the appropriate state variables to the reference cash gamma.}
\begin{align}
\label{eqn:intro:portfolio cash equivalent}
\wt w(t,s;\lambda)
&= \frac{1}{2}\cEX{\int_t^T \left(\bar\sigma(u,S_u)\Big(\bar\Gamma^{\$,0}_u - \sum_{i=1}^M \lambda_i \bar\Gamma^{\$,i}_u \Big)\right)^2 \dd u}{S_t = s},
\end{align}
where $\bar\Gamma^{\$,0}$ is the net cash gamma of your original book (before buying or selling calls) and $\bar\Gamma^{\$,i}$ is the cash gamma of the $i$-th call. The corresponding hedge is
\begin{align*}
\theta^\psi_t
&= \bar\Delta^0_t - \sum_{i=1}^M \lambda_i \bar\Delta^i_t + \wt w_s(t,S_t;\lambda)\psi,
\end{align*}
where $\bar\Delta^0$ is the net delta of the original book and $\bar\Delta^i$ is the delta of the $i$-th call. Formula \eqref{eqn:intro:portfolio cash equivalent} provides a simple yet theoretically founded criterion to optimise the resilience of a derivative book against model misspecification. Indeed, since the cash equivalent is a measure for the portfolio's susceptibility to model misspecification, minimising \eqref{eqn:intro:portfolio cash equivalent} over $\lambda$ renders your combined portfolio more robust. This minimisation corresponds to balancing out the cash gamma of your net position by trading appropriately in the liquid options, in analogy to corresponding results for transaction costs \cite[Section 3.1]{KallsenMuhleKarbe2015}. The mapping
\begin{align*}
\bar\Gamma^{\$,0}
\mapsto
\inf_\lambda \frac{1}{2}\EX{\int_0^T \left(\bar\sigma(u,S_u)\Big(\bar\Gamma^{\$,0}_u - \sum_{i=1}^M \lambda_i \bar\Gamma^{\$,i}_u \Big)\right)^2 \dd u}
\end{align*}
can be interpreted as a ``measure of model uncertainty'' in the sense of Cont \cite{Cont2006}. That is, it satisfies certain desirable properties that, unlike standard risk measures, take into account model-independent hedging strategies and the availability of options as hedging instruments; cf.~Section~\ref{sec:measure} for more details.

From a mathematical point of view, this paper can be seen as a case study of an asymptotic analysis of a two-player, zero-sum stochastic differential game where both the drift and the diffusion coefficients of the state variables are controlled. Instead of the predominating Elliott--Kalton formulation used in the seminal work of Fleming and Souganidis \cite{FlemingSouganidis1989} and many articles thereafter, we use a weak formulation similar to Pham and Zhang \cite{PhamZhang2014}. This approach turns out to be both more natural and more convenient for the hedging problem at hand. The candidate expansion of the value function and  candidates for almost optimal controls are derived from an appropriate ansatz substituted into the Hamilton--Jacobi--Bellman--Isaacs equation. The proof adapts classical verification arguments of stochastic optimal control to the asymptotic setting.

The remainder of the article is organised as follows. Our framework for hedging with volatility uncertainty is introduced in Section~\ref{sec:problem formulation}. Subsequently, we state and discuss our main results in Section~\ref{sec:main results}. A partially heuristic derivation and formal extensions can be found in Section~\ref{sec:extensions}. Finally, the rigorous proofs of the results from Section~\ref{sec:main results} are collected in Section~\ref{sec:proofs}. 

\section{Problem formulation}
\label{sec:problem formulation}

\subsection{Hedging with a local volatility model}
\label{sec:hedging}

Consider an agent (hereafter called ``you'') who has written a vanilla option on a stock $S$ with payoff $G(S_T)$ at maturity $T > 0$. We assume that you can trade frictionlessly in the stock and a bank account with zero interest rate. If you are risk averse, you will try to hedge your exposure to the option by trading ``appropriately'' in the underlying. 

Let us assume that the true dynamics of the stock price process $S = (S_t)_{t\in[0,T]}$ are governed by the SDE
\begin{align}
\label{eqn:hedging:SDE}
\diff S_t
&= S_t \sigma_t \dd W_t, \quad S_0 = s_0 > 0,
\end{align}
for a Brownian motion $W$ and a volatility process $\sigma$. A typical approach to hedge options is to postulate a class of models for the stock price dynamics, calibrate it using market prices of liquidly traded derivatives, and act as if the model was correct. A simple, but popular choice for this procedure is the class of \emph{local volatility models} \cite{Dupire1994}, which assume that the volatility process in \eqref{eqn:hedging:SDE} is a deterministic function of time and the stock price itself, $\sigma_t = \bar\sigma(t,S_t)$. Let us briefly recall how this model can be used to hedge the (sufficiently integrable) option $G(S_T)$ through self-financing trading in the underlying (cf., e.g., \cite{Gatheral2006} for more details). Self-financing trading strategies are parametrised by progressively measurable processes $\theta=(\theta_t)_{t\in[0,T]}$ describing the number of shares held at time $t$. If your initial capital is $x_0$ and you trade according to $\theta$, your wealth at time $T$ is 
\begin{align*}
x_0 + \int_0^T \theta_t \dd S_t.
\end{align*}
Applying It\^o's formula to the process $\bar V(t,S_t)$ for some $\bar V \in C^{1,2}((0,T)\times\RR_+) \cap C([0,T]\times\RR_+)$ and using that the true dynamics of $S$ are given by \eqref{eqn:hedging:SDE}, we obtain
\begin{align}
\label{eqn:hedging:ito}
\bar V(t,S_t)
&= \bar V(0,s_0) + \int_0^t \bar V_s(u,S_u) \dd S_u + \int_0^t \left( \bar V_t(u,S_u) + \frac{1}{2} \sigma_u^2 S_u^2 \bar V_{ss}(u,S_u) \right) \dd u.
\end{align}
Note that if $\sigma_t = \bar\sigma(t,S_t)$, i.e., if your local volatility model is correct, and if $\bar V$ solves the Black--Scholes PDE
\begin{align}
\label{eqn:hedging:PDE}
\begin{split}
\bar V_t(t,s) + \frac{1}{2} \bar\sigma(t,s)^2 s^2 \bar V_{ss}(t,s)
&=0, \quad (t,s) \in (0,T)\times\RR_+,\\
\bar V(T,s)
&= G(s), \quad s \in \RR_+,
\end{split}
\end{align}
then \eqref{eqn:hedging:ito} for $t = T$ reduces to
\begin{align}
\label{eqn:hedging:replication}
G(S_T)
&= \bar V(T,S_T)
= \bar V(0,s_0) + \int_0^T \bar V_s(u,S_u) \dd S_u.
\end{align}
Whence, you can perfectly replicate the option payoff $G(S_T)$ by self-financing trading with initial capital $\bar V(0,s_0)$ and trading strategy $\bar\Delta_t := \bar V_s(t,S_t)$, the so-called \emph{delta hedge}. Given that your local volatility model is correct, $\bar{V}(t,S_t)$ is in turn the unique price for the option that is compatible with the absence of arbitrage. 

What happens if you delta-hedge even though your local volatility model is not correct? If the true dynamics are given by the SDE \eqref{eqn:hedging:SDE} for some volatility process $\sigma$, but the delta hedge $\bar\Delta$ is determined from a local volatility model via the PDE \eqref{eqn:hedging:PDE}, then \eqref{eqn:hedging:ito} can be rewritten as
\begin{align}
\label{eqn:hedging:theoretical value}
\bar V(t,S_t) 
&= \bar V(0,s_0) + \int_0^t \bar\Delta_u \dd S_u + \frac{1}{2}\int_0^t S_u^2 \bar V_{ss}(u,S_u) (\sigma_u^2 - \bar\sigma(u,S_u)^2) \dd u.
\end{align}
For a trading strategy $\theta$, define the \emph{Profit\&Loss (P\&L)} $Y_t$ at time $t$ as the difference between the current wealth in your hedge portfolio and the \emph{theoretical} value of the option \emph{in your model}:
\begin{align}
\label{eqn:hedging:Y definition}
Y_t
&= x_0 + \int_0^t \theta_u \dd S_u - \bar V(t,S_t).
\end{align}
As $\bar V(T,S_T) = G(S_T)$, the terminal value $Y_T$ is also your \emph{actual} P\&L at maturity $T$. Substituting \eqref{eqn:hedging:theoretical value} into \eqref{eqn:hedging:Y definition} gives
\begin{align}
\label{eqn:hedging:Y}
Y_t
&= x_0 - \bar V(0,s_0) + \int_0^t (\theta_u - \bar\Delta_u) \dd S_u + \frac{1}{2}\int_0^t S_u^2 \bar V_{ss}(u,S_u) (\bar\sigma(u,S_u)^2 - \sigma_u^2) \dd u.
\end{align}
The last term in \eqref{eqn:hedging:Y} describes the hedging error you incur if you use the delta hedge $\theta_t = \bar\Delta_t$, starting from initial capital $x_0 = \bar V(0,s_0)$.\footnote{This error incurred from hedging with a misspecified volatility is well known in the literature and in practice. To the best of our knowledge, it appeared first in Lyons \cite[Equation (27)]{Lyons1995} (see also \cite[Equation (6.7)]{ElKarouiJeanblancShreve1998}).}

Let us now formalise that risk-averse investors indeed hedge their exposure to the option. To this end, assume you are certain that the dynamics of the stock price follow your local volatility model, i.e., $\sigma_t = \bar\sigma(t,S_t)$, and that you want to maximise your expected utility from your final P\&L:
\begin{align}
\label{eqn:hedging:utility maximisation}
\EX{U(Y_T)}
&= \EX{U\Big(x_0 + \int_0^T \theta_t \dd S_t - G(S_T)\Big)}
\longrightarrow \max_\theta !
\end{align}
Here, $\theta$ runs through some subset of trading strategies such that $\int_0^\cdot ( \theta_t - \bar V_s(t,S_t) ) \dd S_t$ is a supermartingale (to rule out doubling strategies) and $U$ is a (strictly concave) utility function on $\RR$. As the option can be perfectly replicated, this is a trivial example of a utility maximisation problem with a random endowment. By Jensen's inequality, \eqref{eqn:hedging:replication}, and the supermartingale property of $\int_0^\cdot ( \theta_t - \bar V_s(t,S_t) ) \dd S_t$, we have
\begin{align*}
\EX{U\Big(x_0 + \int_0^T \theta_t \dd S_t - G(S_T)\Big)}
&\leq U\Big(x_0 + \EX{\int_0^T (\theta_t - \bar V_s(t,S_t)) \dd S_t} - \bar V(0,s_0)\Big)\\
&\leq U(x_0 - \bar V(0,s_0)),
\end{align*}
with equality for $\theta_t = \bar V_s(t,S_t) = \bar\Delta_t$. This proves the following lemma:

\begin{lemma}
\label{lem:hedging}
Suppose that $\sigma_t = \bar\sigma(t,S_t)$ and that $\bar V \in C^{1,2}((0,T)\times\RR_+)\cap C([0,T]\times\RR_+)$ solves the PDE~\eqref{eqn:hedging:PDE}. Then the delta hedge $\bar\Delta_t = \bar V_s(t,S_t)$ maximises \eqref{eqn:hedging:utility maximisation} over any set $\cA \ni \bar\Delta$ of strategies $\theta$ such that $\int_0^\cdot ( \theta_t - \bar V_s(t,S_t) ) \dd S_t$ is a supermartingale. Moreover:
\begin{align*}
\sup_{\theta \in \cA} \EX{U\Big(x_0 + \int_0^T \theta_t \dd S_t - G(S_T)\Big)}
&= U(x_0 - \bar V(0,s_0)).
\end{align*}
\end{lemma}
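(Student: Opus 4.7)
The plan is essentially to implement rigorously the computation already sketched in the paragraph preceding the lemma, turning it into a clean verification argument. The main ingredients are already in place: the replication identity (2.6), Jensen's inequality, and the supermartingale hypothesis built into the definition of $\cA$.

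First, since $\sigma_t = \bar\sigma(t,S_t)$ by assumption and $\bar V$ solves the Black--Scholes PDE~\eqref{eqn:hedging:PDE}, I would invoke the identity~\eqref{eqn:hedging:replication} obtained from It\^o's formula applied to $\bar V(t,S_t)$: namely, $G(S_T) = \bar V(0,s_0) + \int_0^T \bar\Delta_u \dd S_u$. Substituting this into the terminal P\&L expression yields, for any $\theta \in \cA$,
\begin{align*}
x_0 + \int_0^T \theta_t \dd S_t - G(S_T)
= (x_0 - \bar V(0,s_0)) + \int_0^T (\theta_t - \bar\Delta_t) \dd S_t.
\end{align*}
This reduces the problem to analysing the expected utility of a constant perturbed by the terminal value of the supermartingale $M^\theta := \int_0^\cdot (\theta_t - \bar\Delta_t) \dd S_t$.

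Next, I would apply Jensen's inequality using strict concavity of $U$ and then use the supermartingale property of $M^\theta$, which gives $\EX{M^\theta_T} \leq M^\theta_0 = 0$. Since $U$ is nondecreasing (as a utility function), this yields
\begin{align*}
\EX{U\bigl((x_0 - \bar V(0,s_0)) + M^\theta_T\bigr)}
\leq U\bigl(x_0 - \bar V(0,s_0) + \EX{M^\theta_T}\bigr)
\leq U(x_0 - \bar V(0,s_0)).
\end{align*}
To make Jensen's inequality applicable I would implicitly assume, as is standard in this context, that the expectations are well-defined in $[-\infty,\infty)$; concretely, one only needs $\EX{U(\cdot)^+} < \infty$, which is part of the tacit integrability requirement built into the phrase ``expected utility'' in~\eqref{eqn:hedging:utility maximisation}.

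Finally, to obtain equality I would simply plug in $\theta = \bar\Delta$: then $M^{\bar\Delta} \equiv 0$, so the terminal P\&L reduces to the deterministic constant $x_0 - \bar V(0,s_0)$ and the expected utility equals $U(x_0 - \bar V(0,s_0))$. Since by hypothesis $\bar\Delta \in \cA$, this attains the upper bound and proves both the optimality of the delta hedge and the value formula. There is no real obstacle here — the only point requiring any care is making sure that Jensen's inequality is legitimate under the implicit integrability conventions, and that both inequalities above degenerate to equalities at $\theta = \bar\Delta$, which is immediate.
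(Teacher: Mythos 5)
Your argument is correct and is essentially the paper's own proof: the replication identity \eqref{eqn:hedging:replication}, Jensen's inequality, the supermartingale property of $\int_0^\cdot(\theta_t-\bar\Delta_t)\dd S_t$ together with monotonicity of $U$, and equality at $\theta=\bar\Delta$. No substantive differences.
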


\begin{remark}
\label{rem:drift}
Let us briefly discuss why we assume that $S$ has zero drift. Suppose
\begin{align*}
\diff S_t
&= S_t (\mu_t \dd t + \bar\sigma(t,S_t) \dd W_t)
\end{align*}
for a suitable drift rate $\mu = (\mu_t)_{t\in[0,T]}$. If $\theta^*$ is an optimal strategy for the pure investment problem
\begin{align}
\label{eqn:rem:drift:investment problem}
\EX{U\Big(x_0 - \bar V(0,s_0) + \int_0^T \theta_t \dd S_t\Big)}
\longrightarrow \max_\theta !,
\end{align}
then it follows (using that the replication \eqref{eqn:hedging:replication} also works with drift) that $\theta^* + \bar\Delta$ is an optimal strategy for the mixed hedging/investment problem \eqref{eqn:hedging:utility maximisation}, irrespective of the drift rate $\mu$. Whence, without model uncertainty, the hedging component $\bar\Delta$ of the optimal strategy does not depend on $\mu$.
Therefore, it is expected that also with (small) uncertainty aversion (about this \emph{complete} reference model), the drift rate only has a small effect on the hedging component.

Assuming that $S$ has zero drift allows us to focus on the hedging rather than the investment component of the optimal strategy. Indeed, with zero drift, there is no incentive to trade the stock other than as a hedging instrument for the option. This is a reasonable approximation at least if the time horizon is not too long so that stock price movements are dominated by Brownian fluctuations.
\end{remark}

In Section~\ref{sec:hedging under uncertainty}, we introduce a modification of the utility maximisation problem \eqref{eqn:hedging:utility maximisation} that takes into account uncertainty about the true dynamics of the stock price. Its formulation requires a setup that can model the flow of observable information and the possibility of different volatility scenarios simultaneously, which we develop in Sections \ref{sec:setup} and \ref{sec:hedging under uncertainty}.

\subsection{Volatility uncertainty setup}
\label{sec:setup}

Fix a time horizon $T > 0$ and constants $s_0 > 0$ and $y_0 \in \RR$. Let
\begin{align*}
\Omega
&= \lbrace \omega = (\omega^S_t, \omega^Y_t)_{t\in[0,T]} \in C([0,T]; \RR^2) : \omega_0 = (s_0, y_0) \rbrace
\end{align*}
be the canonical space of continuous paths in $\RR^2$ starting in $(s_0, y_0)$, endowed with the topology of uniform convergence. Moreover, let $\cF = \cB(\Omega)$ be the Borel $\sigma$-algebra on $\Omega$. We denote by $S = (S_t)_{t \in [0,T]}$ and $Y = (Y_t)_{t\in[0,T]}$ the first and second component of the canonical process, respectively, i.e., $S_t(\omega) = \omega^S_t$ and $Y_t(\omega) = \omega^Y_t$, and by $\FF = (\cF_t)_{t\in[0,T]}$ the (non-augmented, non-right-continuous) filtration generated by $(S,Y)$. Unless otherwise stated, all probabilistic notions requiring a filtration, such as progressive measurability etc., pertain to $\FF$.

\begin{remark}
The measurable space $(\Omega, \cF)$ together with the filtration $\FF$ models the observable variables, the stock price $S$ and the P\&L $Y$, independently of any specific choice of probability measure on $(\Omega, \cF)$. As we shall see, this allows us to use progressively measurable processes as controls. If instead one formulated the hedging problem as a stochastic differential game on a single filtered probability space, one would have to allow Elliott--Kalton strategies (see, e.g.,~\cite{FlemingHernandezHernandez2011}) as controls; these are controls that may depend on the other player's control in a non-anticipative way. In the context of financial applications, it also seems more natural to treat the state processes as publicly observable, but not necessarily the controls. In our setting, the stock price is quoted on the market and you can easily determine your P\&L by looking at your trading account.
\end{remark}

Next, we introduce a large class of probability measures on $(\Omega, \cF)$ which correspond to different choices of strategies $\theta$ and volatility processes $\sigma$. Let $\ol\fS$ denote the set of triplets $(\theta, \sigma, P)$ consisting of progressively measurable processes $\theta = (\theta_t)_{t\in[0,T]}$ and $\sigma = (\sigma_t)_{t\in[0,T]}$ and a probability measure $P$ on $(\Omega, \cF)$ such that $S$ is a (continuous) local $P$-martingale with quadratic variation $\diff\langle S \rangle_t= S_t^2 \sigma_t^2 \dd t$ and $Y$ is a (continuous) $P$-semimartingale with canonical decomposition
\begin{align}
\label{eqn:setup:Y}
Y
&= y_0 + \int_0^\cdot (\theta_t - \bar V_s(t,S_t)) \dd S_t + \int_0^\cdot \frac{1}{2}S_t^2 \bar V_{ss}(t,S_t) (\bar\sigma(t,S_t)^2 - \sigma_t^2) \dd t
\end{align}
under $P$.\footnote{Under each $P$, the stochastic integrals of sufficiently integrable, progressively measurable integrands against It\^o process integrators are constructed as $\FF$-progressively measurable stochastic processes with $P$-a.s.~continuous paths; see the discussion before Lemma 4.3.3 in Stroock and Varadhan \cite{StroockVaradhan1979}.} For later reference, note that this implies that under $P$,
\begin{align}
\label{eqn:setup:covariation}
\begin{split}
\diff\langle S \rangle_t
&= S_t^2 \sigma_t^2 \dd t, \\
\diff\langle Y \rangle_t
&= (\theta_t - \bar V_s(t,S_t))^2 S_t^2 \sigma_t^2 \dd t,\\
\diff\langle S, Y \rangle_t
&= (\theta_t - \bar V_s(t,S_t)) S_t^2 \sigma_t^2 \dd t.
\end{split}
\end{align}
For each triplet $(\theta, \sigma, P)$, the probability measure $P$ corresponds to the distribution of the stock price and P\&L processes when you choose the strategy $\theta$ and ``nature'' chooses the volatility $\sigma$; cf.~the dynamics of $S$ in \eqref{eqn:hedging:SDE} and $Y$ in \eqref{eqn:hedging:Y}.

\begin{remark}
This is a setup for a weak formulation of a stochastic differential game similar to \cite{PhamZhang2014}. Instead of controlling a process directly, the players can control the distribution of the observable processes $S$ and $Y$ through their choice of probability measure. ``Nature's'' choice of volatility determines the stock price volatility and the drift of the P\&L process. Your choice of strategy only affects the volatility of your P\&L.
\end{remark}

\subsection{Hedging under volatility uncertainty}
\label{sec:hedging under uncertainty}

The set $\ol\fS$ defined in Section~\ref{sec:setup} describes a very large set of possible distributions for the stock price process $S$ and your P\&L process $Y$. As some of these scenarios might be implausible (they could, for instance, include doubling strategies), let us fix some subset $\fS\subset\ol\fS$; for our specific choice, see \eqref{eqn:process bounds} below. Next, fix sets $\cA$ and $\cV$ of trading strategies and volatilities that you want to take into account. In principle, we could consider pairs $(\theta,\sigma) \in \cA\times\cV$ for which there is at least one probability measure $P$ such that $(\theta,\sigma,P)\in\fS$. But as your choice of trading strategy should not restrict ``nature's'' choice of volatilities and vice versa, we require that the whole rectangle $\cA\times\cV$ has this property:
\begin{align*}
\cA \times \cV
&\subset \cZ
:= \lbrace (\theta, \sigma) : \fP(\theta, \sigma) \not=\emptyset \rbrace,
\end{align*}
where $\fP(\theta, \sigma)$ denotes the set of probability measures $P$ such that $(\theta,\sigma,P)\in\fS$.

\begin{remark}
Note that $\fP(\theta,\sigma)$ can contain more than one $P$. Indeed, each weak solution of the SDE
\begin{align*}
\diff S_t
&= S_t \sigma_t((S,Y)) \dd W_t,\\
\diff Y_t
&= (\theta_t((S,Y)) - \bar V_s(t,S_t)) \dd S_t + \frac{1}{2}S_t^2 \bar V_{ss}(t,S_t) (\bar\sigma(t,S_t)^2 - \sigma_t((S,Y))^2) \dd t,
\end{align*}
with initial distribution $\delta_{(s_0,y_0)}$ gives rise to a measure in $\fP(\theta,\sigma)$. So $\fP(\theta,\sigma)$ is a singleton if and only if the SDE has a solution which is unique in law.
\end{remark}

A popular approach to incorporate volatility uncertainty into utility maximisation problems is to treat all volatilities in $\cV$ equally and to maximise the expected utility for the worst-case volatility. In the setting of this paper, this would lead to\footnote{Since $\fP(\theta,\sigma)$ may contain more than one measure, we also have to allow ``nature'' to choose a specific measure.}
\begin{align*}
\inf_{\sigma\in\cV} \inf_{P\in\fP(\theta,\sigma)} \EX[P]{U(Y_T)}
\longrightarrow \max_{\theta \in \cA} !
\end{align*}
More generally, one can introduce a penalty functional $\fa:\cV \to L^0_+(\Omega,\cF)$ and consider\footnote{The penalty functional in the sense of the criterion \eqref{eqn:intro:numerical representation} is $\alpha(P) = \EX[P]{\fa(\sigma^P)}$, where $\sigma^P$ is the volatility of returns of $S$ under $P$.}
\begin{align*}
\inf_{\sigma\in\cV} \inf_{P\in\fP(\theta,\sigma)} \EX[P]{U(Y_T)+\fa(\sigma)}
\longrightarrow \max_{\theta \in \cA} !
\end{align*}
This approach allows to weigh volatility scenarios according to their plausibility -- the higher the penalty, the less seriously you take the model.\footnote{Recall from Footnote \ref{ftn:penalty} that $\fa$ penalises the fictitious adversary (``nature'') and not the agent.} We suppose that a local volatility model with volatility function $\bar\sigma$ is your best guess for the true dynamics of $S$. In the following, we consider the penalty functional 
\begin{align*}
\fa(\sigma)
&= \frac{1}{\psi}\int_0^T U'(Y_t)f(t, S_t,Y_t; \sigma_t) \dd t.
\end{align*}
Here, $\psi > 0$ is a parameter and the \emph{penalty function} $f: [0,T] \times \RR\times\RR \times \RR_+ \to \RR_+$ is a sufficiently smooth function such that $\varsigma\mapsto f(t,s,y;\varsigma)$ is strictly convex and has a minimum of $0$ at $\bar\sigma(t,s)$, i.e.,
\begin{align}
\label{eqn:penalty function:minimum conditions}
f(t,s,y;\bar\sigma(t, s))
&= \frac{\partial f}{\partial\varsigma}(t,s,y;\bar\sigma(t,s))
= 0.
\end{align}
In particular, $\fa$ does not penalise the reference local volatility model, i.e., ${\fa((\bar\sigma(t,S_t))_{t\in[0,T]}) = 0}$. A typical example for the penalty function is $f(t,s,y;\varsigma) =\frac{1}{2} \left(\varsigma - \bar\sigma(t, s) \right)^2$; then, deviations of the volatility from the reference model are penalised in a mean-square sense.\footnote{Notably, \cite{AvellanedaFriedmanHolmesSamperi1997} show that penalty functionals of this form can arise as the continuous-time limit of the relative entropy in a discrete-time approximation.} The function $f$ describes the ``shape'' of the uncertainty aversion, while the parameter $\psi$ quantifies its ``magnitude''. Indeed, if $\psi$ is small, then volatility processes alternative to the reference volatility are penalised heavily, i.e., uncertainty aversion is small.

For each $\psi > 0$, $\theta \in \cA$, $\sigma \in \cV$, and $P \in \fP(\theta,\sigma)$, we define the \emph{objective} of our hedging problem by
\begin{align}
\label{eqn:objective}
J^\psi(\sigma, P)
&:= \EX[P]{U(Y_T)+\frac{1}{\psi}\int_0^T U'(Y_t)f(t,S_t,Y_t;\sigma_t) \dd t}
\end{align}
and its \emph{value} by
\begin{align}
\label{eqn:value}
v(\psi)
&= v(\psi; \cA,\cV)
:= \sup_{\theta \in \cA} \inf_{\sigma\in\cV} \inf_{P\in\fP(\theta,\sigma)} J^\psi(\sigma,P).
\end{align}

\begin{remark}
\label{rem:marginal utility}
Note that the term $U'(Y_t)$ in the penalty part of the objective \eqref{eqn:objective} does not restrict the generality of our results. Indeed, the factor $U'(Y_t)$ could effectively be removed if desired by working with a penalty function of the form $f(t,s,y;\varsigma) = \wt f(t,s,y;\varsigma)/U'(y)$ for a suitable function $\wt f$. We choose the above formulation for the following reasons.

First, in the standard expected utility framework, preferences are invariant under affine transformations of the utility function. The term $U'(Y_t)$ ensures that this property is preserved for uncertainty-averse decision makers whose preferences are described by \eqref{eqn:objective}--\eqref{eqn:value}. In particular, optimal strategies and volatilities for these preferences are invariant under affine transformations of the scale on which utility is measured. Second, $U'(Y_t)$ (and not, e.g., $U'(y_0)$\footnote{Using $U'(y_0)$ instead of $U'(Y_t)$ would yield the same expansion for $v(\psi)$ as in Theorem~\ref{thm:second order} and, formally and at the leading-order, the same almost optimal strategies and volatilities. This is because in the asymptotic limit for small uncertainty aversion, the P\&L process converges to a constant.}) is the natural choice for a dynamic formulation of the hedging problem \eqref{eqn:value} in terms of a family of conditional problems parametrised by the initial time $t$, stock price $S_t = s$, and P\&L $Y_t = y$. Third, our results show that if $f$ does not depend on the P\&L variable $y$, then the preferences have approximately ``constant uncertainty aversion'' in the sense that the cash equivalent $\wt w$ does not depend on the P\&L (cf.~Proposition~\ref{prop:Feynman-Kac}). This would not be the case if one omitted the term $U'(Y_t)$ in \eqref{eqn:objective}.\footnote{In the context of robust portfolio choice, Maenhout \cite{Maenhout2004} also observes that some modification of the standard (non wealth-dependent) entropic penalty is reasonable to avoid that the agent's uncertainty aversion wears off as her wealth rises, and tackles this effect by directly modifying the HJBI equation.}
\end{remark}

\begin{remark}
\label{rem:stochastic volatility}
The hedging problem with moderate uncertainty aversion about a \emph{stochastic volatility reference model} can be tackled as follows. Standard stochastic volatility models assume that the spot volatility of the stock price follows an It\^o diffusion driven by a Brownian motion which may be correlated with the stock price. In particular, the spot volatility is assumed to be perfectly observable at any time. Therefore, there is no uncertainty about the spot volatility. However, one can instead introduce uncertainty about the parameters describing the dynamics of the spot volatility: its drift and diffusion coefficients as well as its correlation with the stock price. Using a similar penalty functional as above, but now for three control variables instead of only the spot volatility, one can write down the HJBI equation for the corresponding hedging problem and analyse it by similar methods as in the present study. Of course, the analysis becomes substantially more involved due to the extra state variable and the multidimensional controls for the fictitious adversary.
\end{remark}

\section{Main results}
\label{sec:main results}

Our main result is a second-order expansion of the value $v(\psi)$ of the hedging problem \eqref{eqn:value} for small values of the uncertainty aversion parameter $\psi$. Moreover, we provide strategies $\theta^\psi$ and volatilities $\sigma^\psi$ that are almost optimal in the sense that their performance coincides with the optimal value up to the next-to-leading order $O(\psi^2)$. These expansions depend on the solutions to two linear second-order parabolic PDEs with source terms. Section \ref{sec:ingredients} introduces these PDEs as well as the assumptions underlying our main result. On a first reading, the reader may wish to skip Section \ref{sec:ingredients} and jump directly to the statement of the main result in Section \ref{sec:main result}.

\subsection{Ingredients and assumptions}
\label{sec:ingredients}

In order to present the main ideas in the verification as clearly as possible, we do not strive for minimal technical conditions in the following. Rather, we focus on a set of sufficient conditions and consider only triplets $(\theta, \sigma, P)$ such that the processes $S, Y$, and $\sigma$ are $P$-a.s.~uniformly bounded by a constant independent of $(\theta,\sigma, P)$. So fix constants $K > s_0 \vee s_0^{-1}$, $y_l < y_0$, and $y_u > y_0$ and define $\fS \subset \ol\fS$ as the set of triplets $(\theta, \sigma, P)$ such that
\begin{align}
\label{eqn:process bounds}
\sigma_t(\omega) \in [0,K], \quad
S_t(\omega) \in[K^{-1},K ], \quad
Y_t(\omega) \in (y_l, y_u)
\quad \text{for } \diff t \times P\text{-a.e. } (t, \omega) \in [0,T]\times\Omega.
\end{align}

\begin{remark}
The boundedness assumption on $S$ and $Y$ simplifies the proofs our main results. For instance, the compactness of the range of $(S,Y)$ simplifies many estimates, and gives, together with continuity of the candidate value function and its derivatives, the existence of point-wise saddle points for the min-max problem in the HJBI equation (cf.~Lemmas~\ref{lem:optimal volatility} and \ref{lem:optimal strategy}).

In a follow-up work \cite{HerrmannMuhleKarbe2015} on dynamic hedging with options in a particular setting, only the lower bound on the P\&L process is needed. Whence, we expect that with considerable extra effort, the boundedness of $S$ and the \emph{upper} bound for $Y$ can be relaxed in the present setting. Some \emph{lower} bound on the P\&L process is crucial, however (for instance, to exclude doubling strategies).
\end{remark}

The coefficients of the expansion of $v(\psi)$ are expressed in terms of solutions to two PDEs. Set $\bfD = (0,T)\times (K^{-1},K) \times(y_l, y_u)$. For each $y \in (y_l, y_u)$, we consider the linear second-order parabolic PDEs
\begin{align}
\label{eqn:PDE:first order}
\begin{split}
\wt w_t (t,s,y) + \frac{1}{2}\bar\sigma(t,s)^2 s^2 \wt w_{ss}(t,s,y) + \wt g(t,s,y)
&=0, \quad (t,s) \in (0,T)\times (K^{-1},K),\\
\wt w(T,s,y)
&= 0, \quad  s \in (K^{-1},K),\\
\wt w(t,s, y)
&= 0, \quad  t \in (0,T), s \in \lbrace K^{-1}, K \rbrace,
\end{split}
\end{align}
and
\begin{align}
\label{eqn:PDE:second order}
\begin{split}
\wh w_t(t,s,y) + \frac{1}{2}\bar\sigma(t,s)^2 s^2 \wh w_{ss}(t,s,y)
+ \wh g(t,s,y)
&=0, \quad (t,s) \in (0,T)\times (K^{-1},K),\\
\wh w(T,s,y)
&= 0, \quad  s \in (K^{-1},K),\\
\wh w(t,s, y)
&= 0, \quad  t \in (0,T), s \in \lbrace K^{-1}, K \rbrace,
\end{split}
\end{align}
where the source terms $\wt g, \wh g: \bfD \to \RR$ are given by\footnote{Here, we assume that all relevant partial derivatives exist; precise conditions are given in Assumption~\ref{ass:second order}.}
\begin{align}
\label{eqn:source term:first order}
\wt g(t,s,y) 
&:=  \frac{\left(\bar\sigma(t,s) s^2 \bar V_{ss}(t,s)\right)^2}{2 f''(t,s,y;\bar\sigma(t,s))},\\
\label{eqn:source term:second order}
\begin{split}
\wh g(t,s,y)
&:=  \frac{1}{6} \wt\sigma(t,s,y)^3 f^{(3)}(t,s,y;\bar\sigma(t,s))
-\frac{1}{2}\wt\sigma(t,s,y)^2 s^2 \bar V_{ss}(t,s)\\
&\qquad+ \wt\sigma(t,s,y) \bar\sigma(t,s) s^2 \left(\bar V_{ss}(t,s) \wt w_y(t,s,y)  -  \wt w_{ss}(t,s,y) \right)\\
&\qquad-\frac{U''(y)}{U'(y)} \left( \frac{1}{2} \bar\sigma(t,s)^2 s^2  \wt\theta(t,s,y)^2  -\wt\sigma(t,s,y)^2 f''(t,s,y;\bar\sigma(t,s)) \wt w(t,s,y) \right),
\end{split}
\end{align}
with
\begin{align}
\label{eqn:theta tilde}
\wt\theta(t,s,y)
&:= \wt w_s(t,s,y) + \frac{U'(y)}{U''(y)}\wt w_{sy}(t,s,y),\\
\label{eqn:sigma tilde}
\wt\sigma(t,s,y)
&:= \frac{\bar\sigma(t,s) s^2 \bar V_{ss}(t,s)}{f''(t,s,y;\bar\sigma(t,s))}.
\end{align}

We prove our main result under the following assumptions.
\begin{assumption}~
\label{ass:second order}
\begin{enumerate}
\item \emph{PDEs:} There are $\wt w, \wh w \in C^{1,2,2}(\bfD) \cap C(\ol\bfD)$ such that  for each $y \in (y_l, y_u)$, $\wt w(\cdot,\cdot,y)$ and $\wh w(\cdot,\cdot, y)$ are classical solutions to the PDEs \eqref{eqn:PDE:first order}--\eqref{eqn:PDE:second order} and
\begin{align}
\label{eqn:ass:pde derivatives}
\vert w_t \vert, \vert w_s \vert, \vert w_y \vert, \vert w_{ss} \vert, \vert w_{sy} \vert, \vert w_{yy} \vert \leq K \text{ on } \bfD, \quad w \in \lbrace \wt w, \wh w \rbrace.
\end{align}
\item \emph{Reference local volatility function:} $\bar\sigma:[0,T]\times[K^{-1},K]\to[0,K]$ is Borel-measurable, there is $\varepsilon > 0$ such that
\begin{align}
\label{eqn:ass:reference volatility}
\varepsilon \leq \bar\sigma(t,s) \leq K-\varepsilon \text{ on } [0,T]\times(K^{-1},K),
\;\;\text{and}\;\;
\bar\sigma(t, K) = \bar\sigma(t,K^{-1}) = 0 \text{ for } t\in [0,T].
\end{align}
\item \emph{Reference value:} $\bar V:[0,T]\times[K^{-1},K]\to\RR$ is Borel-measurable, $\bar V(t,\cdot) \in C^2((K^{-1},K)) \cap C([K^{-1},K])$ for all $t\in(0,T)$, and
\begin{align}
\label{eqn:ass:reference price}
\left\vert s^2 \bar V_{ss}(t,s)\right\vert
&\leq K \text{ for } (t,s) \in (0,T)\times(K^{-1},K).
\end{align}
\item \emph{Penalty function:} $f$ is $C^4$ in $\varsigma$ and the partial derivatives $f^{(k)}:=\frac{\partial^k}{\partial \varsigma^k}f$, $k=2,3,4$, satisfy
\begin{align}
\label{eqn:ass:penalty function}
\frac{1}{K} \leq f''(t,s,y;\varsigma)
&\leq K, \quad
\vert f^{(3)}(t,s,y;\varsigma)\vert , \vert f^{(4)}(t,s,y;\varsigma) \vert
\leq K \quad \text{on } \bfD\times[0,K].
\end{align}
\item \emph{Utility function:} $U:\RR \to \RR$ is $C^3$ with $U' > 0$ and $U'' < 0$ everywhere.
\end{enumerate}
\end{assumption}

Formally plugging the assumption $\bar\sigma(\cdot,K) = \bar\sigma(\cdot,K^{-1}) = 0$ into \eqref{eqn:source term:first order}--\eqref{eqn:source term:second order} and \eqref{eqn:sigma tilde} motivates to extend the definitions of $\wt\sigma, \wt g$, and $\wh g$ by setting
\begin{align}
\label{eqn:source term:boundary}
\wt\sigma(t,s,y)
&= \wt g(t,s,y) = \wh g(t,s,y) = 0, \quad t \in (0,T), s \in \lbrace K^{-1},K\rbrace, y \in (y_l, y_u).
\end{align}

\begin{remark}
\label{rem:assumptions}
(ii)--(iii) are assumptions on the reference volatility and the reference value of the option, while (iv)--(v) are regularity conditions for the objects describing your risk and uncertainty aversion. In contrast, (i) is an assumption on objects derived from these primitives through PDEs. Therefore, let us indicate here what kind of regularity assumptions are sufficient for (i) to hold. We focus on the PDE \eqref{eqn:PDE:first order} for $\wt w$; the PDE for $\wh w$ can be treated analogously. We first fix $y$ and note that the diffusion coefficient $\bar\sigma(t,s)^2 s^2$ is bounded away from zero on $(0,T)\times(K^{-1},K)$ by \eqref{eqn:ass:reference volatility}. Now, a classical existence and regularity result (see Friedman \cite{Friedman1964}, Theorem 7 in Section 3.3) guarantees the existence of a classical solution $\wt w(\cdot,\cdot,y) \in C^{1,2}((0,T)\times(K^{-1},K))$ with bounded and H\"older-continuous (in $t$ and $s$) partial derivatives $\wt w_t, \wt w_s, \wt w_{ss}$ provided that the diffusion coefficient and the source term are regular enough,\footnote{H\"older continuity uniformly on $(0,T)\times(K^{-1},K)$ suffices for this step.} and that the source term is compatible with the zero boundary condition in the sense that $\wt g(t,s,y) = 0$ for $s \in \lbrace K^{-1},K\rbrace$. Next, one can show that $\wt w$ has the Feynman--Kac representation (see also Proposition~\ref{prop:Feynman-Kac} below)
\begin{align}
\label{eqn:rem:assumptions:Feynman-Kac}
\wt w(t,s,y)
&= \EX[t,s]{\int_t^T \wt g(u,S_u,y) \dd u},
\end{align}
where the expectation is computed under a measure such that $S$ has the dynamics $\diff S_u = S_u \bar\sigma(u,S_u) \dd W_u$ and starts in $S_t = s$. Now, if $\wt g$ is $C^2$ in $y$ with bounded partial derivatives, one can infer from \eqref{eqn:rem:assumptions:Feynman-Kac} that the partial derivatives $\wt w_y, \wt w_{yy}$ exist and are bounded on $\bfD$. Finally, to obtain the existence and bounds for the cross partial derivative $\wt w_{sy}$, we can differentiate the PDE \eqref{eqn:PDE:first order} with respect to $y$ to obtain a PDE for $\wt w_y$. Then imposing even further regularity and compatibility conditions, the classical result cited above yields existence and boundedness of $\wt w_{sy}$.
\end{remark}

The uniform boundedness assumptions \eqref{eqn:process bounds} might appear restrictive. However, as the bounds can be chosen arbitrarily large or small, this restriction is of little practical relevance. A simple model satisfying the above assumptions is a geometric Brownian motion which is stopped as soon as it hits $K^{-1}$ or $K$ (for some large $K$). The corresponding local volatility function $\bar\sigma(t,s)$ would be a suitable constant on $[0,T]\times(K^{-1},K)$ and equal to zero on $[0,T]\times\lbrace K^{-1},K\rbrace$, in accordance with \eqref{eqn:ass:reference volatility}. Finally, the option payoff has to be regular enough (e.g., a ``smooth put'' shown in Figure~\ref{fig:smooth put:value} and defined in Footnote~\ref{ftn:smooth put}), so that (iii) holds and the arguments outlined in Remark \ref{rem:assumptions} go through.

\subsection{Main result}
\label{sec:main result}

We are now in a position to state our main result. A possible choice for the sets of trading strategies $\cA$ and volatility scenarios $\cV$ is provided in Theorem~\ref{thm:existence} below.

\begin{theorem}
\label{thm:second order}
Suppose that Assumption~\ref{ass:second order} holds and set $\tau := \inf\lbrace t \in [0,T]: \vert Y_t-y_0 \vert \geq 1 \rbrace \wedge T$. For each $\psi > 0$, define the strategy $\theta^\psi = (\theta^\psi_t)_{t\in[0,T]}$ and the volatility $\sigma^\psi = (\sigma^\psi_t)_{t\in[0,T]}$ by\footnote{Strictly speaking, $\theta^\psi$ and $\sigma^\psi$ have to be defined for \emph{every} $\omega \in \Omega$, even those for which $S$ or $Y$ exceeds the bounds \eqref{eqn:process bounds}. Outside these bounds, however, the functions $\bar V_s,\wt\theta,\bar\sigma,\wt\sigma$ are not defined. As we only consider measures $P$ such that \eqref{eqn:process bounds} holds, we do not make explicit the corresponding straightforward adjustments, which would only hamper readability.}
\begin{align*}
\theta^\psi_t
&= \bar\Delta_t + \wt\theta(t,S_t,Y_t)\1_{\lbrace t < \tau \rbrace}\psi,\\
\sigma^\psi_t
&= \bar\sigma(t,S_t) + \wt\sigma(t,S_t,Y_t) \psi,
\end{align*}
where $\wt\theta$ and $\wt\sigma$ are given in \eqref{eqn:theta tilde}--\eqref{eqn:sigma tilde}. Moreover, let $\cA$ and $\cV$ be sets of progressively measurable processes such that for all $\psi>0$ small enough, $(\theta^\psi, \sigma^\psi) \in \cA \times \cV \subset \cZ$. Then, as $\psi \downarrow 0$:
\begin{align*}
\sup_{\theta \in \cA} \inf_{\sigma \in \cV} \inf_{P \in \fP(\theta,\sigma)} J^\psi(\sigma, P)
&= U(y_0) - U'(y_0) \wt w(0,s_0,y_0)\psi + U'(y_0) \wh w(0,s_0,y_0)\psi^2 + o(\psi^2)\\
&= \inf_{P\in\fP(\theta^\psi,\sigma^\psi)} J^\psi(\sigma^\psi, P) + o(\psi^2),
\end{align*}
where $\wt w$ and $\wh w$ are the solutions to the PDEs \eqref{eqn:PDE:first order}--\eqref{eqn:PDE:second order}. In particular, $\theta^\psi$ is an optimal strategy at the next-to-leading order $O(\psi^2)$ among all strategies in $\cA$, and $\sigma^\psi$ is a ``worst-case'' volatility at the next-to-leading order $O(\psi^2)$ among all volatilities in $\cV$.
\end{theorem}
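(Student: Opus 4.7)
The plan is a classical verification argument, adapted to the asymptotic regime, built around the ansatz
\begin{equation*}
V^\psi(t,s,y) \;=\; U(y) - U'(y)\wt w(t,s,y)\psi + U'(y)\wh w(t,s,y)\psi^2,
\end{equation*}
which satisfies $V^\psi(T,\cdot,y)=U(y)$ since $\wt w$ and $\wh w$ vanish at $T$. For any $(\theta,\sigma,P)\in\fS$, applying It\^o's formula to $V^\psi(\,\cdot\,,S,Y)$ in view of \eqref{eqn:setup:Y}--\eqref{eqn:setup:covariation} and taking $P$-expectations (the local-martingale part has zero mean thanks to \eqref{eqn:process bounds} and Assumption~\ref{ass:second order}) yields the identity
\begin{equation*}
J^\psi(\sigma,P) \;=\; V^\psi(0,s_0,y_0) + \EX[P]{\int_0^T H^\psi(t,S_t,Y_t;\theta_t,\sigma_t)\dd t},
\end{equation*}
where $H^\psi(t,s,y;\theta,\sigma):=\cL^{\sigma,\theta}V^\psi(t,s,y)+\tfrac{1}{\psi}U'(y)f(t,s,y;\sigma)$ and $\cL^{\sigma,\theta}$ is the infinitesimal generator of $(S,Y)$ read off from \eqref{eqn:setup:Y}. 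Everything then reduces to a pointwise analysis of $H^\psi$.

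I would next expand $H^\psi$ in $\psi$. Parametrising $\sigma=\bar\sigma(t,s)+\varsigma\psi$ and $\theta=\bar V_s(t,s)+\vartheta\psi$ and Taylor-expanding $f(\,\cdot\,;\sigma)$ around $\bar\sigma$ via~\eqref{eqn:penalty function:minimum conditions} turns $H^\psi$ into a polynomial in $\psi$ with coefficients smooth in $(\varsigma,\vartheta)$. Demanding that $\sup_\vartheta\inf_\varsigma$ of this expansion vanish identifies, at order $\psi^0$, the PDE~\eqref{eqn:PDE:first order} for $\wt w$ together with \eqref{eqn:sigma tilde} as the optimal $\varsigma=\wt\sigma$; at order $\psi^1$, the PDE~\eqref{eqn:PDE:second order} for $\wh w$ together with \eqref{eqn:theta tilde} as the optimal $\vartheta=\wt\theta$. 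Strict convexity of $f$ in $\sigma$, concavity in $\theta$ arising from $U''<0$, and the uniform bounds in Assumption~\ref{ass:second order} ensure that this saddle structure persists and that the residuals satisfy
\begin{equation*}
\inf_{\sigma\in[0,K]} H^\psi(t,s,y;\theta^\psi(t,s,y),\sigma) \;\geq\; -C\psi^3, \qquad \sup_{\theta\in\RR} H^\psi(t,s,y;\theta,\sigma^*(\theta;\psi)) \;\leq\; C\psi^3
\end{equation*}
uniformly on $\bfD$, where $\sigma^*(\theta;\psi)$ is a measurable pointwise minimiser of $H^\psi(\,\cdot\,;\theta,\cdot)$. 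The penalty term forces $\sigma^*$ into an $O(\psi)$-neighbourhood of $\bar\sigma$, which justifies the local Taylor expansion \emph{a posteriori}.

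The two directions then follow routinely. For the upper bound: given any $\theta\in\cA$, pick $\sigma^*(\theta;\psi)\in\cV$ and any $P\in\fP(\theta,\sigma^*)$; the identity and the second residual inequality yield $\inf_{\sigma\in\cV,P}J^\psi(\sigma,P)\leq V^\psi(0,s_0,y_0)+CT\psi^3$, and taking $\sup_{\theta\in\cA}$ gives one direction. For the lower bound, evaluate the identity at $\theta=\theta^\psi$: by the first residual inequality, $J^\psi(\sigma,P)\geq V^\psi(0,s_0,y_0)-CT\psi^3$ for every $\sigma\in\cV$ and $P\in\fP(\theta^\psi,\sigma)$. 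Combining the two yields both $v(\psi)=V^\psi(0,s_0,y_0)+o(\psi^2)$ and the asymptotic optimality of $(\theta^\psi,\sigma^\psi)$. The stopping time $\tau$ in the definition of $\theta^\psi$ is immaterial in the limit: under $\fS$ the drift and diffusion of $Y$ are of order $\psi$ whenever $\sigma$ is close to $\bar\sigma$ (which the penalty enforces), so $P[\tau<T]$ decays faster than any power of $\psi$ by standard exponential estimates.

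The main obstacle will be the uniform $O(\psi^3)$-control of the Hamiltonian residual on $\bfD$, together with a rigorous confinement of near-optimal adversary moves to an $O(\psi)$-tube around $(\bar V_s,\bar\sigma)$. Higher-order terms in the expansion of $H^\psi$ involve $U''',\,f^{(3)},\,f^{(4)}$ and the cross-derivatives $\wt w_{sy},\wt w_{yy},\wh w_s,\wh w_{ss}$, and the bounds in Assumption~\ref{ass:second order}, in particular \eqref{eqn:ass:pde derivatives}--\eqref{eqn:ass:penalty function}, are calibrated precisely to dominate all of these uniformly. The lateral boundary convention \eqref{eqn:source term:boundary} is needed to ensure the HJBI identity remains valid up to $\partial\bfD$.
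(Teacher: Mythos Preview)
Your overall verification strategy matches the paper's, but there are two genuine gaps.

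\textbf{Upper bound.} You propose, for each $\theta\in\cA$, to plug in the pointwise minimiser $\sigma^*(\theta;\psi)$. But the theorem only assumes $\sigma^\psi\in\cV$; there is no reason the $\theta$-dependent response $\sigma^*(\theta;\psi)$ lies in the abstract set $\cV$. The paper avoids this by using the \emph{fixed} candidate $\sigma^\psi$ and exploiting a structural feature of the Hamiltonian: because $H^\psi$ is quadratic in $\vartheta-\bar V_s$ with coefficients that depend on $\varsigma$ only through the common factor $\varsigma^2 s^2$, the maximiser over $\vartheta$ is independent of $\varsigma$ (Lemma~\ref{lem:optimal strategy}). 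Hence $\sup_{\vartheta\in\RR} H^\psi(\cdot;\vartheta,\sigma^\psi)\le C\psi^3$ holds for the single volatility $\sigma^\psi$, which is admissible by hypothesis.

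\textbf{Lower bound and the stopping time $\tau$.} Your claim that $P[\tau<T]$ decays faster than any power of $\psi$ is incorrect, and the role of $\tau$ is not ``immaterial''. Under $\theta^\psi$ the diffusion coefficient of $Y$ is indeed $O(\psi)$ pointwise, but the drift $\tfrac12 S^2\bar V_{ss}(\bar\sigma^2-\sigma^2)$ is of order $|\sigma-\bar\sigma|$, and the adversary's $\sigma$ is \emph{not} a priori confined to an $O(\psi)$-tube around $\bar\sigma$: the penalty makes large deviations costly in expectation, not impossible pathwise. The paper (Lemma~\ref{lem:probability estimate}) first restricts to $\sigma$ that are near-optimal for the inner infimum (harmless for a lower bound), then uses the penalty to obtain $E^P\bigl[\int_0^\rho(\sigma_t-\bar\sigma)^2\,dt\bigr]\le C\psi$, whence by a standard It\^o estimate only $P[\tau<T]\le C\psi$. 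This $O(\psi)$ bound is exactly what is needed, because after $\tau$ the strategy reverts to $\bar\Delta$ and the Hamiltonian residual degrades from $O(\psi^3)$ to $O(\psi^2)$ (compare \eqref{eqn:lem:HJBI:optimal strategy:second order} with \eqref{eqn:lem:HJBI:optimal strategy:first order}); splitting the time integral at $\tau\wedge\rho$ then gives
\begin{equation*}
J^\psi(\sigma,P)-w^\psi_0 \;\ge\; -TC\psi^3 - TC\psi^2\,P[\tau<T] \;\ge\; -C'\psi^3.
\end{equation*}
Your argument conflates the penalty's $L^2(\diff t\times P)$ control with a pathwise bound and omits the post-$\tau$ degradation of the residual.
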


\begin{remark}
An inspection of the proof of Theorem \ref{thm:second order} shows that the strategy adjustment $\wt\theta$ only affects the performance of the hedge at the next-to-leading order $O(\psi^2)$. Put differently, the delta hedge $\bar\Delta$ is already optimal at the leading order $O(\psi)$. Indeed, with controls of the form $\bar\Delta_t + \wt\theta_t\psi$ and $\bar\sigma(t,S_t) + \wt \sigma_t \psi$ both the drift and the diffusion coefficient of your P\&L are of order $O(\psi)$ (cf.~the dynamics of Y in \eqref{eqn:setup:Y}). A formal Taylor expansion therefore suggests that the drift dominates the leading order $O(\psi)$ of the value expansion. As your choice of trading strategy only affects the diffusion coefficient while the drift coefficient is determined by the actual volatility, the strategy adjustment therefore only becomes visible at the order $O(\psi^2)$.
\end{remark}

The lengthy proof of Theorem \ref{thm:second order} is postponed to Section~\ref{sec:proof of main result}. Here, we first discuss the asymptotic formulas for the value $v(\psi)$ and the corresponding hedging strategy. The first-order terms in the value expansion and the hedge are both determined by the function $\wt w$. Its Feynman--Kac representation in turn allows to identify the main drivers of the hedging problem with small uncertainty aversion:

\begin{proposition}[Feynman--Kac representation]
\label{prop:Feynman-Kac}
Suppose that Assumption \ref{ass:second order} holds. Moreover, assume that for each $t\in[0,T]$, $\bar\sigma(t,\cdot)$ is uniformly continuous on $(K^{-1},K)$. Then for each $t \in [0,T]$, $s \in [K^{-1},K]$, and $y \in (y_l, y_u)$,
\begin{align}
\label{eqn:Feynman-Kac}
\wt w(t,s,y)
&= \EX[t,s]{\int_t^T \frac{\left(\bar\sigma(u,S_u)S_u^2 \bar V_{ss}(u,S_u)\right)^2}{2f''(u,S_u,y;\bar\sigma(u,S_u))} \dd u}
\end{align}
where the expectation is computed under a measure such that $S$ has the reference dynamics $\diff S_u = S_u \bar\sigma(u,S_u)\dd W_u$ and starts in $S_t = s$.
\end{proposition}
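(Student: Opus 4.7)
The plan is to apply the classical Feynman--Kac verification argument to the function $\wt w$ under the reference dynamics. First, I would construct a weak solution to the reference SDE $\diff S_u = S_u \bar\sigma(u,S_u)\dd W_u$ with $S_t = s$ on the state space $[K^{-1},K]$. Under uniform continuity of $\bar\sigma(t,\cdot)$ on $(K^{-1},K)$ and the boundary values $\bar\sigma(t,K^{-1}) = \bar\sigma(t,K) = 0$ from \eqref{eqn:ass:reference volatility}, the diffusion coefficient $s\bar\sigma(u,s)$ extends continuously and is bounded on $[0,T]\times[K^{-1},K]$. Classical weak existence results (e.g.~Stroock--Varadhan~\cite{StroockVaradhan1979}) then provide a probability measure $P^{t,s}$ on $(\Omega,\cF)$ under which $S$ is a continuous martingale with the prescribed quadratic variation. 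Because $\bar\sigma$ vanishes on the lateral boundary, $S$ is absorbed upon hitting $\{K^{-1},K\}$, and in particular stays in $[K^{-1},K]$ for all $u\in[t,T]$.

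Next, I would apply It\^o's formula to $\wt w(\cdot,S_\cdot,y)$ with $y$ held fixed. Since $\wt w\in C^{1,2,2}(\bfD)\cap C(\ol\bfD)$ solves the PDE \eqref{eqn:PDE:first order} in $(0,T)\times(K^{-1},K)$, and since this PDE eliminates the drift up to the source term $-\wt g$, a localising stopping time $\tau_n \uparrow \tau := \inf\{u\geq t: S_u\in\{K^{-1},K\}\}\wedge T$ that stays strictly inside $\bfD$ gives
\begin{align*}
\wt w(\tau_n,S_{\tau_n},y) - \wt w(t,s,y)
&= -\int_t^{\tau_n}\wt g(u,S_u,y)\dd u + \int_t^{\tau_n}\wt w_s(u,S_u,y) S_u\bar\sigma(u,S_u)\dd W_u.
\end{align*}
The integrand $\wt w_s\, S\bar\sigma$ is bounded (by \eqref{eqn:ass:pde derivatives} and \eqref{eqn:ass:reference volatility}), so the stochastic integral is a true martingale and taking $P^{t,s}$-expectations yields
\begin{align*}
\wt w(t,s,y)
&= \EX[t,s]{\wt w(\tau_n,S_{\tau_n},y)} + \EX[t,s]{\int_t^{\tau_n}\wt g(u,S_u,y)\dd u}.
\end{align*}

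Finally, I would pass to the limit $n\to\infty$. The source $\wt g$ is bounded by \eqref{eqn:ass:penalty function} and \eqref{eqn:ass:reference price}, so dominated convergence handles the integral term. For the boundary term, continuity of $\wt w$ on $\ol\bfD$ together with $\wt w(T,\cdot,y)\equiv 0$ and $\wt w(\cdot,K^{\pm 1},y)\equiv 0$ yields $\wt w(\tau,S_\tau,y) = 0$ on the event $\{\tau = T\}\cup\{S_\tau\in\{K^{-1},K\}\}$, which is the full probability space. Moreover, on $\{\tau < T\}$ the process $S$ is absorbed at the boundary for $u\in[\tau,T]$, so $\bar\sigma(u,S_u) = 0$ and thus $\wt g(u,S_u,y) = 0$ by the extension convention \eqref{eqn:source term:boundary}; this allows us to replace the integration endpoint $\tau$ by $T$, giving the desired representation \eqref{eqn:Feynman-Kac}.

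The only delicate step is the existence of the weak solution and its absorption at the boundary; this is exactly what the additional uniform continuity hypothesis on $\bar\sigma(t,\cdot)$ buys, and after that, everything reduces to a routine bounded-coefficients Feynman--Kac argument enabled by the a priori bounds of Assumption~\ref{ass:second order}.
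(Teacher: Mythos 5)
Your It\^o/verification part (apply It\^o's formula to $\wt w(\cdot,S_\cdot,y)$ up to the exit time, use the PDE \eqref{eqn:PDE:first order} to replace the drift by $\wt g$, note that $\wt w_s\, S\bar\sigma$ is bounded so the stochastic integral is a true martingale, invoke the terminal/boundary conditions of $\wt w$ and the convention \eqref{eqn:source term:boundary} together with constancy of $S$ after the exit time to extend the time integral to $T$) is exactly the paper's argument and is fine. The gap is in your construction of the reference dynamics. You claim that uniform continuity of $\bar\sigma(t,\cdot)$ on $(K^{-1},K)$ together with the boundary values $\bar\sigma(t,K^{\pm1})=0$ makes $s\bar\sigma(u,s)$ extend \emph{continuously} to $[K^{-1},K]$. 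This is false under Assumption~\ref{ass:second order}(ii): there $\bar\sigma\geq\varepsilon>0$ on $(K^{-1},K)$, so the continuous extension of $\bar\sigma(t,\cdot)\vert_{(K^{-1},K)}$ to the closed interval is bounded below by $\varepsilon$ at the endpoints and cannot agree with the prescribed zero boundary values; the actual coefficient is discontinuous (and degenerate) at $s=K^{\pm1}$. Consequently the classical weak-existence theorems you cite (which require continuous bounded coefficients, or non-degeneracy for merely measurable ones) do not apply directly to the SDE on $[K^{-1},K]$ with this coefficient, and your assertion that the solution is ``absorbed upon hitting $\{K^{-1},K\}$'' is not justified as stated — it is not a consequence of a construction you have actually carried out.

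The paper closes this gap by a small but essential detour: extend $\bar\sigma(u,\cdot)\vert_{(K^{-1},K)}$ to a bounded continuous function $\check{\bar\sigma}$ on all of $\RR$ (this is what the uniform continuity hypothesis buys), obtain a weak solution $\check S$ of $\diff\check S_u=\check S_u\check{\bar\sigma}(u,\check S_u)\dd\check W_u$, $\check S_t=s$, from the standard results, and then \emph{define} $S_u:=\check S_{\rho\wedge u}$ with $\rho$ the first exit time of $(K^{-1},K)$. Since the true $\bar\sigma$ vanishes at $K^{\pm1}$, the stopped process satisfies $\diff S_u=S_u\bar\sigma(u,S_u)\dd\check W_u$, and absorption at the boundary holds by construction rather than as a claimed property of a solution to the degenerate, discontinuous SDE. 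If you replace your first paragraph by this extend-then-stop construction, the remainder of your proof goes through verbatim.
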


We postpone the proof of Proposition~\ref{prop:Feynman-Kac} to Section~\ref{sec:proof of Feynman-Kac}. From the Feynman--Kac representation \eqref{eqn:Feynman-Kac} we see that if $f''$ is constant, $\wt w$ measures the expected volatility-weighted cash gamma accumulated over the remaining lifetime of the option. In particular, $\wt w$ is high whenever it is likely that the stock price will spend a significant amount of time in zones where the option's cash gamma is high. The volatility weighting means that the cash gamma accrues in ``business time'' $\bar{\sigma}(t,S_t)^2 \dd t$, i.e., its effect is amplified in turbulent markets.

\paragraph{Discussion of the almost optimal strategy.}
The almost optimal strategy $\theta^\psi$ from Theorem~\ref{thm:second order} has the form
\begin{align*}
\theta^\psi_t
&= \bar \Delta_t + \left(\wt w_s(t,S_t,Y_t) + \frac{U'(Y_t)}{U''(Y_t)}\wt w_{sy}(t,S_t,Y_t) \right)\1_{\lbrace t < \tau\rbrace}\psi.
\end{align*}
The first summand is simply the delta hedge in the reference model, whereas the second term is a \emph{strategy adjustment} accounting for volatility uncertainty. The strategy adjustment is only active as long as the P\&L process does not deviate too much from its initial value. This is a technical modification that ensures that your P\&L stays within a bounded interval under any volatility scenario.\footnote{Mutatis mutandis, the threshold $1$ in the definition of the stopping time $\tau$ can be replaced by any other constant. The same modification also appears in the asymptotic analysis of models with transaction costs \cite{KallsenLi2013}.} To understand the substantial features of the strategy adjustment, let us first focus on the case $\wt w_{sy} = 0$.\footnote{A sufficient condition for $\wt w_{sy} = 0$ is that $f''(t,s,y;\varsigma)$ does not depend on $y$.} Then the strategy adjustment does not depend on your current P\&L, nor on the shape of your utility function, and is simply $\wt w_s(t,S_t)\psi$. Recall that a high (low) $\wt w$ is associated with a big (small) loss in value due to volatility uncertainty. Suppose that at time $t$, $\wt w_s(t,S_t)$ is positive. Then an increase (decrease) in the stock price makes your position more (less) vulnerable to volatility misspecification. But as the almost optimal strategy holds $\wt w_s(t,S_t)\psi$ more shares than the pure delta hedge, the higher vulnerability to volatility misspecification in the case of a rising stock price is compensated by an extra profit. Conversely, in the case of a falling stock price, the strategy adjustment leads to a loss compared to the pure delta hedge, which is compensated by the fact that the stock price has moved towards a zone of less vulnerability to volatility misspecification. To summarise, the almost optimal strategy hedges against movements of the stock price into zones of high vulnerability to volatility misspecification. Recalling that the magnitude of $\wt w$ is mainly determined by the option's cash gamma, the almost optimal strategy can thus be seen as a \emph{hedge against movements of the stock price into zones of high cash gamma}.

Now, let us consider the case $\wt w_{sy} \neq 0$. Suppose that $f(t,s,y;\varsigma) = g(t,s;\varsigma)/h(y)$ for suitable functions $g$ and $h>0$. The previous paragraph corresponds to a constant $h$. Now, $\wt w$ has the Feynman--Kac representation
\begin{align*}
\wt w(t,s,y)
&= h(y)\EX[t,s]{\int_t^T \frac{\left(\bar\sigma(u,S_u)S_u^2 \bar V_{ss}(u,S_u)\right)^2}{2g''(u,S_u;\bar\sigma(u,S_u))} \dd u}.
\end{align*}
We see that $h(y)$ acts as an amplifier of your uncertainty aversion. A decreasing (increasing) $h$ can be interpreted as \emph{decreasing (increasing) uncertainty aversion} (with respect to your P\&L), similar to the well-known notions of increasing or decreasing (absolute) risk aversion. Suppose that $h$ is strictly decreasing. If $\wt w_s(t,S_t,Y_t) > 0$, then $\wt w_{sy}(t,S_t,Y_t) < 0$ and the strategy adjustment
\begin{align*}
\left(\wt w_s(t,S_t,Y_t) + \frac{U'(Y_t)}{U''(Y_t)}\wt w_{sy}(t,S_t,Y_t) \right)\1_{\lbrace t < \tau\rbrace}\psi
\end{align*}
is larger than in the case of $\wt w_{sy} = 0$. As above, this leads to a profit if the stock rises into zones of high $\wt w$ and to a loss if the stock falls into zones of small $\wt w$. Anticipating a decrease in your P\&L and therefore an increase in your uncertainty aversion, you want to avoid zones of high vulnerability to volatility misspecification even more than if $h$ was a constant. Therefore, your strategy adjustment is even larger than in the case of constant $h$.

\paragraph{Indifference prices.}
An \emph{indifference bid price} (\emph{indifference ask price}) (for an option) is a price at which you are indifferent between keeping your current position and buying (selling) the option for that price. We emphasise that indifference prices typically depend on your wealth, the stock price, and on your current asset allocation. We suppose here that you are currently \emph{flat}, i.e., your current position in options is zero. Then you might demand a premium for adding risk to your portfolio by selling the option short. Throughout, $x_0$ denotes your initial capital (before buying or selling any options) and $\bar V = \bar V(0,s_0)$ is the initial reference value of the option. To ease notation, we write $\wt w = \wt w(0,s_0,x_0)$ and let $v(y_0;\psi)$ denote the value of our hedging problem corresponding to initial P\&L $y_0$.

If you are flat and decide to sell the option for a price $p_a(\psi)$, then your initial P\&L is ${x_0 + p_a(\psi) - \bar V}$. Therefore, the equation determining the indifference price $p_a(\psi)$ reads as follows:
\begin{align*}
U(x_0)
&= v(x_0+p_a(\psi)-\bar V;\psi).
\end{align*}
Using the expansion of $v$ from Theorem~\ref{thm:second order}, straightforward computations yield\footnote{A second-order expansion for the ask price can also be obtained, but does not offer much additional insight.}
\begin{align}
\label{eqn:price:flat ask}
p_a(\psi)
&= \bar V + \wt w \psi + o(\psi).
\end{align}
If you are flat, your ask price hence exceeds the reference value $\bar V$ of the option by a premium $\wt w \psi + o(\psi)$. We therefore call $\wt w$ the \emph{cash equivalent of small uncertainty aversion}; at the leading order, it is the (normalised) premium (or discount if you are the buyer, see the next paragraph) over the reference value that you demand for assuming a position that is vulnerable to volatility misspecification.

If you are flat and decide to buy the option for a price $p_b(\psi)$, your initial P\&L is $x_0 - p_b(\psi) + \bar V$. Buying the option is the same as selling the negative of the option. Moreover, the cash equivalents of the option and its negative coincide because the cash gamma enters the source term \eqref{eqn:source term:first order} of the PDE \eqref{eqn:PDE:first order} for $\wt w$ as a square.\footnote{This symmetry generally breaks down for the second-order term $\wh w$; cf.~the corresponding source term \eqref{eqn:source term:second order}. Hence, for a second-order expansion of the indifference bid price, one has to use the $\wh w$ corresponding to the negative of the option.} Thus, $p_b(\psi)$ is determined by
\begin{align*}
U(x_0)
&= v(x_0 - p_b(\psi) + \bar V;\psi),
\end{align*}
which yields
\begin{align}
\label{eqn:price:flat bid}
p_b(\psi)
&= \bar V - \wt w \psi + o(\psi).
\end{align}
In analogy to the ask price in \eqref{eqn:price:flat ask}, you demand a discount $\wt w\psi + o(\psi)$ on the reference value $\bar V$ to buy the option. Comparing \eqref{eqn:price:flat bid} with \eqref{eqn:price:flat ask}, we see that starting from a flat position, your bid-ask spread due to uncertainty aversion is $2\wt w \psi + o(\psi)$.

\begin{remark}
\label{rem:BSB}
The above results can be formally linked to the uncertain volatility model as follows. Consider the Black--Scholes--Barenblatt equation
\begin{align}
\label{eqn:rem:BSB}
\begin{split}
V^\psi_t(t,s) + \sup_{\varsigma \in [\lambda(t,s),\Lambda(t,s)]} \frac{1}{2}(\bar\sigma(t,s) + \psi \varsigma)^2 s^2V^\psi_{ss}
&= 0,\\
V^\psi(T,s)
&= G(s),
\end{split}
\end{align}
where $\psi > 0$ is a (small) parameter and $\lambda \leq 0 \leq \Lambda$ are suitable functions. This equation corresponds to the problem of finding the smallest initial capital that allows to superreplicate the option $G(S_T)$ for any volatility process evolving in the random interval
\begin{align*}
[\bar\sigma(t,S_t) + \psi \lambda(t,S_t), \bar\sigma(t,S_t) + \psi \Lambda(t,S_t)];
\end{align*}
see \cite{Lyons1995}.\footnote{This specification is a special case of the general ``random $G$-expectation'' \cite{Nutz2013}.} As Lyons \cite[Section 5]{Lyons1995} points out, the solution $V^\psi$ to the Black--Scholes--Barenblatt PDE \eqref{eqn:rem:BSB} has the formal asymptotic expansion
\begin{align}
\label{eqn:rem:BSB:expansion}
V^\psi(t,s)
&= \bar V(t,s) + \wt V(t,s)\psi + o(\psi) \quad (\psi \downarrow 0),
\end{align}
where $\bar V$ is the solution to the Black--Scholes PDE \eqref{eqn:hedging:PDE} and $\wt V$ solves the following linear parabolic PDE with source term:\footnote{\cite{Lyons1995} imposes bounds on the instantaneous variance of prices instead of the volatility of returns. Hence, the PDE for $\wt V$ there looks slightly different. The PDE presented here is a slight generalisation of the one derived in \cite{FouqueRen2014}.}
\begin{align}
\label{eqn:rem:BSB:PDE:first order}
\begin{split}
\wt V_t(t,s) + \frac{1}{2}\bar\sigma(t,s)^2 s^2 \wt V_{ss}(t,s) + \sup_{\varsigma \in [\lambda(t,s),\Lambda(t,s)]} \varsigma \bar\sigma(t,s) s^2 \bar V_{ss}(t,s)
&= 0,\\
\wt V(T,s)
&= 0.
\end{split}
\end{align}
The expansion \eqref{eqn:rem:BSB:expansion} has been proved by Fouque and Ren \cite{FouqueRen2014} for the special case where $\bar\sigma$ is a constant, $\Lambda \equiv 1$, and $\lambda \equiv 0$.

In view of \eqref{eqn:rem:BSB:expansion}, $\wt V$ can be interpreted as the (normalised) leading-order premium over the reference value that an agent with infinite risk aversion but small volatility band demands as a compensation for assuming a position that is vulnerable to volatility misspecification. This is analogous to the interpretation of $\wt w$ as the cash equivalent of small uncertainty aversion. More specifically, if $f''$ does not depend on $y$ and if we choose
\begin{align*}
\Lambda(t,s) 
&= -\lambda(t,s)
= \frac{1}{2}\wt\sigma(t,s)\sgn(\bar V_{ss}(t,s))
= \frac{\bar\sigma(t,s) s^2 \left\vert\bar V_{ss}(t,s)\right\vert}{2 f''(t,s;\bar\sigma(t,s))},
\end{align*}
then the PDE \eqref{eqn:rem:BSB:PDE:first order} for $\wt V$ reduces to the PDE \eqref{eqn:PDE:first order} for $\wt w$. Moreover, in this case, our almost optimal strategy $\theta^\psi_t = \bar V_s(t,S_t) + \wt w_s(t,s)\psi$ coincides with the delta hedge corresponding to the expansion \eqref{eqn:rem:BSB:expansion}. Thus, our results are formally equivalent to those obtained for an infinitely risk averse agent who, however, uses a random and time-dependent volatility band that depends on both the option (through its cash gamma) and her uncertainty aversion (through $f''$).
\end{remark}

\subsection{Existence of probability scenarios}
\label{sec:second result}

Our main result, Theorem~\ref{thm:second order}, is formulated for abstract sets of trading strategies $\cA$ and volatility scenarios $\cV$, which are (i) large enough to contain the almost optimal strategies $\theta^\psi$ and volatilities $\sigma^\psi$ and (ii) small enough to ensure that a corresponding measure  $P \in \fP(\theta,\sigma)$ exists and all the technical prerequisites for our verification theorem are satisfied. In this section, we propose concrete choices $\cA_0$ and $\cV_0$ that meet these requirements. 

To this end, let $\cA_0$ denote the set of all real-valued processes $\theta = (\theta_t)_{t\in[0,T]}$ of the form
\begin{align}
\label{eqn:A0}
\theta_t
&= \bar V_s(t,S_t) + \breve\theta_t \1_{\lbrace t < \tau \rbrace},
\end{align}
where $\breve\theta$ is a bounded, progressively measurable process such that $\breve\theta_t:\Omega \to \RR$ is continuous for each $t\in[0,T]$, and the stopping time $\tau$ is defined as in Theorem~\ref{thm:second order}. In words, every strategy in $\cA_0$ has to fall back to the reference delta hedge as soon as the corresponding P\&L process $Y$ deviates too far from its initial value. Together with Assumption~\ref{ass:second order}, this guarantees (cf.~Theorem~\ref{thm:existence}) that the P\&L process associated to these strategies fulfils our uniform boundedness assumption \eqref{eqn:process bounds}. Prior to the switch, however, the strategy can deviate from the reference hedge in any possibly path-dependent way. 

Next, denote by $\cV_0$ the set of all real-valued processes $\sigma = (\sigma_t)_{t\in[0,T]}$ of the form
\begin{align}
\label{eqn:V0}
\sigma_t
&= \acute\sigma(t,S_t,Y_t) \1_{\lbrace S_t \in (K^{-1},K) \rbrace},
\end{align}
where $\acute\sigma: [0,T]\times\RR^2 \to [0,K]$ is a (globally) Lipschitz continuous function. Hence, for any volatility process in $\cV_0$, once $S$ leaves $(K^{-1},K)$, its volatility vanishes and the stock price freezes.

The following theorem shows that under suitable regularity assumptions, $\cA_0$ and $\cV_0$ satisfy the assumption in Theorem~\ref{thm:second order}:
\begin{align*}
(\theta^\psi, \sigma^\psi)
&\in \cA_0 \times \cV_0 \subset \cZ \quad \text{for }\psi > 0\text{ small enough.}
\end{align*}
Let us emphasise, however, that $\theta^\psi$ and $\sigma^\psi$ remain almost optimal also in \emph{any} larger classes $\cA \supset \cA_0$ and $\cV \supset \cV_0$ of trading strategies and volatility scenarios satisfying $\cA \times \cV \subset \cZ$.

\begin{theorem}
\label{thm:existence}
Suppose that Assumption~\ref{ass:second order} holds, that $\bar V_s$, $\bar V_{ss}$ and $\bar\sigma$ are Lipschitz continuous on $(0,T)\times(K^{-1},K)$, and that $y_l < y_0-2-\frac{1}{2}K^3 T$ and $y_u > y_0 + 2 + \frac{1}{2}K^3T$. Then $\cA_0 \times \cV_0 \subset \cZ$, i.e., for any $(\theta,\sigma) \in \cA_0\times\cV_0$, there is a probability measure $P$ on $(\Omega, \cF)$ such that $(\theta,\sigma,P) \in \fS$. If in addition, $f''$ is Lipschitz continuous on $[0,T]\times[K^{-1},K]\times(y_l, y_u)\times[0,K]$, then $(\theta^\psi, \sigma^\psi) \in \cA_0 \times \cV_0$ for all $\psi>0$ small enough.
\end{theorem}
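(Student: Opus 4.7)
The plan is to prove the two assertions of Theorem~\ref{thm:existence} in turn.

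\textbf{Construction of $P$.} Fix $(\theta,\sigma)\in\cA_0\times\cV_0$, with representations $\theta_t = \bar V_s(t,S_t) + \breve\theta_t\1_{\{t<\tau\}}$ and $\sigma_t = \acute\sigma(t,S_t,Y_t)\1_{\{S_t\in(K^{-1},K)\}}$ as in~\eqref{eqn:A0}--\eqref{eqn:V0}. On an auxiliary filtered probability space $(\wt\Omega,\wt\cF,\wt{\FF},\wt P)$ carrying a Brownian motion $\wt W$, I will construct a weak solution $(\wt S,\wt Y)$ to the coupled SDE
\begin{align*}
\diff\wt S_t &= \wt S_t\,\acute\sigma(t,\wt S_t,\wt Y_t)\1_{\{\wt S_t\in(K^{-1},K)\}}\dd\wt W_t, \quad \wt S_0=s_0, \\
\diff\wt Y_t &= \breve\theta_t(\wt S,\wt Y)\1_{\{t<\wt\tau\}}\diff\wt S_t + \tfrac12 \wt S_t^2\bar V_{ss}(t,\wt S_t)\bigl(\bar\sigma(t,\wt S_t)^2-\acute\sigma(t,\wt S_t,\wt Y_t)^2\1_{\{\wt S_t\in(K^{-1},K)\}}\bigr)\dd t, \quad \wt Y_0=y_0,
\end{align*}
where $\wt\tau := \inf\{t:|\wt Y_t-y_0|\geq 1\}\wedge T$. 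Since $\acute\sigma,\bar V_{ss},\bar\sigma$ are Lipschitz and bounded and $\omega\mapsto\breve\theta_t(\omega)$ is bounded and continuous for each $t$, the coefficients of this SDE are bounded and continuous as functionals of the path. Weak existence then follows from classical tightness/Skorokhod or martingale-problem arguments applied to Euler approximations. Defining $P := \wt P\circ(\wt S,\wt Y)^{-1}$ on $(\Omega,\cF)$ yields a measure under which the canonical process $(S,Y)$ inherits the required canonical decomposition~\eqref{eqn:setup:Y}.

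\textbf{Verification of the uniform bounds~\eqref{eqn:process bounds}.} By construction $\sigma_t\in[0,K]$, and $\sigma_t=0$ whenever $S_t\notin(K^{-1},K)$; since $S_0=s_0\in(K^{-1},K)$ and $S$ has continuous paths, $S_t\in[K^{-1},K]$ throughout. For $Y$: path continuity and the definition of $\tau$ give $|Y_t-y_0|\leq 1$ for $t\leq\tau$; for $t\geq\tau$ the diffusion coefficient of $Y$ vanishes (as $\breve\theta_t\1_{\{t<\tau\}}=0$) while its drift is pathwise bounded in absolute value by $\tfrac12 K\cdot K^2 = \tfrac12 K^3$, yielding $|Y_t-y_0|\leq 1+\tfrac12 K^3 T$. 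The hypotheses $y_0-y_l,\,y_u-y_0 > 2+\tfrac12 K^3 T$ then imply $Y_t\in(y_l,y_u)$ throughout, so $(\theta,\sigma,P)\in\fS$.

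\textbf{The almost-optimal controls belong to $\cA_0\times\cV_0$.} For $\theta^\psi$, let $\breve\theta^\psi_t := \wt\theta(t,S_t,Y_t)\psi$. By Assumption~\ref{ass:second order}(i) the partial derivatives $\wt w_s,\wt w_{sy}$ are bounded on $\bfD$, while $U'/U''$ is continuous on the compact interval $[y_l,y_u]$, so $\wt\theta$ is bounded on $\bfD$ and $\breve\theta^\psi$ is bounded; continuity of $\wt\theta(t,\cdot,\cdot)$ and of $\omega\mapsto(S_t(\omega),Y_t(\omega))$ give continuity of $\omega\mapsto\breve\theta^\psi_t(\omega)$, so $\theta^\psi\in\cA_0$. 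For $\sigma^\psi$, set $\acute\sigma^\psi(t,s,y) := \bar\sigma(t,s)+\wt\sigma(t,s,y)\psi$ on $[0,T]\times[K^{-1},K]\times[y_l,y_u]$, extended outside by any Lipschitz extension. Using~\eqref{eqn:sigma tilde}, the Lipschitz continuity of $\bar V_{ss},\bar\sigma,f''$, the uniform bound $f''\geq 1/K$, and the boundary convention~\eqref{eqn:source term:boundary} (combined with $\bar\sigma(t,K^{-1})=\bar\sigma(t,K)=0$), one checks that $\wt\sigma$ extends to a Lipschitz function on $[0,T]\times[K^{-1},K]\times[y_l,y_u]$, and hence so does $\acute\sigma^\psi$. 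Finally, $\bar\sigma\in[\varepsilon,K-\varepsilon]$ together with boundedness of $\wt\sigma$ gives $\acute\sigma^\psi\in[0,K]$ for all $\psi$ sufficiently small, so $\sigma^\psi\in\cV_0$.

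\textbf{Main obstacle.} The principal difficulty is establishing weak existence for the coupled SDE above, since the dependence on $\breve\theta$ is only via a bounded continuous \emph{path functional} rather than a Lipschitz drift/diffusion, so strong existence is not automatic; continuity plus boundedness is, however, enough for weak existence by standard tightness arguments. The other steps (bounds on $(S,Y)$, Lipschitz/boundedness of $\wt\sigma$) are routine provided one is careful about the degeneracy of $\bar\sigma$ at $s\in\{K^{-1},K\}$, which is precisely what the convention~\eqref{eqn:source term:boundary} is designed to handle.
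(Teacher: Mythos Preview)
Your proposal has a genuine gap in the construction of $P$. You assert that the coefficients of your coupled SDE are ``bounded and continuous as functionals of the path'', but this is false: the diffusion coefficient of $\wt Y$ contains the factor $\1_{\{t<\wt\tau\}}$, where $\wt\tau=\inf\{t:|\wt Y_t-y_0|\ge 1\}\wedge T$ is a first hitting time of a closed set, and such hitting times are \emph{not} continuous functionals on $C([0,T];\RR^2)$. Consequently $\omega\mapsto\1_{\{t<\tau(\omega)\}}$ is discontinuous, and the weak-existence results you invoke (tightness of Euler schemes, Skorokhod representation, martingale-problem arguments \`a la Stroock--Varadhan) all require continuity of the coefficients in the path variable to pass to the limit. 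The same issue arises for the factor $\1_{\{\wt S_t\in(K^{-1},K)\}}$, which is discontinuous in the state at $s\in\{K^{-1},K\}$. Your ``Main obstacle'' paragraph correctly flags that $\breve\theta$ is only a continuous (not Lipschitz) path functional, but that is not the real obstruction; the real obstruction is that the \emph{full} coefficient, with the indicators, fails to be continuous at all.

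The paper circumvents both discontinuities by a two-stage construction. First, it replaces $\1_{\{S_t\in(K^{-1},K)\}}$ by the Lipschitz cut-off $h(s)=(s\wedge K)\vee K^{-1}$ and \emph{omits} the $\1_{\{t<\tau\}}$ factor, producing an SDE with genuinely bounded continuous path-functional coefficients to which a standard weak-existence theorem applies, yielding $(S^{(1)},Y^{(1)})$. Second, it invokes a bespoke splicing lemma (Theorem~\ref{thm:SDE:existence}) which, given this solution up to time $\tau(S^{(1)},Y^{(1)})$, concatenates a strong solution of a Lipschitz SDE for the post-$\tau$ regime (where the diffusion of $Y$ is set to zero). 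The resulting process is then stopped at $\rho$, and Galmarino's test is used to show $\tau(X^{(1)})=\tau(X^{(2)})$ and $\rho(X^{(2)})=\rho(X')$, so that the indicators in the final dynamics are indeed evaluated along the constructed process. Without this device (or an equivalent one handling the coefficient discontinuity at $\tau$), your weak-existence step is not justified. The remaining parts of your proposal---the verification of the bounds~\eqref{eqn:process bounds} and the argument that $(\theta^\psi,\sigma^\psi)\in\cA_0\times\cV_0$---are essentially the same as in the paper and are fine.
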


The proof of Theorem~\ref{thm:existence} is provided in Section~\ref{sec:existence}.

\section{Heuristics and extensions}
\label{sec:extensions}

In practice, you will rarely need to hedge a single option, but rather a whole book of different contracts. For instance, your book may consist of call and put options at various strikes and maturities and miscellaneous exotic options like Asian options, barrier options, lookback options, or options on the realised variance of the stock returns. Moreover, some options like calls and puts at standardised maturities and with strikes close to the current price of the underlying might be liquidly traded and thus be available as additional hedging instruments for the non-traded options in your book. 

Such more involved problems can also be tackled with the methodology of this article. Since the corresponding rigorous verification would become even more technical, we only develop these extensions on a heuristic level here. The resulting formulas explain how the sensitivities of an option portfolio affect the cash equivalent of small uncertainty aversion for practically relevant cases. We start with an informal derivation of Theorem~\ref{thm:second order}, thereby providing some intuition for the rigorous proof in Section~\ref{sec:proof of main result}. Subsequently, we adapt the general procedure to more complicated settings including exotic options of Asian-, lookback-, or barrier-type, as well as options on the realised variance. We also explain how to deal with option portfolios as well as with static hedging using vanilla options, and discuss how this allows to interpret our ``cash equivalent'' as a ``measure of model uncertainty'' in the sense of Cont~\cite{Cont2006}.

\subsection{General procedure and the case of Theorem~\ref{thm:second order}}
\label{sec:general}

Our starting point is the dynamic programming approach to two-player, zero-sum stochastic differential games. The central idea is to find an asymptotic solution to the Hamilton--Jacobi--Bellman--Isaacs (henceforth HJBI) equation associated to the hedging problem and corresponding almost optimal controls. For the convenience of the reader, we provide a derivation of the HJBI equation which starts from a general sufficient criterion for optimality  (Proposition~\ref{prop:optimality principle}) known as \emph{principle of optimality} \cite{Davis1979} or \emph{martingale optimality principle} \cite[V.15]{RogersWilliams2000}; see also \cite[Proposition 4.1]{Seifried2010} for a version of the martingale optimality principle in the context of a zero-sum game. After that, we explain how the HJBI equation together with an appropriate ansatz can be used to derive candidates for the value and the almost optimal controls of our hedging problems. Next, the formulas provided in Section~\ref{sec:main results} for the specific case of hedging a single vanilla option are recovered. Finally, we summarise the general procedure that is used in the remaining subsections to derive the corresponding candidates also in more general settings.

\paragraph{A sufficient criterion for optimality.}
Consider an optimisation problem of the form
\begin{align}
\label{eqn:general:value}
v
&:= \sup_{\theta \in \cA} \inf_{\sigma\in\cV}\EX[\theta,\sigma]{N_T},
\end{align}
where $\cA$ and $\cV$ are sets of admissible controls, for each $(\theta,\sigma) \in \cA\times\cV$, $\EX[\theta,\sigma]{\cdot}$ denotes the expectation under a given measure $P^{\theta,\sigma}$, and $N$ is a sufficiently integrable process.\footnote{E.g., if $N$ is integrable under $P^{\theta,\sigma}$ for each $(\theta,\sigma) \in \cA\times\cV$, then $v$ is a well-defined number in the extended real line $[-\infty,+\infty]$.}
\pagebreak
\begin{proposition}[Martingale optimality principle]
\label{prop:optimality principle}
If there is a pair $(\theta^*,\sigma^*) \in \cA\times\cV$ such that
\renewcommand{\labelenumi}{(\roman{enumi})}
\begin{enumerate}
\item for each $\theta \in \cA$, $N$ is a supermartingale under $P^{\theta,\sigma^*}$,
\item for each $\sigma \in \cV$, $N$ is a submartingale under $P^{\theta^*,\sigma}$,
\end{enumerate}
then
\begin{align}
\label{eqn:prop:optimality principle}
N_0 
&= \EX[\theta^*,\sigma^*]{N_T} 
= \sup_{\theta\in\cA} \inf_{\sigma\in\cV} \EX[\theta,\sigma]{N_T}
= \inf_{\sigma\in\cV} \sup_{\theta\in\cA} \EX[\theta,\sigma]{N_T}.
\end{align}
In particular, $\theta^*$ and $\sigma^*$ are optimal controls for \eqref{eqn:general:value} with value $v = N_0$.
\end{proposition}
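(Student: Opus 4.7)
The plan is to execute the standard saddle-point argument, extracting four inequalities from assumptions (i) and (ii) and combining them with weak duality. Nothing deep is required; the point is simply to unpack what the super/sub-martingale properties give when one specialises the opposing control to $\theta^*$ or $\sigma^*$.

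First I would record the two pointwise consequences. From (i) applied to the particular choice $\theta=\theta^*$, the process $N$ is a supermartingale under $P^{\theta^*,\sigma^*}$, so $\EX[\theta^*,\sigma^*]{N_T}\le N_0$; from (ii) applied to $\sigma=\sigma^*$, $N$ is a submartingale under the same measure, so $\EX[\theta^*,\sigma^*]{N_T}\ge N_0$. Together these yield the saddle-point value identity $\EX[\theta^*,\sigma^*]{N_T}=N_0$. Next, for arbitrary $\theta\in\cA$ the supermartingale property under $P^{\theta,\sigma^*}$ gives $\EX[\theta,\sigma^*]{N_T}\le N_0$, hence
\begin{equation*}
\sup_{\theta\in\cA}\inf_{\sigma\in\cV}\EX[\theta,\sigma]{N_T}
\;\le\; \sup_{\theta\in\cA}\EX[\theta,\sigma^*]{N_T}
\;\le\; N_0,
\end{equation*}
and for arbitrary $\sigma\in\cV$ the submartingale property under $P^{\theta^*,\sigma}$ gives $\EX[\theta^*,\sigma]{N_T}\ge N_0$, so symmetrically
\begin{equation*}
\inf_{\sigma\in\cV}\sup_{\theta\in\cA}\EX[\theta,\sigma]{N_T}
\;\ge\; \inf_{\sigma\in\cV}\EX[\theta^*,\sigma]{N_T}
\;\ge\; N_0.
\end{equation*}

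Finally I would close the chain using weak duality $\sup_\theta\inf_\sigma\EX[\theta,\sigma]{N_T}\le \inf_\sigma\sup_\theta\EX[\theta,\sigma]{N_T}$ (which is a general fact, not requiring the hypotheses) together with the trivial lower and upper bounds $\sup_\theta\inf_\sigma\EX[\theta,\sigma]{N_T}\ge\inf_\sigma\EX[\theta^*,\sigma]{N_T}\ge N_0$ and $\inf_\sigma\sup_\theta\EX[\theta,\sigma]{N_T}\le\sup_\theta\EX[\theta,\sigma^*]{N_T}\le N_0$. These four inequalities force all quantities to equal $N_0$, which simultaneously proves \eqref{eqn:prop:optimality principle} and identifies $(\theta^*,\sigma^*)$ as a saddle point. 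There is no genuine obstacle here; the only subtle point is to make sure each super/sub-martingale statement is used with the correct pairing of controls so that the inner optimisation in \eqref{eqn:general:value} is bounded on the appropriate side, and to invoke weak duality explicitly so that the equality of the two iterated optima (not just their common value at $N_0$) is transparent.
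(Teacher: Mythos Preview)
Your proposal is correct and follows essentially the same route as the paper: both arguments derive $\sup_\theta \EX[\theta,\sigma^*]{N_T}\le N_0\le \inf_\sigma \EX[\theta^*,\sigma]{N_T}$ from (i) and (ii), then close the chain $N_0\le \inf_\sigma \EX[\theta^*,\sigma]{N_T}\le \sup_\theta\inf_\sigma\le \inf_\sigma\sup_\theta\le \sup_\theta \EX[\theta,\sigma^*]{N_T}\le N_0$ using weak duality. Your presentation is slightly more verbose (the two displayed inequalities in your second paragraph are already implied by the final chain), but the logical content is identical.
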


\begin{proof}
As $N$ is a martingale under $P^{\theta^*,\sigma^*}$ by (i) and (ii), the first equality in \eqref{eqn:prop:optimality principle} is clear. Moreover, (i) and (ii) imply
\begin{align*}
\sup_{\theta\in\cA} \EX[\theta,\sigma^*]{N_T}
&\leq N_0
\leq \inf_{\sigma\in\cV} \EX[\theta^*,\sigma]{N_T}.
\end{align*}
Thus,
\begin{align*}
N_0
&\leq \inf_{\sigma\in\cV} \EX[\theta^*,\sigma]{N_T}
\leq \sup_{\theta\in\cA} \inf_{\sigma\in\cV} \EX[\theta,\sigma]{N_T}
\leq \inf_{\sigma\in\cV} \sup_{\theta\in\cA} \EX[\theta,\sigma]{N_T}
\leq \sup_{\theta\in\cA} \EX[\theta,\sigma^*]{N_T}
\leq N_0,
\end{align*}
and the remaining equalities in \eqref{eqn:prop:optimality principle} follow.
\end{proof}

\paragraph{Derivation of the Hamilton--Jacobi--Bellman--Isaacs equation.}
Suppose that we are given controls $\theta^*$ and $\sigma^*$ satisfying the conditions of Proposition~\ref{prop:optimality principle}. In addition, assume that the dynamics of $N$ under $P^{\theta,\sigma}$ are of the form
\begin{align*}
\diff N_t
&= \diff M^{\theta,\sigma}_t + \nu^{\theta,\sigma}_t \dd t
\end{align*}
for some $P^{\theta,\sigma}$-martingale $M^{\theta,\sigma}$ and a drift rate $\nu^{\theta,\sigma}$. Then the conditions (i) and (ii) of Proposition~\ref{prop:optimality principle} imply that the drift rates $\nu^{\theta,\sigma}$ satisfy
\begin{align*}
\sup_{\theta\in\cA} \nu_t^{\theta,\sigma^*}
&\leq 0
\leq \inf_{\sigma\in\cV} \nu_t^{\theta^*,\sigma}.
\end{align*}
Using these inequalities, we find
\begin{align*}
0
&\leq \inf_\sigma \nu_t^{\theta^*,\sigma}
\leq \nu_t^{\theta^*,\sigma^*}
\leq \sup_\theta \nu_t^{\theta,\sigma^*}
\leq 0,
\end{align*}
as well as 
\begin{align*}
0
&\leq \inf_\sigma \nu_t^{\theta^*,\sigma}
\leq \sup_\theta \inf_\sigma \nu_t^{\theta,\sigma}
\leq \inf_\sigma \sup_\theta \nu_t^{\theta,\sigma}
\leq \sup_\theta \nu_t^{\theta,\sigma^*}
\leq 0.
\end{align*}
Hence, we have equality everywhere and, in particular:
\begin{align}
\label{eqn:general:saddle point}
\nu_t^{\theta^*,\sigma^*}
&= \sup_\theta \inf_\sigma \nu_t^{\theta,\sigma}
= \inf_\sigma \sup_\theta \nu_t^{\theta,\sigma} = 0.
\end{align}
To obtain the drift rates more explicitly, let us now assume that there are sufficiently regular functions $g$ and $w$, and a state process $Z$ that is a (multidimensional) It\^o diffusion under each $P^{\theta,\sigma}$, such that for any $(\theta,\sigma) \in \cA\times\cV$,
\begin{align}
\label{eqn:general:N}
N_t
&= \int_0^t g(u,Z_u;\theta_u,\sigma_u) \dd u + w(t,Z_t) \quad P^{\theta,\sigma}\text{-a.s.}
\end{align}
This form is motivated by the structure of our hedging problem, cf.~\eqref{eqn:asymptotic:value} below. More concretely, if $Z$ has dynamics
\begin{align*}
\diff Z_t
&= b(t,Z_t;\theta_t,\sigma_t) \dd t + a(t,Z_t;\theta_t,\sigma_t) \dd W_t
\end{align*}
under $P^{\theta,\sigma}$, then applying It\^o's formula under each $P^{\theta,\sigma}$ to the right-hand side of \eqref{eqn:general:N} yields 
\begin{align}
\label{eqn:general:drift representation}
\nu^{\theta,\sigma}_t
&= w_t(t,Z_t) + H(t,Z_t,\nabla w(t,Z_t),\D^2 w(t,Z_t);\theta_t,\sigma_t)
\end{align}
where $\nabla w$ and $\D^2 w$ denote the gradient and the Hessian of $w(t,z)$ with respect to the $z$ variable, respectively, and
\begin{align*}
H(t,z,p,A;\vartheta,\varsigma)
&:= g(t,z;\vartheta,\varsigma) + b(t,z;\vartheta,\varsigma) \cdot p + \frac{1}{2} \Trace\left((aa^\tr)(t,z;\vartheta,\varsigma) A\right).
\end{align*}
Substituting \eqref{eqn:general:drift representation} into \eqref{eqn:general:saddle point} yields\footnote{We would obtain the same results if we interchanged the order of the infimum and the supremum. In the language of two-player, zero-sum stochastic differential games, this indicates that the game ``has a value''.}
\begin{align*}
w_t(t,Z_t) + \sup_{\theta} \inf_{\sigma} H(t,Z_t,\nabla w(t,Z_t), \D^2 w(t,Z_t);\theta_t,\sigma_t)
&= 0.
\end{align*}
A sufficient condition for this to hold is that $w$ satisfies the PDE
\begin{align}
\label{eqn:general:HJBI}
w_t + \sup_\vartheta \inf_\varsigma H(t,z,\nabla w, \D^2 w;\vartheta,\varsigma)
&= 0,
\end{align}
which is called the Hamilton--Jacobi--Bellman--Isaacs equation. In addition, candidates for the optimal controls are given by $\vartheta^*(t,Z_t)$ and $\varsigma^*(t,Z_t)$, where $\vartheta^*$ and $\varsigma^*$ are the saddle points of the HJBI equation \eqref{eqn:general:HJBI} in the sense that, for each $(t,z)$:
\begin{align*}
H(t,z,\nabla w(t,z), \D^2 w(t,z);\vartheta^*(t,z),\varsigma^*(t,z))
&= \sup_\vartheta \inf_\varsigma H(t,z,\nabla w(t,z), \D^2 w(t,z);\vartheta,\varsigma).
\end{align*}
Also note that $N_0 = w(0,Z_0)$ from \eqref{eqn:general:N}, so that $w(0,Z_0)$ is a candidate for the value of the optimisation problem \eqref{eqn:general:value} by Proposition~\ref{eqn:prop:optimality principle}. Therefore, the function $w$ is also called the \emph{value function} of the optimisation problem \eqref{eqn:general:value}.

The above derivation of the HJBI equation is of course merely formal. To prove rigorously that the candidate controls are indeed optimal, a rigorous verification theorem (using for instance the sufficient conditions from Proposition~\ref{prop:optimality principle}) is needed.

\paragraph{Ansatz for asymptotic solution.}
For each $\psi > 0$, consider the hedging problem\footnote{For the heuristic derivation in this section, we tacitly assume that for each $(\theta,\sigma)$, $P^{\theta,\sigma}$ attains the infimum in \eqref{eqn:value}, so that the additional infimum over measures in \eqref{eqn:value} disappears.}
\begin{align}
\label{eqn:asymptotic:value}
v(\psi)
&:= \sup_{\theta\in\cA} \inf_{\sigma\in\cV} \EX[\theta,\sigma]{\frac{1}{\psi}\int_0^T U'(Y_t)f(t,S_t,Y_t;\sigma_t) \dd t + U(Y_T)},
\end{align}
where under each $P^{\theta,\sigma}$, $S$ and $Y$ have dynamics of the form
\begin{align}
\dd S_t
&= S_t \sigma_t \dd W_t,\notag\\
\label{eqn:asymptotic:Y}
\dd Y_t
&=  (\theta_t - \bar\Delta(t, S_t)) \dd S_t +  b(t,S_t;\sigma_t) \dd t,
\end{align}
for some functions $\bar\Delta$ and $b$. The drift rate $b$ is further required to satisfy $b(t,s;\bar\sigma(t,s)) = 0$ as the theoretical P\&L should be ``locally drift-less'' whenever the true volatility $\sigma_t$ coincides with the reference volatility $\bar\sigma(t,S_t)$.\footnote{This covers most of the specific choices that are dealt with in the following subsections, except for additional state variables needed for some exotic options in Section~\ref{sec:exotic}. To explain the general procedure, we first focus here on the easiest case with just two state variables, the stock price $S$ and the P\&L process $Y$.}

To derive the HJBI equation for \eqref{eqn:asymptotic:value}, we first recast the expression inside the expectation of \eqref{eqn:asymptotic:value} into the form \eqref{eqn:general:N}. To this end, let $w^\psi(t,s,y)$ be a function satisfying the terminal condition $w^\psi(T,s,y) = U(y)$. Then the HJBI equation corresponding to \eqref{eqn:asymptotic:value} reads as follows:
\begin{align}
\label{eqn:asymptotic:HJBI}
w^\psi_t + \sup_\vartheta \inf_\varsigma \left\lbrace \frac{1}{\psi}U' f(\varsigma) + b(\varsigma)w^\psi_y + \frac{1}{2}\varsigma^2 s^2 \left(w^\psi_{ss} + 2(\vartheta-\bar\Delta)w^\psi_{sy} + (\vartheta-\bar\Delta)^2 w^\psi_{yy}\right)\right\rbrace
&= 0.
\end{align}
(We suppress the arguments $(t,s,y)$ to ease notation.) Explicit solutions to this equation are typically not available. Therefore, our goal is to obtain an asymptotic solution for small values of the uncertainty aversion parameter $\psi$. More precisely, we want to find strategies $\theta^\psi$, volatilities $\sigma^\psi$, and correction terms $\wt w$ and $\wh w$ such that
\begin{align*}
v(\psi)
&= U(y_0) - U'(y_0) \wt w(0,s_0,y_0)\psi + U'(y_0) \wh w(0,s_0,y_0) \psi^2 + o(\psi^2)\\
&= \EX[\theta^\psi,\sigma^\psi]{\frac{1}{\psi}\int_0^T U'(Y_t)f(t,S_t,Y_t;\sigma^\psi_t) \dd t + U(Y_T)} + o(\psi^2).
\end{align*}
The first equality is a second-order expansion of the value as a function of the uncertainty aversion parameter $\psi$. The second equality says that the strategy $\theta^\psi$ and the volatility $\sigma^\psi$ are optimal controls at the next-to-leading order $O(\psi^2)$.

We next use the HJBI equation and an appropriate ansatz to derive candidates for the asymptotic expansion of the value function and the almost optimal controls.
Recall from Section~\ref{sec:hedging} that vanilla options can be perfectly replicated in the reference local volatility model. Hence, without model uncertainty, it is optimal to use the delta hedge for the option, and the value of the hedging problem is simply the utility from the initial P\&L; cf.~Lemma~\ref{lem:hedging}. This motivates the following ansatz for the asymptotic expansion of the value function and the almost optimal controls of \eqref{eqn:asymptotic:value}:
\begin{align}
\label{eqn:asymptotic:ansatz value}
w^\psi(t,s,y)
&= U(y) - U'(y)\wt w(t,s,y)\psi,\\
\label{eqn:asymptotic:ansatz volatility}
\sigma^\psi(t,s,y)
&= \bar\sigma(t,s) + \wt\sigma(t,s,y)\psi,\\
\label{eqn:asymptotic:ansatz strategy}
\theta^\psi(t,s,y)
&= \bar\Delta(t,s) + \wt\theta(t,s,y)\psi,
\end{align}
for functions $\wt w, \wt\sigma, \wt\theta$ to be determined. With this ansatz, the HJBI equation yields equations for $\wt w, \wt\sigma,\wt\theta$. Indeed, substituting \eqref{eqn:asymptotic:ansatz value}--\eqref{eqn:asymptotic:ansatz strategy} into \eqref{eqn:asymptotic:HJBI} (we assume that $\theta^\psi$ and $\sigma^\psi$ are point-wise saddle points for the $\sup_\vartheta \inf_\varsigma$), using also the Taylor expansions $f(\varsigma) \approx \frac{1}{2}f''(\bar\sigma)(\varsigma-\bar\sigma)^2$ (because of \eqref{eqn:penalty function:minimum conditions}) and $b(\varsigma) \approx b'(\bar\sigma)(\varsigma-\bar\sigma)$ (because of the assumption that $b(\bar\sigma) = 0$; $b'$ denotes the partial derivative with respect to $\varsigma$), and ordering by powers of $\psi$, we obtain
\begin{align}
\label{eqn:asymptotic:HJBI expansion}
-U' \times \left(\wt w_t - \frac{1}{2}f''(\bar\sigma)\wt\sigma^2 -b'(\bar\sigma)\wt\sigma + \frac{1}{2}\bar\sigma^2 s^2 \wt w_{ss} \right)\psi + o(\psi)
&= 0.
\end{align}
Our candidate for $\wt\sigma$ is the minimiser of the $O(\psi)$ term. Solving the first-order condition $f''(\bar\sigma)\wt\sigma + b'(\bar\sigma) = 0$ for $\wt\sigma$ yields
\begin{align}
\label{eqn:asymptotic:sigma tilde}
\wt\sigma 
&= \frac{-b'(\bar\sigma)}{f''(\bar\sigma)}.
\end{align}
Plugging this candidate back into the $O(\psi)$ term in \eqref{eqn:asymptotic:HJBI expansion} and setting the result equal to zero yields a PDE for the cash equivalent $\wt w$:
\begin{align}
\label{eqn:asymptotic:PDE:first order}
\wt w_t + \frac{1}{2}\bar\sigma^2 s^2 \wt w_{ss} + \frac{b'(\bar\sigma)^2}{2f''(\bar\sigma)}
&= 0.
\end{align}
To find the candidate for the optimal strategy, in view of the HJBI equation \eqref{eqn:asymptotic:HJBI}, we only need to maximise 
\begin{align}
\label{eqn:asymptotic:second order optimisation}
2(\vartheta-\Delta)w^\psi_{sy} + (\vartheta-\Delta)^2 w^\psi_{yy}.
\end{align}
Substituting the ansatz \eqref{eqn:asymptotic:ansatz value} and \eqref{eqn:asymptotic:ansatz strategy} into \eqref{eqn:asymptotic:second order optimisation}, we find 
\begin{align*}
\left(2\wt\theta\frac{\partial^2}{\partial s\partial y}\left(-U'\wt w\right) + {\wt\theta}^2 U''\right)\psi^2 + o(\psi^2).
\end{align*}
The $O(\psi^2)$ term simplifies to $-2\wt\theta\left(U' \wt w_{sy} + U'' \wt w_s\right) + {\wt\theta}^2 U''$ and the maximiser of this quadratic equation in $\wt\theta$ is
\begin{align}
\label{eqn:asymptotic:theta tilde}
\wt\theta
&= \wt w_s + \frac{U'}{U''} \wt w_{sy}.
\end{align}
Finally, a PDE for the second-order correction term $\wh w$ can be obtained in the same way, using a second-order ansatz $w^\psi(t,s,y) = U(y) - U'(y) \wt w(t,s,y)\psi + U'(y) \wh w(t,s,y) \psi^2$ for the value function instead of the first-order ansatz \eqref{eqn:asymptotic:ansatz value} (the ansatz for the controls remains the same) and setting the $O(\psi^2)$ term in the expanded HJBI equation, evaluated at the candidate strategy and candidate volatility, equal to zero. We omit the lengthy computations.

\paragraph{The case of Theorem~\ref{thm:second order}.}
If a single vanilla option needs to be hedged, the P\&L dynamics have been derived in Section~\ref{sec:hedging}. Comparing the dynamics of $Y$ in \eqref{eqn:setup:Y} with the form \eqref{eqn:asymptotic:Y} considered above, we find that 
\begin{align}
\label{eqn:asymptotic:b for single option}
b(t,s;\varsigma)
&= \frac{1}{2}s^2 \bar V_{ss}(t,s)(\bar\sigma(t,s)^2 - \varsigma^2).
\end{align}
Thus, $b'(t,s;\bar\sigma(t,s)) = -s^2 \bar V_{ss}(t,s) \bar\sigma(t,s)$. Plugging this formula for $b'$ into the formulas for $\wt\sigma$ in \eqref{eqn:asymptotic:sigma tilde} and into the PDE for $\wt w$ in \eqref{eqn:asymptotic:PDE:first order}, we recover exactly \eqref{eqn:sigma tilde} and \eqref{eqn:PDE:first order}. The formula for $\wt\theta$ in \eqref{eqn:asymptotic:theta tilde} also coincides with \eqref{eqn:theta tilde}.

\paragraph{Summary.}
The general procedure to find the candidate controls and the cash equivalent can be summarised as follows.

\renewcommand{\labelenumi}{(\roman{enumi})}
\begin{enumerate}
\item Introduce as many state variables (e.g., the running maximum or minimum for lookback options, the running integral over the stock price for Asian options, etc.) as necessary to express the theoretical value of each option as a function of time and these state variables.

\item Write down the theoretical P\&L process $Y$ corresponding to your option position and your trading strategy. Use It\^o's formula to determine the drift and diffusion coefficients for $Y$.

\item Write down the HJBI equation corresponding to the control problem.

\item Plug an appropriate ansatz as in \eqref{eqn:asymptotic:ansatz value}--\eqref{eqn:asymptotic:ansatz volatility} for the value and the volatility into the HJBI equation and expand the result in the $\psi$ variable. Minimise the $O(\psi)$ term over $\wt\sigma$ to obtain the candidate volatility.

\item Substitute the candidate for $\wt\sigma$ back into the $O(\psi)$ term of the HJBI to find a PDE for the cash equivalent $\wt w$.

\item Plug the ansatz \eqref{eqn:asymptotic:ansatz strategy} for the strategy into the expanded HJBI equation. Maximise the $O(\psi^2)$ term over $\wt\theta$ to obtain the candidate strategy.
\end{enumerate}

In the following subsections, we illustrate this approach in a number of practically relevant applications.

\subsection{Some exotic options}
\label{sec:exotic}

In this section, we consider some exotic options whose payoffs depend on the whole path of the stock price and not only on the stock price at maturity. For instance, their payoffs could depend on the average of the stock price path over some time period (Asian options), its maximum or minimum (lookback options), or on the realised variance of the stock returns (e.g., a variance swap). The reference values of such options can still be represented as solutions to a PDE, provided one introduces suitable additional state variables that keep track of the path-dependent features of the contract.

Step (i) is to represent the theoretical value of the exotic option under consideration in the local volatility model as a solution to a PDE. To this end, let us introduce a generic additional state variable $A$ with dynamics of the form
\begin{align*}
\diff A_t
&=\alpha(t,S_t,A_t,M_t) \dd t + \beta(t,S_t,A_t,M_t) \dd\langle S \rangle_t + \gamma(t,S_t,A_t,M_t)\dd S_t + \delta(t,S_t,A_t,M_t)\dd M_t,
\end{align*}
where $M_t := \max_{u\in[0,t]} S_u$ is the running maximum of the stock price. Note that we do not merge the $\diff \langle S \rangle_t$-term with the $\diff t$-term as $\langle S \rangle$ depends on the true volatility $\sigma$. If you want to value and hedge an option with maturity $T$ and payoff of the form $G(S_T,A_T,M_T)$ in the local volatility model, you can essentially employ the same approach as in Section~\ref{sec:hedging}. Indeed, applying It\^o's formula to a sufficiently regular function $\bar V(t,s,a,m)$, and assuming that the true dynamics of $S$ are $\dd S_t = S_t \sigma_t \dd W_t$, we obtain, dropping most arguments:
\begin{align}
\label{eqn:exotic:ito}
\begin{split}
\diff \bar V(t,S_t,A_t,M_t)
&= (\bar V_s + \gamma \bar V_a) \dd S_t
+ (\bar V_m + \delta \bar V_a) \dd M_t\\
&\qquad+ \left\lbrace \bar V_t + (\alpha + \beta \sigma_t^2 S_t^2) \bar V_a + \frac{1}{2}\sigma_t^2 S_t^2\left(\bar V_{ss} + 2 \gamma \bar V_{sa} + \gamma^2 \bar V_{aa}\right)\right\rbrace \dd t.
\end{split}
\end{align}
Note that if $\sigma_t = \bar\sigma(t,S_t)$, i.e., if your local volatility model is correct, and if $\bar V$ solves the PDE
\begin{align}
\label{eqn:exotic:PDE}
\begin{split}
\bar V_t + (\alpha + \beta \bar\sigma^2 s^2) \bar V_a + \frac{1}{2}\bar\sigma^2 S_t^2\left(\bar V_{ss} + 2 \gamma \bar V_{sa} + \gamma^2 \bar V_{aa}\right)
&= 0,\\
\bar V_m + \delta \bar V_a 
&=0 \quad \text{whenever } s = m,\\
\bar V(T,s,a,m)
&= G(s,a,m),
\end{split}
\end{align}
then \eqref{eqn:exotic:ito} yields, using that $M_t$ only increases when $S_t = M_t$,
\begin{align*}
G(S_T,A_T,M_T)
&= \bar V(T,S_T,A_T,M_T)
= \bar V(0,S_0,A_0,M_0) + \int_0^T \bar\Delta(t,S_t,A_t,M_t) \dd S_t,
\end{align*}
where $\bar\Delta := \bar V_s + \gamma \bar V_a$.\footnote{Note that this hedge $\bar\Delta$ reflects the option's sensitivity to price moves in the underlying both directly through $S$ and indirectly through the additional state variable $A$.} Hence, in the local volatility model and under these assumptions, the option $G$ can be perfectly replicated by self-financing trading with initial capital $\bar V(0,S_0,A_0,M_0)$ and trading strategy $\bar\Delta$.

Step (ii) is to write down the P\&L process $Y$ and to determine its dynamics in terms of the trading strategy $\theta$ and the true volatility $\sigma$. In analogy to the case of a vanilla option treated in Section~\ref{sec:hedging}, your theoretical P\&L at time $t$ is 
\begin{align}
\label{eqn:exotic:Y definition}
Y_t
&= x_0 + \int_0^t \theta_u \dd S_u - \bar V(t,S_t,A_t,M_t).
\end{align}
Using the PDE \eqref{eqn:exotic:PDE} to substitute $\bar V_t$ in \eqref{eqn:exotic:ito} and plugging the result into \eqref{eqn:exotic:Y definition} yields
\begin{align*}
\diff Y_t 
&= \left(\theta_t - (\bar V_s + \gamma \bar V_a)\right)\dd S_t
+ S_t^2 \left(\beta \bar V_a + \frac{1}{2}(\bar V_{ss} + 2\gamma \bar V_{sa} + \gamma^2 \bar V_{aa})\right)\left(\bar\sigma^2 - \sigma_t^2 \right) \dd t.
\end{align*}
Taking into account $\bar\Delta = \bar V_s + \gamma \bar V_a$ and setting $b(\varsigma) := s^2 \left(\beta \bar V_a + \frac{1}{2}(\bar V_{ss} + 2\gamma \bar V_{sa} + \gamma^2 \bar V_{aa})\right)\left(\bar\sigma^2 - \varsigma^2 \right)$, this simplifies to
\begin{align*}
\diff Y_t 
&= \left(\theta_t - \bar\Delta\right)\dd S_t
+ b(\sigma_t) \dd t.
\end{align*}

Step (iii) is to write down the HJBI equation. Recall that this boils down to using It\^o's formula to derive the drift rate of the process $N_t := \frac{1}{\psi}\int_0^t U'(Y_t)f(t,S_t,Y_t;\sigma_t) \dd t + w^\psi(t,S_t,Y_t,A_t,M_t)$. We obtain
\begin{align}
\label{eqn:exotic:HJBI}
w^\psi_t +
&\sup_\vartheta \inf_\varsigma
\left\lbrace \frac{1}{\psi}U' f(\varsigma) + b(\varsigma)w^\psi_y + (\alpha + \beta s^2 \varsigma^2)w^\psi_a \right.\\
&\quad\left.+ \frac{1}{2}\varsigma^2 s^2\left(w^\psi_{ss} + 2(\vartheta - \bar\Delta) w^\psi_{sy} + (\vartheta-\bar\Delta)^2 w^\psi_{yy} + 2 \gamma w^\psi_{sa} + 2 \gamma (\vartheta-\bar\Delta) w^\psi_{ya} + \gamma^2 w^\psi_{aa}\right) \right\rbrace
= 0.\notag
\end{align}
In analogy to the valuation PDE \eqref{eqn:exotic:PDE} for the exotic option, we also need that $w^\psi_m + \delta w^\psi_a = 0$ whenever $s = m$, so that the $\diff M_t$-term of $N$ vanishes as well.

Step (iv) is to plug the ansatz
\begin{align*}
w^\psi(t,s,y,a,m)
&= U(y) - U'(y)\wt w(t,s,y,a,m)\psi,\\
\sigma^\psi(t,s,y,a,m)
&= \bar\sigma(t,s) + \wt\sigma(t,s,y,a,m)\psi,
\end{align*}
into the HJBI equation \eqref{eqn:exotic:HJBI}, expand the result in the $\psi$ variable and minimise the $O(\psi)$ term over $\wt\sigma$ to find the candidate for the volatility. The $O(\psi)$ term in the expansions reads as
\begin{align*}
-U' \times \left(\wt w_t - \frac{1}{2}f''(\bar\sigma)\wt\sigma^2 - b'(\bar\sigma)\wt\sigma + (\alpha + \beta s^2 \bar\sigma^2)\wt w_a + \frac{1}{2}\bar\sigma^2 s^2 \left(\wt w_{ss} + 2\gamma \wt w_{sa} + \gamma^2 \wt w_{aa} \right) \right) \psi.
\end{align*}
Minimising this over $\wt\sigma$ and using the definition of $b$, we find the candidate volatility:
\begin{align}
\label{eqn:exotic:candidate volatility}
\sigma^\psi
&= \bar\sigma + \frac{\bar\sigma}{f''(\bar\sigma)} s^2 \left(2 \beta \bar V_a + (\bar V_{ss} + 2 \gamma \bar V_{sa} + \gamma^2 \bar V_{aa})\right)\psi.
\end{align}

In step (v), we plug the candidate for $\wt\sigma$ back into the $O(\psi)$ term of the HJBI. Then the PDE for $\wt w$ obtained by setting the $O(\psi)$ term equal to zero has the following (formal) Feynman--Kac representation:
\begin{align}
\label{eqn:exotic:Feynman-Kac}
\wt w(t,s,y,a,m)
&= \EX[t,s,y,a,m]{\int_t^T \frac{\bar\sigma^2}{2f''(\bar\sigma)}\left( S_u^2 \left(2 \beta \bar V_a + (\bar V_{ss} + 2 \gamma \bar V_{sa} + \gamma^2 \bar V_{aa})\right)\right)^2 \dd u}.
\end{align}
Here, the expectation is computed under a measure such that $S$ has reference dynamics with initial conditions $S_t = s$, $Y_t = y$, $A_t = a$, $M_t = m$.

Finally, step (vi) is to find the candidate strategy by substituting the ansatz
\begin{align*}
\theta^\psi(t,s,y,a,m)
&= \bar\Delta(t,s,a,m) + \wt\theta(t,s,y,a,m)\psi
\end{align*}
into the HJBI equation and maximising the $O(\psi^2)$ term over $\wt\theta$. After some computations, we find
\begin{align}
\label{eqn:exotic:candidate strategy}
\theta^\psi
&= \bar\Delta + \left( \wt w_s + \gamma \wt w_a + \frac{U'}{U''}(\wt w_{sy} + \gamma \wt w_{ya})\right)\psi.
\end{align}

Let us discuss these results for some more specific examples:

\begin{example}[Asian and lookback options]
\label{ex:exotic:asian and lookback}
Suppose the payoff of the option depends on the arithmetic average of the stock price over the period $[0,T]$. In this case, we introduce the state variable $\diff A_t = S_t \dd t$ and write the payoff in the form $G(S_T, A_T)$. For example, the payoff of a floating strike Asian call is $(S_T - A_T/T)^+$. In the above setting, we thus have $\alpha(t,s,a,m) = s$ and $\beta = \gamma = \delta = 0$, and we immediately see that the general formulas \eqref{eqn:exotic:candidate volatility}, \eqref{eqn:exotic:Feynman-Kac}, and \eqref{eqn:exotic:candidate strategy} for the candidate controls and the cash equivalent $\wt w$ all reduce to those derived for a single vanilla option, except that the reference value $\bar V$ of the option depends on the additional state variable, and hence so do the candidate controls and $\wt w$. As a consequence, the cash gamma of the option is still the central determinant of the cash equivalent of small uncertainty aversion.

Likewise, for lookback-type options like the floating strike lookback put with payoff $M_T - S_T$, the formulas for the candidate controls and the cash equivalent $\wt w$ also essentially reduce to those of a vanilla option.
\end{example}

\begin{example}[Options on realised variance]
\label{ex:exotic:realised variance}
Here, the state variable $\diff A_t = \frac{1}{S_t^2}\dd\langle S \rangle_t$ tracks the cumulative realised variance of returns in the sense that if the true dynamics of $S$ are $\diff S_t = S_t \sigma_t \dd W_t$, then $A_t = \int_0^t \sigma_u^2 \dd u$. For example, the variance swap with strike volatility $\sigma_\textrm{strike}$ has the payoff $\frac{1}{T}A_T - \sigma_\textrm{strike}^2$; it pays the difference between the average realised variance over the period $[0,T]$ and a given strike variance $\sigma_\textrm{strike}^2$. In the above setting, we have $\beta(t,s,a,m) = s^{-2}$ and $\alpha = \gamma = \delta = 0$. We then obtain from \eqref{eqn:exotic:Feynman-Kac} that
\begin{align}
\label{eqn:exotic:realised variance:Feynman-Kac}
\wt w(t,s,y,a)
&= \EX[t,s,y,a]{\int_t^T \frac{(\bar\sigma(2 \bar V_a + \bar\Gamma^\$))^2}{2 f''(\bar\sigma)} \dd u}.
\end{align}
We see that the option's sensitivity with respect to changes in the realised variance and its cash gamma play symmetric roles here. Whence -- unlike for models with discrete trading \cite{HayashiMykland2005} or transaction costs \cite{KallsenMuhleKarbe2015} -- the quadratic variation of the reference hedge is generally not the right sufficient statistic here; cf.~the following Remark \ref{rem:general structure}.
\end{example}

\begin{remark}
\label{rem:general structure}
Suppose that the reference model has dynamics $\diff S_t = S_t \bar\sigma_t \dd W_t$ for some general, possibly path-dependent, volatility process $\bar\sigma$. Moreover, assume that $\bar\theta$ is a replicating strategy for an option $G(S_T,A_T)$ in that reference model. If $\bar\sigma_t = \bar\sigma(t,S_t)$ is actually of local volatility type, then $\bar\theta_t = \bar V_s(t,S_t,A_t)$ and by It\^o's formula, assuming $\gamma = 0$ so that $A$ is of finite variation,
\begin{align*}
S_t^2 \diff \langle \bar\theta \rangle_t
&= S_t^2 \diff \langle \bar V_s(\cdot,S_\cdot,A_\cdot) \rangle_t
= \left(\bar\sigma_t S_t^2 \bar V_{ss}(t,S_t,A_t)\right)^2 \dd t.
\end{align*}
Thus, in view of the representation of the cash equivalent in \eqref{eqn:Feynman-Kac}, which essentially also holds for Asian, lookback, and barrier options (cf.~\eqref{eqn:barrier:Feynman-Kac} below), one could be led to expect that the cash equivalent for possibly path-dependent volatilities generalises to
\begin{align}
\label{eqn:rem:general structure}
\EX{\int_0^T \frac{S_t^2}{2f''(\bar\sigma_t)} \dd \langle \bar\theta \rangle_t}.
\end{align}
However, in Example \ref{ex:exotic:realised variance}, the replicating strategy of the variance swap in the reference model is $\bar\theta_t = \bar V_s(t,S_t,A_t)$ and $\gamma = 0$, but the cash equivalent \eqref{eqn:exotic:realised variance:Feynman-Kac} differs from \eqref{eqn:rem:general structure}. This shows that \eqref{eqn:rem:general structure} is not the correct general form for the cash equivalent.
\end{remark}

\subsection{Barrier options}
\label{sec:barrier}

The payoff of barrier options depends on whether or not the stock price has hit a given barrier over its lifetime. For instance, a \emph{knock-out call} with barrier $B$ is an option that has the payoff of a vanilla call option provided that the stock price has not hit the barrier $B$ at any time before maturity. If the barrier has been hit, the payoff becomes zero. Conversely, the payoff of a \emph{knock-in call} becomes active only if the barrier has been hit before maturity, otherwise the payoff is zero. Barrier options are path-dependent in a rather weak sense. Their payoff depends only on two possible states -- whether or not the barrier has been hit. This allows to value barrier options in the local volatility model without introducing additional state variables by imposing suitable boundary conditions.

For simplicity, we focus on knock-out options whose payoff $G(S_T)$ is knocked out if the stock price breaches the barrier $B > S_0$ before maturity. It follows as in Sections~\ref{sec:hedging} and \ref{sec:exotic} that the fair value of such an option in the local volatility model can be expressed as the solution to the PDE
\begin{align}
\label{eqn:barrier:PDE}
\begin{split}
\bar V_t(t,s) + \frac{1}{2} \bar\sigma(t,s)^2 s^2 \bar V_{ss}(t,s)
&=0, \quad (t,s) \in [0,T)\times(0,B),\\
\bar V(t,B)
&= 0, \quad t\in[0,T),\\
\bar V(T,s)
&= G(s), \quad s \in (0,B).
\end{split}
\end{align}
$\bar V(t,s)$ represents the value of the knock-out option at time $t$ provided the stock price is $s$ \emph{and the option has not been knocked out yet}. The boundary condition $\bar V(t,B) = 0$ reflects the fact that the knock-out option becomes worthless if the stock price hits the barrier $B$ before maturity. If you have sold such an option, your theoretical P\&L can be written as
\begin{align}
\label{eqn:barrier:Y definition}
Y_t
&= x_0 + \int_0^t \theta_u \dd S_u - \bar V(t,S_t)\1_{\lbrace t\leq\rho \rbrace},
\end{align}
where $\rho$ is the first time that $S$ hits the barrier $B$. Now, note that the boundary condition in \eqref{eqn:barrier:PDE} implies $\bar V(t,S_t)\1_{\lbrace t\leq\rho \rbrace} = \bar V(t\wedge\rho,S_{t\wedge\rho})$. It\^o's formula in turn yields
\begin{align*}
\bar V(t\wedge\rho,S_{t\wedge\rho})
&= \bar V(0,s_0)
+ \int_0^t \bar V_s(u,S_u)\1_{\lbrace u < \rho \rbrace}\dd S_u \\
&\qquad+ \int_0^t \left( \bar V_t(u,S_u)+ \frac{1}{2} \sigma_u^2 S_u^2 \bar V_{ss}(u,S_u) \right)\1_{\lbrace u < \rho \rbrace} \dd u.
\end{align*}
Using the PDE \eqref{eqn:barrier:PDE} to substitute the $\bar V_t$ term and plugging the result into \eqref{eqn:barrier:Y definition} then gives the dynamics of $Y$:
\begin{align}
\label{eqn:barrier:Y}
\dd Y_t
&= (\theta_t - \bar\Delta_t) \dd S_t
+ \frac{1}{2}\bar\Gamma^\$_t (\bar\sigma(t,S_t)^2 -\sigma_t^2) \dd t,
\end{align}
where $\bar\Delta_t = \bar V_s(t,S_t) \1_{\lbrace t < \rho \rbrace}$ is the delta of the knock-out option and $\bar\Gamma^\$_t = S_t^2 \bar V_{ss}(t,S_t)\1_{\lbrace t < \rho \rbrace}$ is its cash gamma. As one would expect, these quantities are zero after the barrier has been hit.

To find the candidate optimal controls, we have to distinguish two cases. After the barrier has been hit, the option is worthless and no longer needs to be hedged. Therefore, the candidate strategy is simply $0$. With this strategy, your P\&L stays constant independently of which volatility ``nature'' chooses (cf.~\eqref{eqn:barrier:Y}). Therefore, there is no incentive for ``nature'' to deviate from the reference dynamics and the candidate volatility is simply the reference volatility $\bar\sigma$. The HJBI equation corresponding to the hedging problem before the barrier has been hit is the same as in the case of a single vanilla option (but with an additional boundary condition). Therefore, we obtain essentially the same candidates for the optimal controls, namely
\begin{align*}
\sigma^\psi_t
&= \bar\sigma(t,S_t) + \frac{\bar\sigma(t,S_t)\bar\Gamma^\$_t}{f''(t,S_t,Y_t;\bar\sigma(t,S_t))}\psi,\\
\theta^\psi_t
&= \bar\Delta_t + \left(\wt w_s(t,S_t,Y_t) + \frac{U'(Y_t)}{U''(Y_t)}\wt w_{sy}(t,S_t,Y_t)\right)\1_{\lbrace t < \rho\rbrace} \psi,
\end{align*}
where the cash equivalent $\wt w$ solves the PDE \eqref{eqn:PDE:first order} with the boundary condition ${\wt w(t,B,y)}= 0$. The (formal) Feynman--Kac representation of this PDE reads as
\begin{align}
\label{eqn:barrier:Feynman-Kac}
\wt w(t,s,y)
&= \EX[t,s]{\int_t^T \frac{\left(\bar\sigma(u,S_u)\bar\Gamma^\$_u\right)^2}{2f''(u,S_u,y;\bar\sigma(u,S_u))} \dd u}.
\end{align}
As a result, the expected volatility-weighted cash gamma accumulated over the remaining lifetime of the barrier option is again the major driver of the cash equivalent of small uncertainty aversion.

\subsection{Option portfolios}
\label{sec:portfolio}

Instead of a single option with maturity $T$, we now consider a whole portfolio of vanilla options with possibly different maturities.\footnote{Portfolios including exotic options can be treated along the same lines; we do not pursue this here to ease notation.} Suppose that you have sold $N$ options with maturities $T_1,\ldots,T_N \in [0,T]$ and payoffs $G_1(S_{T_1}),\ldots,G_N(S_{T_N})$, respectively. Let $\bar V^i(t,s)$ denote the reference value of the option $G_i$, which solves
\begin{align}
\label{eqn:portfolio:pricing PDE}
\begin{split}
\bar V^i_t(t,s) + \frac{1}{2} \bar\sigma(t,s)^2 s^2 \bar V^i_{ss}(t,s)
&=0, \quad (t,s) \in [0,T_i)\times\RR_+,\\
\bar V^i(T_i,s)
&= G_i(s), \quad s \in \RR_+.
\end{split}
\end{align}
Your theoretical P\&L at time $t$ can be expressed as
\begin{align}
\label{eqn:portfolio:Y definition}
Y_t
&= x_0 + \int_0^t \theta_u \dd S_u - \sum_{i=1}^N \bar V^i(t\wedge T_i,S_{t\wedge T_i}).
\end{align}
We emphasise that for $t = T$, using the terminal conditions of the PDEs \eqref{eqn:portfolio:pricing PDE},
\begin{align*}
Y_T
&= x_0 + \int_0^T \theta_u \dd S_u - \sum_{i=1}^N G_i(S_{T_i})
\end{align*}
is your \emph{actual} final P\&L. We next determine the drift and diffusion parts of $Y$ if the true stock volatility is given by some process $\sigma$. As in Section~\ref{sec:hedging}, using It\^o's formula and the PDEs \eqref{eqn:portfolio:pricing PDE}, we find that
\begin{align}
\label{eqn:portfolio:theoretical value}
\bar V^i(t\wedge T_i,S_{t\wedge T_i}) 
&= \bar V^i(0,s_0) + \int_0^t \bar\Delta^i(u,S_u) \dd S_u - \frac{1}{2}\int_0^t \bar\Gamma^{\$,i}(u,S_u) (\bar\sigma(u,S_u)^2 - \sigma_u^2) \dd u,
\end{align}
where $\bar\Delta^i(u,s) := \bar V^i_s(u,s)\1_{\lbrace u < T_i\rbrace}$ is the delta of the option $G_i$ and $\bar\Gamma^{\$,i}(u,s) := s^2 \bar V^i_{ss}(u,s)\1_{\lbrace u < T_i\rbrace}$ is its cash gamma. Substituting \eqref{eqn:portfolio:theoretical value} into \eqref{eqn:portfolio:Y definition}, we obtain
\begin{align*}
\diff Y_t
&= \left(\theta_t - \bar\Delta(t,S_t)\right)\dd S_t +\frac{1}{2}\bar\Gamma^{\$}(t,S_t) (\bar\sigma(t,S_t)^2 - \sigma_t^2) \dd t,
\end{align*}
where $\bar\Delta(t,s) := \sum_{i=1}^N \bar\Delta^i(t,s)$ is the \emph{net delta} of your option portfolio and $\bar\Gamma^\$(t,s):= \sum_{i=1}^N \bar\Gamma^{\$,i}(t,s)$ is its \emph{net cash gamma}. We see that the dynamics of $Y$ have exactly the same form as for the case of a single vanilla option (cf.~\eqref{eqn:asymptotic:Y} and \eqref{eqn:asymptotic:b for single option}). The only difference is that the delta and the cash gamma of the single option are replaced by the net delta and net cash gamma of the option portfolio. Therefore, we obtain analogous candidates for the optimal controls (in feedback form) and the cash equivalent:
\begin{align}
\label{eqn:portfolio:candidate volatility}
\sigma^\psi(t,s,y)
&= \bar\sigma(t,s) + \frac{\bar\sigma(t,s) \bar\Gamma^{\$}(t,s)}{f''(t,s,y;\bar\sigma(t,s))}\psi,\\
\label{eqn:portfolio:candidate strategy}
\theta^\psi(t,s,y)
&= \bar\Delta(t,s) + \left(\wt w_s(t,s,y) + \frac{U'(y)}{U''(y)} \wt w_{sy}(t,s,y)\right)\psi,\\
\label{eqn:portfolio:Feynman-Kac}
\wt w(t,s,y)
&= \EX[t,s]{\int_t^T \frac{\left(\bar\sigma(u,S_u)\bar\Gamma^\$(u,S_u)\right)^2}{2f''(u,S_u,y;\bar\sigma(u,S_u))} \dd u}.
\end{align}

\subsection{Static hedging with vanilla options}
\label{sec:static}

So far, the only hedging instrument available was the stock. In practice, however, if liquidly traded options are available, these may be used as additional hedging instruments for more complex derivatives. Therefore, we now assume that in addition to trading in the stock you can buy or sell, at time $0$, any quantity of $M$ vanilla options with maturities $T_1,\ldots,T_M\in[0,T]$ and payoffs $F_1(S_{T_1}),\ldots,F_M(S_{T_M})$. We suppose that these options are available for prices $p_1,\ldots,p_M$. In the context of worst-case superhedging, this setup is known as the Lagrangian uncertain volatility model \cite{AvellanedaParas1996}; also compare \cite{Mykland2003.options}. For each $i = 1,\ldots,M$, let $\bar V^i(t,s)$ be the reference value of the option $F_i$, which solves
\begin{align}
\label{eqn:static:pricing PDE}
\begin{split}
\bar V^i_t(t,s) + \frac{1}{2} \bar\sigma(t,s)^2 s^2 \bar V^i_{ss}(t,s)
&=0, \quad (t,s) \in [0,T_i)\times\RR,\\
\bar V^i(T_i,s)
&= F_i(s), \quad s \in \RR.
\end{split}
\end{align}
We require that the reference model is consistent with the observed prices at time $0$ in the sense that $\bar V^i(0,s_0)= p_i$ for $i=1,\ldots,M$.\footnote{That is, the local volatility model is calibrated to the observed market prices of the liquid options at time $0$.} For notational simplicity, we assume that you have to hedge a portfolio of $N$ vanilla options with maturities $T_{M+1},\ldots,T_{M+N} \in [0,T]$ and payoffs $G_{M+1}(S_{T_{M+1}}),\ldots,G_{M+N}(S_{T_{M+N}})$.\footnote{Portfolios of barrier options as in \cite{AvellanedaBuff1999} or other exotics can be treated along the same lines, but require a more extensive notation.} We assume that for each $i={M+1},\ldots, {M+N}$, the reference value $\bar V^i$ of the option $G_i(S_{T_i})$ satisfies the PDE \eqref{eqn:portfolio:pricing PDE}.

Suppose that you \emph{buy} $\lambda_i$ options with payoff $F_i(S_{T_i})$ for price $p_i$ at time $0$ (a negative $\lambda_i$ indicates a short sale) for $i=1,\ldots,M$ and follow a self-financing trading strategy $\theta$ for the stock. Then your theoretical P\&L at time $t$ is
\begin{align}
\label{eqn:static:Y definition}
Y_t
&= x_0 + \int_0^t \theta_u \dd S_u - \sum_{i=M+1}^{M+N}\bar V^i(t \wedge T_i,S_{t \wedge T_i}) + \sum_{i=1}^M \lambda_i\left(\bar V^i(t\wedge T_i, S_{t\wedge T_i}) - p_i \right).
\end{align}
Note that the consistency condition $\bar V^i(0,s_0)= p_i$ implies that our choice of $\lambda_i$ does not affect the theoretical P\&L at time $0$. Moreover, by \eqref{eqn:static:pricing PDE} and \eqref{eqn:portfolio:pricing PDE},
\begin{align*}
Y_T
&= x_0 + \int_0^T \theta_u \dd S_u -  \sum_{i=M+1}^{M+N} G_i(S_{T_i})   + \sum_{i=1}^M \lambda_i\left(F_i(S_{T_i}) - p_i \right)
\end{align*}
is your \emph{actual} final P\&L.

Looking at \eqref{eqn:static:Y definition}, we recognise that (up to linear transformations that can be incorporated into the option payoffs) we are exactly in the setting of an option portfolio discussed in Section~\ref{sec:portfolio} (with $N$ replaced by $N+M$). Hence, we obtain the same candidate controls \eqref{eqn:portfolio:candidate volatility}--\eqref{eqn:portfolio:candidate strategy} and cash equivalent \eqref{eqn:portfolio:Feynman-Kac} with net delta
\begin{align*}
\bar\Delta
&:= \sum_{i=M+1}^{M+N}\bar\Delta^i - \sum_{i=1}^M\lambda_i\bar\Delta^i
\end{align*}
and net cash gamma
\begin{align*}
\bar\Gamma^{\$}
&:= \sum_{i=M+1}^{M+N}\bar\Gamma^{\$,i} - \sum_{i=1}^M \lambda_i \bar\Gamma^{\$,i}.
\end{align*}
In particular, denoting by $\bar\Gamma^{\$,0} := \sum_{i=M+1}^{M+N} \bar\Gamma^{\$,i}$ the net cash gamma of your original book (before buying or selling other options), the cash equivalent of the combined portfolio has the following representation:
\begin{align*}
\wt w(t,s,y)
&= \EX[t,s]{\int_t^T \frac{\left(\bar\sigma(u,S_u)(\bar\Gamma^{\$,0}-\sum_{i=1}^M \lambda_i\bar\Gamma^{\$,i})(u,S_u)\right)^2}{2f''(u,S_u,y;\bar\sigma(u,S_u))} \dd u}.
\end{align*}

This yields a criterion to manage a portfolio's sensitivity to volatility uncertainty by trading statically in options: find $\lambda_i$'s that minimise $\wt w$, i.e., that minimise the expected volatility-weighted net cash gamma accumulated over the remaining lifetime of the option portfolio. In Section~\ref{sec:measure}, we show that this minimised cash equivalent, viewed as a function of the original portfolio's net cash gamma, satisfies certain axiomatic properties that have been advocated for measures of model uncertainty for derivatives.

\subsection{The cash equivalent as a measure of model uncertainty}
\label{sec:measure}

Consider a mapping $\mu$ which assigns a nonnegative number to any contingent claim that has a well-defined value in any model of a given family. Cont \cite{Cont2006} calls $\mu$ a \emph{measure of model uncertainty} if it satisfies four axioms that reflect the possibility of full or partial hedges in the underlying or in liquidly traded options; cf.~\cite[Section 4.1]{Cont2006} for more details. Here, we restrict attention to the linear space $\cX$ of claims of the form 
\begin{align}
\label{eqn:measure:claim}
G
&= c + \sum_{i=M+1}^{M+N} G_i(S_{T_i}) + \int_0^T \theta_t \dd S_t,
\end{align}
where $N \in \NN$, the payoffs $G_i(S_{T_i})$ and maturities $T_i$ are as in Section~\ref{sec:static}, $c\in\RR$ is a constant, and $\theta$ is a sufficiently regular trading strategy so that the stochastic integral can be defined path-wise.\footnote{For instance, if $\theta$ is of finite variation, then the stochastic integral can be defined path-wise via the integration by parts formula.} We have seen that we can associate to each of these claims its cash gamma $\bar\Gamma^{\$,G}$ as the sum $\sum_{i=M+1}^{M+N} s^2 \bar V^i_{ss}(t,s)\1_{\lbrace t < T_i \rbrace}$ (note that $c$ and the stochastic integral in \eqref{eqn:measure:claim} do not contribute).

Recall the setup and notation of Section~\ref{sec:static}. For notational convenience, we combine the cash gammas $\bar\Gamma^{\$,i}$, $i=1,\ldots,M$, into a vector $\bar\Gamma^\$$ of functions. Now, define (for some fixed $t,s$) the function $\mu: \cX \to \RR_+$ by
\begin{align*}
\mu(G) 
&= \inf_{\lambda\in\RR^M}\EX[t,s]{\int_t^T \frac{\left(\bar\sigma(u,S_u)(\bar\Gamma^{\$,G}-\lambda \cdot \bar\Gamma^\$)(u,S_u)\right)^2}{2f''(u,S_u,y;\bar\sigma(u,S_u))} \dd u},
\end{align*}
which maps a claim $G \in \cX$ to its cash equivalent of small uncertainty aversion minimised over all static hedges in liquidly traded options. This mapping fulfills the following (suitably modified\footnote{Unlike \cite{Cont2006}, we disregard bid-ask spreads for the liquidly traded options.}) axioms of \cite{Cont2006}:
\begin{enumerate}
\item There is no model uncertainty for liquidly traded options:
\begin{align*}
\mu(F_i(S_{T_i}))
&= 0 \quad\text{for all } i=1,\ldots,M.
\end{align*}
Moreover, $\mu(c) = 0$ for all constants $c\in\RR$.
\item $\mu$ accounts for hedging possibilities provided by dynamic trading in the underlying:
\begin{align*}
\mu\left(G + \int_0^T \theta_t \dd S_t\right)
&= \mu(G) \quad \text{for all trading strategies }\theta\text{ and }G \in \cX.
\end{align*}

\item Diversification decreases the model uncertainty of a portfolio:
\begin{align*}
\mu(\nu G + (1-\nu) G')
&\leq \nu\mu(G) + (1-\nu)\mu(G')
\quad\text{for all }\nu\in[0,1]\text{ and }G,G' \in \cX.
\end{align*}

\item $\mu$ accounts for hedging possibilities provided by static hedges with liquidly traded options:
\begin{align*}
\mu\left(G + \sum_{i=1}^M \lambda_i F_i(S_{T_i})\right) 
&= \mu(G)\quad\text{for all }\lambda\in\RR^M\text{ and } G \in \cX.
\end{align*}
\end{enumerate}

We also note that by construction, $\mu$ becomes smaller as the set of liquidly traded options expands; this is another natural requirement that has been pointed out in \cite{Cont2006}. Properties (i), (ii), and (iv) are immediate from the definition of $\mu$. The convexity property (iii) can be verified as follows. Fix $\nu \in [0,1]$, $G,G' \in \cX$, and denote by $\bar\Gamma^{\$,G}$ and $\bar\Gamma^{\$,G'}$ the cash gammas of $G$ and $G'$. Fix $\varepsilon > 0$. By the definition of $\mu$, we may choose $\lambda, \lambda' \in \RR^M$ such that
\begin{align}
\label{eqn:measure:pf:lambda choice}
\mu(G)
&\leq \EX[t,s]{\int_t^T \frac{\left(\bar\sigma(u,S_u)(\bar\Gamma^{\$,G}-\lambda \cdot\bar\Gamma^{\$})(u,S_u)\right)^2}{2f''(u,S_u,y;\bar\sigma(u,S_u))} \dd u} + \frac{\varepsilon}{2}
\end{align}
and the analogous inequality with $G$ and $\lambda$ replaced by $G'$ and $\lambda'$ holds as well. Define $\lambda'' := \nu \lambda + (1-\nu)\lambda'$ and $G'' := \nu G + (1-\nu) G'$. Then
\begin{align*}
\bar\Gamma^{\$,G''} - \lambda''\cdot\bar\Gamma^\$
&= \nu\bar\Gamma^{\$,G}+(1-\nu)\bar\Gamma^{\$,G'}-(\nu \lambda + (1-\nu)\lambda')\cdot\bar\Gamma^\$\\
&= \nu(\bar\Gamma^{\$,G} - \lambda\cdot\bar\Gamma^\$) + (1-\nu)(\bar\Gamma^{\$,G'} - \lambda'\cdot\bar\Gamma^\$).
\end{align*}
Together with the convexity of $x\mapsto x^2$, this yields
\begin{align}
\label{eqn:measure:pf:convexity}
(\bar\Gamma^{\$,G''} - \lambda''\cdot\bar\Gamma^\$)^2
&\leq \nu(\bar\Gamma^{\$,G} - \lambda\cdot\bar\Gamma^\$)^2 + (1-\nu)(\bar\Gamma^{\$,G'} - \lambda'\cdot\bar\Gamma^\$)^2.
\end{align}
Using the definition of $\mu$, \eqref{eqn:measure:pf:convexity} and \eqref{eqn:measure:pf:lambda choice}, we find
\begin{align*}
\mu(G'')
&\leq \EX[t,s]{\int_t^T \frac{\left(\bar\sigma(u,S_u)(\bar\Gamma^{\$,G''}-\lambda'' \cdot\bar\Gamma^{\$})(u,S_u)\right)^2}{2f''(u,S_u,y;\bar\sigma(u,S_u))} \dd u} \\
&\leq \nu \mu(G) + (1-\nu)\mu(G') + \varepsilon.
\end{align*}
The assertion now follows by taking the limit $\varepsilon \downarrow 0$.

\section{Proofs}
\label{sec:proofs}

In this section, we rigorously prove the results from Section~\ref{sec:main results}. Throughout, we assume that Assumption~\ref{ass:second order} is in force. To ease notation, define for $(t,s,y) \in \bfD =  (0,T)\times(K^{-1},K)\times(y_l,y_u)$ and $\psi > 0$,
\begin{align}
\label{eqn:candidate functions}
\theta^\psi(t,s,y)
&:= \bar V_s(t,s) + \wt\theta(t,s,y)\psi,
&\sigma^\psi(t,s,y)
&:=\bar\sigma(t,s) + \wt\sigma(t,s,y)\psi.
\end{align}
Note that we use the symbols $\theta^\psi$ and $\sigma^\psi$ for both the functions defined in 
\eqref{eqn:candidate functions} and the candidate controls defined in Theorem~\ref{thm:second order}. This is, of course, motivated by the relationships
\begin{align*}
\theta^\psi_t
&= \theta^\psi(t,S_t,Y_t) \1_{\lbrace t < \tau \rbrace} + \bar\Delta_t \1_{\lbrace t \geq \tau \rbrace}
= \bar\Delta_t + \wt\theta(t,S_t,Y_t)  \1_{\lbrace t < \tau \rbrace} \psi,\\
\sigma^\psi_t
&= \sigma^\psi(t,S_t,Y_t).
\end{align*}

\subsection{Value expansion and almost optimality of the candidate strategy}
\label{sec:proof of main result}

In this section, we prove Theorem~\ref{thm:second order}. Throughout, we assume that $\cA$, $\cV$, and $\psi_c > 0$ are chosen such that $(\theta^\psi, \sigma^\psi) \in \cA \times \cV \subset \cZ$ for every $\psi \in (0,\psi_c)$. The concrete construction of such scenarios summarised in Theorem~\ref{thm:existence} is carried out in Section~\ref{sec:existence}. 

Define the \emph{candidate value function} $w:\ol\bfD\times \RR \to \RR$ by
\begin{align}
\label{eqn:candidate value function}
w(t,s,y;\psi)
&:= w^\psi(t,s,y)
:=  U(y) - U'(y) \wt w(t,s,y)\psi + U'(y) \wh w(t,s,y)\psi^2.
\end{align}
We note that by our assumptions on $\wt w$, $\wh w$, and $U$ in Assumption~\ref{ass:second order}, there is a constant $K' > 0$ (depending only on $K$, $y_l$, $y_u$, and $U$) such that
\begin{align}
\label{eqn:candidate value function:bounds}
\vert w^\psi_t \vert, \vert w^\psi_s \vert, \vert w^\psi_y \vert, \vert w^\psi_{ss} \vert, \vert w^\psi_{sy} \vert, \vert w^\psi_{yy} \vert \leq K' \text{ on } \bfD \times [-1,1].
\end{align}
Moreover, all partial derivatives of $w$ in \eqref{eqn:candidate value function:bounds} are evidently $C^\infty$ in $\psi$.

In essence, the proof of Theorem~\ref{thm:second order} boils down to showing that our candidate value function $w^\psi$, candidate strategy $\theta^\psi$, and candidate volatility $\sigma^\psi$ are approximate solutions to the Hamilton--Jacobi--Bellman--Isaacs equation associated to the hedging problem \eqref{eqn:value} in the sense that
\begin{align*}
w^\psi_t(t,s,y) + \sup_\vartheta \inf_\varsigma H^\psi(t,s,y;\vartheta,\varsigma)
&= o(\psi^2) \quad \text{as } \psi \downarrow 0,\text{ uniformly in } (t,s,y);
\end{align*}
cf.~Lemma~\ref{lem:horrible calculation} below. Here, for $\psi \neq 0$, the Hamiltonian $H^\psi: \bfD \times \RR \times [0,K]\to \RR$ is given by
\begin{align}
\label{eqn:hamiltonian}
H^\psi(t,s,y;\vartheta,\varsigma)
&= \frac{1}{\psi} U'(y) f(t, s, y; \varsigma)
+ \frac{1}{2} s^2 \bar V_{ss}(t,s) (\bar\sigma(t,s)^2 - \varsigma^2) w^\psi_y(t,s,y)\notag\\
&\quad+ \frac{1}{2} \varsigma^2 s^2 \left(w^\psi_{ss}(t,s,y) + 2 (\vartheta - \bar V_s(t,s)) w^\psi_{sy}(t,s,y) + (\vartheta - \bar V_s(t,s))^2 w^\psi_{yy}(t,s,y) \right);
\end{align}
recall \eqref{eqn:asymptotic:HJBI} and \eqref{eqn:asymptotic:b for single option} from the heuristic derivation of the HJBI equation in Section~\ref{sec:general}. This part of the proof is purely analytic and is carried out in Section \ref{sec:HJBI}, the main ingredients being the implicit function theorem and Taylor expansions. Then, adapting classical verification arguments to the asymptotic setting allows us to prove the two inequalities
\begin{align}
\label{eqn:first inequality}
\inf_{\sigma\in\cV} \inf_{P\in\fP(\theta^\psi,\sigma)} J^\psi(\sigma,P)
&\geq w^\psi_0 + o(\psi^2),\\
\label{eqn:second inequality}
\sup_{\theta\in\cA} \inf_{P\in\fP(\theta,\sigma^\psi)} J^\psi(\sigma^\psi,P)
&\leq w^\psi_0 + o(\psi^2),
\end{align}
where $w^\psi_0 := w^\psi(0,s_0,y_0)$; cf.~Lemmas~\ref{lem:first inequality} and \ref{lem:second inequality}. Denoting by $\lesssim$ ``less or equal up to a term of order $o(\psi^2)$'', we obtain from \eqref{eqn:first inequality}--\eqref{eqn:second inequality} that
\begin{align*}
w^\psi_0
&\lesssim \inf_{\sigma\in\cV} \inf_{P\in\fP(\theta^\psi,\sigma)} J^\psi(\sigma,P)
\lesssim \sup_{\theta \in \cA} \inf_{\sigma\in\cV} \inf_{P\in\fP(\theta,\sigma)} J^\psi(\sigma,P)
\lesssim \sup_{\theta \in \cA} \inf_{P\in\fP(\theta,\sigma^\psi)} J^\psi(\sigma^\psi,P)
\lesssim w^\psi_0
\end{align*}
and
\begin{align*}
w^\psi_0
&\lesssim \inf_{\sigma\in\cV} \inf_{P\in\fP(\theta^\psi,\sigma)} J^\psi(\sigma,P)
\lesssim \inf_{P\in\fP(\theta^\psi,\sigma^\psi)} J^\psi(\sigma^\psi,P)
\lesssim \sup_{\theta \in \cA} \inf_{P\in\fP(\theta,\sigma^\psi)} J^\psi(\sigma^\psi,P)
\lesssim w^\psi_0.
\end{align*}
Hence, we have equality up to a term of order $o(\psi^2)$ everywhere. In particular,
\begin{align*}
\sup_{\theta \in \cA} \inf_{\sigma\in\cV} \inf_{P\in\fP(\theta,\sigma)} J^\psi(\sigma,P)
&= w^\psi_0 + o(\psi^2)
= \inf_{P\in\fP(\theta^\psi,\sigma^\psi)} J^\psi(\sigma^\psi,P) + o(\psi^2).
\end{align*}
This completes the proof of Theorem~\ref{thm:second order} modulo the proofs of \eqref{eqn:first inequality}--\eqref{eqn:second inequality}.

\subsubsection{Approximate solution to HJBI equation}
\label{sec:HJBI}

We first determine the minimiser of the Hamiltonian with respect to the volatility variable $\varsigma$, and identify it at the leading order as the candidate $\sigma^\psi$ from \eqref{eqn:candidate functions}.

\begin{lemma}
\label{lem:optimal volatility}
Fix $L > 0$. Then there are constants $C_1 > 0$ and $\psi_1 > 0$ (depending on $L$) such that for every $(t,s,y)\in\bfD$, $\wt\vartheta\in[-L,L]$, and $\psi \in (0,\psi_1)$, the function
\begin{align}
\label{eqn:lem:optimal volatility:function}
[0,K] \ni \varsigma \mapsto H^\psi(t,s,y;\bar V_s(t,s) + \wt\vartheta\psi, \varsigma)
\end{align}
has a minimiser $\varsigma_*^\psi(t,s,y,\wt\vartheta)$ that satisfies the first-order condition
\begin{align}
\label{eqn:lem:optimal volatility:foc}
\frac{\partial H^\psi}{\partial\varsigma} (t,s,y;\bar V_s(t,s) + \wt\vartheta\psi, \varsigma_*^\psi(t,s,y,\wt\vartheta) )
&=0,
\end{align}
and
\begin{align}
\label{eqn:lem:optimal volatility:estimate}
\left\lvert\varsigma_*^\psi(t,s,y,\wt\vartheta) - \sigma^\psi(t,s,y)\right\vert
&\leq C_1 \psi^2.
\end{align}
\end{lemma}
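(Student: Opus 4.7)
\medskip

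\textbf{Proof plan.} The plan is to apply the implicit function theorem to the first-order condition and then use uniform bounds together with strict convexity to upgrade the local result to the uniform estimate claimed.

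First I would multiply the first-order equation by $\psi$ to remove the singular $1/\psi$ factor: define
\begin{align*}
\Phi^\psi(t,s,y,\wt\vartheta;\varsigma)
&:= \psi \,\frac{\partial H^\psi}{\partial\varsigma}(t,s,y;\bar V_s(t,s)+\wt\vartheta\psi,\varsigma)\\
&\phantom{:}= U'(y)\,\frac{\partial f}{\partial\varsigma}(t,s,y;\varsigma) + \psi\, G^\psi(t,s,y,\wt\vartheta;\varsigma),
\end{align*}
where $G^\psi$ collects the remaining, uniformly bounded terms (using \eqref{eqn:candidate value function:bounds} and the bounds in Assumption~\ref{ass:second order}). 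At $\psi=0$ and $\varsigma=\bar\sigma(t,s)$ the conditions \eqref{eqn:penalty function:minimum conditions} yield $\Phi^0(t,s,y,\wt\vartheta;\bar\sigma(t,s))=0$, while Assumption~\ref{ass:second order}(iv) gives
\begin{align*}
\partial_\varsigma \Phi^0(t,s,y,\wt\vartheta;\bar\sigma(t,s))
&= U'(y)\,f''(t,s,y;\bar\sigma(t,s)) \geq \frac{U'(y)}{K} > 0,
\end{align*}
uniformly in $(t,s,y,\wt\vartheta)$ on the compact set $\ol\bfD\times[-L,L]$. A quantitative implicit function theorem therefore yields a $\psi_1>0$ (depending only on $L$ and the uniform bounds) and a continuously differentiable function $\varsigma_*^\psi(t,s,y,\wt\vartheta)$ defined on $\ol\bfD\times[-L,L]\times(0,\psi_1)$ solving $\Phi^\psi=0$, i.e.\ satisfying \eqref{eqn:lem:optimal volatility:foc}.

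Next I would check that $\varsigma_*^\psi$ is an interior point of $[0,K]$ and is indeed the minimizer of \eqref{eqn:lem:optimal volatility:function}. Since $\bar\sigma\in[\varepsilon,K-\varepsilon]$ by \eqref{eqn:ass:reference volatility} and $\varsigma_*^\psi=\bar\sigma+O(\psi)$ by the implicit function theorem, shrinking $\psi_1$ if necessary guarantees $\varsigma_*^\psi\in(0,K)$. Strict convexity of \eqref{eqn:lem:optimal volatility:function} in $\varsigma$ follows from
\begin{align*}
\frac{\partial^2 H^\psi}{\partial \varsigma^2}
&= \frac{U'(y)}{\psi}\, f''(t,s,y;\varsigma) + \text{(bounded terms)},
\end{align*}
which is positive for small $\psi$ (here $f''\geq 1/K$ dominates the uniformly bounded contribution). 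Hence the unique interior critical point $\varsigma_*^\psi$ is the global minimizer on $[0,K]$.

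Finally, for the quantitative estimate \eqref{eqn:lem:optimal volatility:estimate}, I would Taylor-expand the identity $\Phi^\psi(\ldots;\varsigma_*^\psi)=0$ around $\bar\sigma$ in powers of $\psi$. Writing $\varsigma_*^\psi=\bar\sigma+a\psi+r^\psi$ and using $f'(\bar\sigma)=0$ gives
\begin{align*}
U'(y)\bigl(f''(\bar\sigma)(a\psi+r^\psi)+O((a\psi+r^\psi)^2)\bigr) + \psi\,G^0(t,s,y,\wt\vartheta;\bar\sigma) + O(\psi^2) = 0.
\end{align*}
Matching the $O(\psi)$ coefficient and using the explicit form $G^0(t,s,y,\wt\vartheta;\bar\sigma)=-\bar\sigma s^2\bar V_{ss}U'(y)$ (from $w^0=U(y)$, so $w^0_y=U'(y)$ and $w^0_{ss}=w^0_{sy}=0$) identifies $a=\bar\sigma s^2\bar V_{ss}/f''(\bar\sigma)=\wt\sigma(t,s,y)$. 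Thus $\varsigma_*^\psi-\sigma^\psi=r^\psi$ satisfies a linear equation with uniformly bounded coefficients whose right-hand side is $O(\psi^2)$, yielding $|r^\psi|\leq C_1\psi^2$ with $C_1$ independent of $(t,s,y,\wt\vartheta)$. The main obstacle is keeping all estimates uniform in the four additional parameters, which is handled by the compactness of $\ol\bfD\times[-L,L]$ together with the uniform bounds in Assumption~\ref{ass:second order}; everything else is routine implicit function theorem plus Taylor expansion.
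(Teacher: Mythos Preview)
Your proposal is correct and follows essentially the same route as the paper: multiply the first-order condition by $\psi$, apply the implicit function theorem around $(\psi,\varsigma)=(0,\bar\sigma)$, and Taylor-expand to second order to obtain the $O(\psi^2)$ estimate. The only cosmetic difference is ordering---the paper first exhibits the minimiser on the compact interval $[0,K]$, shows it is interior via a direct estimate, and then invokes the implicit function theorem, whereas you produce the critical point from the implicit function theorem first and then argue it is the global minimiser by strict convexity; both arguments rely on the same uniform bounds, and the paper packages the uniformity of the second-derivative estimate in a dedicated lemma (Lemma~\ref{lem:implicit function bounds}) rather than appealing to compactness of $\ol\bfD$ (note that $\bar\sigma$ is not continuous on $\ol\bfD$, so your compactness phrasing should be replaced by a direct appeal to the uniform bounds on $\bfD$).
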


\begin{proof}
As $H^\psi$ is continuous in $\varsigma$ and $[0,K]$ is compact, there exists a minimiser $\varsigma^\psi_* = \varsigma^\psi_*(t,s,y,\wt\vartheta)$ of \eqref{eqn:lem:optimal volatility:function} for every $(t,s,y)\in\bfD$, $\wt\vartheta \in [-L,L]$, and $\psi > 0$. Next, the basic idea is to employ the convexity of the penalty function $f$ to show that for sufficiently small $\psi$, $\varsigma^\psi_*$ has to lie in the interior of $[0,K]$. As a consequence, it satisfies the first-order condition.

To make this precise, first note from \eqref{eqn:ass:reference price} and \eqref{eqn:candidate value function:bounds} that there is a constant $K'' > 0$ such that for all $(t,s,y)\in\bfD$, $\wt\vartheta \in [-L,L]$, and $\psi \in (-1,1)$,
\begin{align}
\label{eqn:lem:optimal volatility:pf:diffusion term estimate}
\left\vert \frac{1}{2}s^2 \left(w^\psi_{ss}(t,s,y) + 2 \wt\vartheta \psi w^\psi_{sy}(t,s,y) + (\wt\vartheta \psi)^2 w^\psi_{yy}(t,s,y) -\bar V_{ss}(t,s) w^\psi_y(t,s,y)\right)\right\vert
&\leq K''.
\end{align}
Since $\bar\sigma(t,s) \in [0,K]$ on $[0,T]\times[K^{-1},K]$ by \eqref{eqn:ass:reference volatility}, we have
\begin{align*}
H^\psi(t,s,y;\bar V_s(t,s) + \wt\vartheta\psi, \varsigma^\psi_*(t,s,y,\wt\vartheta))
&\leq H^\psi(t,s,y;\bar V_s(t,s) + \wt\vartheta\psi, \bar\sigma(t,s)).
\end{align*}
On the one hand, using the definition of $H^\psi$ together with \eqref{eqn:penalty function:minimum conditions} and rearranging terms, this inequality implies
\begin{align}
&\frac{1}{\psi}U'(y)f(t,s,y;\varsigma^\psi_*)\notag\\
&\;\leq \left(\bar\sigma(t,s)^2 - (\varsigma^\psi_*)^2 \right)
\frac{1}{2}s^2 \left(w^\psi_{ss}(t,s,y) + 2 \wt\vartheta \psi w^\psi_{sy}(t,s,y) + (\wt\vartheta \psi)^2 w^\psi_{yy}(t,s,y)-\bar V_{ss}(t,s)w^\psi_y(t,s,y)\right)\notag\\
&\;\leq 2KK''\left\vert\bar\sigma(t,s) - \varsigma^\psi_* \right\vert.
\label{eqn:lem:optimal volatility:pf:10}
\end{align}
On the other hand, assumption \eqref{eqn:ass:penalty function} together with \eqref{eqn:penalty function:minimum conditions} yields
\begin{align}
\label{eqn:lem:optimal volatility:pf:20}
f(t,s,y;\varsigma^\psi_*)
&\geq \frac{1}{2K} (\bar\sigma(t,s) - \varsigma^\psi_*)^2.
\end{align}
Combining \eqref{eqn:lem:optimal volatility:pf:10}--\eqref{eqn:lem:optimal volatility:pf:20} and rearranging terms gives
\begin{align}
\label{eqn:lem:optimal volatility:pf:30}
\left\vert \bar\sigma(t,s) - \varsigma^\psi_* \right\vert
&\leq \frac{4K^2K''}{U'(y_u)} \psi; 
\end{align}
note that this inequality is trivially true if $\bar\sigma(t,s) = \varsigma^\psi_*$, so the division by $\vert \bar\sigma(t,s) - \varsigma^\psi_* \vert$ in the last step is justified. Now, since $\bar\sigma(t,s)$ is uniformly in the interior of $[0,K]$ by assumption \eqref{eqn:ass:reference volatility}, it follows from \eqref{eqn:lem:optimal volatility:pf:30} that there is $\psi_1 \in (0,1)$ such that for every $(t,s,y)\in \bfD$, $\wt\vartheta \in [-L,L]$, and $\psi \in (0,\psi_1)$, we have $\varsigma^\psi_*(t,s,y,\wt\vartheta) \in (0,K)$. This implies that $\varsigma^\psi_*$ satisfies the first-order condition \eqref{eqn:lem:optimal volatility:foc} or equivalently (through multiplication by $\psi > 0$),
\begin{align}
&U'(y)f'(t,s,y;\varsigma^\psi_*) - s^2 \bar V_{ss}(t,s)\varsigma^\psi_* w^\psi_y(t,s,y)\psi\notag\\
\label{eqn:lem:optimal volatility:pf:foc}
&\qquad+ \varsigma^\psi_* s^2\left(w^\psi_{ss}(t,s,y) + 2 \wt\vartheta \psi w^\psi_{sy}(t,s,y) + (\wt\vartheta \psi)^2 w^\psi_{yy}(t,s,y)\right)\psi
=0.
\end{align}

It remains to prove \eqref{eqn:lem:optimal volatility:estimate}. For each $\lambda=(t,s,y,\wt\vartheta) \in \bfD\times[-L,L]$, we define the function ${F_\lambda:(-\psi_1,\psi_1) \times [0,K] \to \RR}$, $(\psi, \varsigma) \mapsto F_\lambda(\psi,\varsigma)$, by the left-hand side of \eqref{eqn:lem:optimal volatility:pf:foc} with $\varsigma^\psi_*$ replaced by $\varsigma$. As $F_\lambda$ is a polynomial in $\psi$ and $f$ is $C^4$ in $\varsigma$, $F_\lambda$ is $C^3$. Fix $\lambda = (t,s,y,\wt\vartheta)$. By construction,
\begin{align}
\label{eqn:lem:optimal volatility:pf:implicit characterisation}
F_\lambda(\psi,\varsigma^\psi_*)
&= 0, \quad \psi \in (0,\psi_1).
\end{align}
Now, we want to invoke the implicit function theorem to show that $\psi\mapsto\varsigma^\psi_*$ can be extended via \eqref{eqn:lem:optimal volatility:pf:implicit characterisation} to a $C^3$ function on $(-\psi_1,\psi_1)$ (choosing $\psi_1$ smaller if necessary). To this end, it suffices to show that $\frac{\partial F_\lambda}{\partial\varsigma} \geq \varepsilon$ for some $\varepsilon > 0$. Using \eqref{eqn:ass:penalty function} and \eqref{eqn:lem:optimal volatility:pf:diffusion term estimate}, we obtain for all $\lambda = (t,s,y,\wt\vartheta) \in \bfD \times [-L,L]$ and $\psi\in(-\psi_1,\psi_1)$,
\begin{align*}
\frac{\partial F_\lambda}{\partial\varsigma}(\psi,\varsigma)
&= U'(y)f''(t,s,y;\varsigma) - s^2 \bar V_{ss}(t,s) w^\psi_y(t,s,y)\psi\\
&\qquad+ s^2\left(w^\psi_{ss}(t,s,y) + 2 \wt\vartheta \psi w^\psi_{sy}(t,s,y) + (\wt\vartheta \psi)^2 w^\psi_{yy}(t,s,y)\right)\psi\\
&\geq \frac{U'(y_u)}{K} - 2K''\vert\psi\vert.
\end{align*}
Hence, choosing $\psi_1$ smaller if necessary, there is $\varepsilon > 0$ such that 
\begin{align}
\label{eqn:lem:optimal volatility:pf:positive sigma derivative}
\frac{\partial F_\lambda}{\partial\varsigma}(\psi,\varsigma)
&\geq \varepsilon, \quad \lambda \in \bfD\times[-L,L], \psi \in (-\psi_1, \psi_1), \varsigma \in [0,K],
\end{align}
and the implicit function theorem implies that for each fixed $\lambda$, $\psi \mapsto \varsigma^\psi_*$ can be extended via \eqref{eqn:lem:optimal volatility:pf:implicit characterisation} to $(-\psi_1,\psi_1)$ and is $C^3$. As $F_\lambda(0,\varsigma) = U'(y)f'(t,s,y;\varsigma)$, the uniqueness assertion of the implicit function theorem together with \eqref{eqn:penalty function:minimum conditions} also yields that $\varsigma^0_* = \bar\sigma(t,s)$. To compute $\frac{\partial\varsigma^\psi_*}{\partial\psi}(0)$, we observe using the first-order condition \eqref{eqn:lem:optimal volatility:pf:foc} and the fact that $w^\psi_y(t,s,y) = U'(y)$ for $\psi = 0$, that
\begin{align*}
&f''(t,s,y;\bar\sigma(t,s)) \frac{\partial\varsigma^\psi_*}{\partial\psi}(0)
= \lim_{\psi \downarrow 0} \frac{1}{\psi}\left(f'(t,s,y;\varsigma^\psi_*) - f'(t,s,y;\varsigma^0_*)\right)
= \lim_{\psi \downarrow 0} \frac{1}{\psi}f'(t,s,y;\varsigma^\psi_*)\\
&\;=\lim_{\psi\downarrow 0} \frac{1}{U'(y)}\left(s^2 \bar V_{ss}(t,s)\varsigma^\psi_* w^\psi_y(t,s,y) - \varsigma^\psi_* s^2\big(w^\psi_{ss}(t,s,y) + 2 \wt\vartheta \psi w^\psi_{sy}(t,s,y) + (\wt\vartheta \psi)^2 w^\psi_{yy}(t,s,y)\big)\right)\\
&\;=s^2 \bar V_{ss}(t,s)\bar\sigma(t,s).
\end{align*}
Solving for $\frac{\partial\varsigma^\psi_*}{\partial\psi}(0)$ gives $\frac{\partial\varsigma^\psi_*}{\partial\psi}(0) = \wt\sigma(t,s,y)$. Now, a Taylor expansion of $\varsigma^\psi_*$ around $\psi = 0$ yields
\begin{align*}
\varsigma^\psi_*
= \bar\sigma(t,s) + \wt\sigma(t,s,y)\psi + \frac{1}{2}\frac{\partial^2\varsigma^\psi_*}{\partial\psi^2}(\psi_L)\psi^2
= \sigma^\psi(t,s,y) + \frac{1}{2}\frac{\partial^2\varsigma^\psi_*}{\partial\psi^2}(\psi_L)\psi^2
\end{align*}
for some $\psi_L = \psi_L(t,s,y,\wt\vartheta;\psi)$ between $0$ and $\psi \in (0,\psi_1)$. Comparing this with \eqref{eqn:lem:optimal volatility:estimate}, it remains to show that $\frac{\partial^2\varsigma^\psi_*}{\partial\psi^2}$ can be uniformly bounded in $\lambda = (t,s,y,\wt\vartheta)\in\bfD\times[-L,L]$ and $\psi\in(0,\psi_1)$. We already know from \eqref{eqn:lem:optimal volatility:pf:positive sigma derivative} that $\frac{\partial F_\lambda}{\partial\varsigma}$ is bounded away from zero, uniformly over ${\lambda \in \bfD \times [-L,L]}$, $\psi \in (-\psi_1,\psi_1)$, and $\varsigma \in [0,K]$. In addition, it is straightforward to check that our boundedness assumptions imply that all the second order partial derivatives of $F_\lambda$ are uniformly bounded (in the same sense as above). Therefore, $\frac{\partial^2\varsigma^\psi_*}{\partial\psi^2}$ is uniformly bounded by Lemma~\ref{lem:implicit function bounds} (with $M_1 = 0$). This completes the proof.
\end{proof}

Conversely, we next determine the maximiser of the Hamiltonian with respect to the strategy variable, show that it coincides at the leading order with the candidate $\theta^\psi$ from \eqref{eqn:candidate functions} and that it is independent of the volatility variable.

\begin{lemma}
\label{lem:optimal strategy}
There are constants $C_1>0$ and $\psi_1>0$ such that for every $(t,s,y) \in \bfD$, $\varsigma \in [0,K]$, and $\psi \in (0,\psi_1)$, the function
\begin{align}
\label{eqn:lem:optimal strategy:function}
\RR\ni\vartheta \mapsto H^\psi(t,s,y;\vartheta,\varsigma)
\end{align}
has a maximiser $\vartheta^\psi_*(t,s,y)$ independent of $\varsigma$ that satisfies the first-order condition
\begin{align}
\label{eqn:lem:optimal strategy:foc}
\frac{\partial H^\psi}{\partial \vartheta}(t,s,y;\vartheta^\psi_*(t,s,y),\varsigma)
&= 0,
\end{align}
and
\begin{align}
\label{eqn:lem:optimal strategy:estimate}
\left\vert \vartheta^\psi_*(t,s,y) - \theta^\psi(t,s,y) \right\vert
&\leq C_1 \psi^2.
\end{align}
\end{lemma}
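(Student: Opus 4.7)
The plan is to exploit the fact that $H^\psi(t,s,y;\vartheta,\varsigma)$ is a polynomial of degree at most two in $\vartheta$. A direct differentiation of \eqref{eqn:hamiltonian} gives
\begin{align*}
\frac{\partial H^\psi}{\partial\vartheta}(t,s,y;\vartheta,\varsigma) &= \varsigma^2 s^2 \bigl(w^\psi_{sy}(t,s,y) + (\vartheta - \bar V_s(t,s))\, w^\psi_{yy}(t,s,y)\bigr),\\
\frac{\partial^2 H^\psi}{\partial\vartheta^2}(t,s,y;\vartheta,\varsigma) &= \varsigma^2 s^2\, w^\psi_{yy}(t,s,y),
\end{align*}
so the stationary point of this quadratic, whenever well defined, is visibly independent of $\varsigma$. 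This structural fact does all the heavy lifting; unlike in Lemma~\ref{lem:optimal volatility}, no implicit-function machinery is needed.

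First I would secure the sign of $w^\psi_{yy}$ uniformly. Since $U'' < 0$ and $U$ is $C^3$ by Assumption~\ref{ass:second order}(v), $U''$ attains a negative maximum $-c_0 < 0$ on the compact interval $[y_l,y_u]$. Differentiating the ansatz \eqref{eqn:candidate value function} twice in $y$ and invoking the bounds \eqref{eqn:ass:pde derivatives} on $\wt w,\wh w$ yields $w^\psi_{yy}(t,s,y) = U''(y) + O(\psi)$ uniformly on $\bfD$. Choosing $\psi_1 > 0$ small enough therefore ensures $w^\psi_{yy}(t,s,y) \leq -c_0/2 < 0$ for all $(t,s,y) \in \bfD$ and $\psi \in (0,\psi_1)$.

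Next I define the candidate
$$\vartheta^\psi_*(t,s,y) := \bar V_s(t,s) - \frac{w^\psi_{sy}(t,s,y)}{w^\psi_{yy}(t,s,y)},$$
which is manifestly independent of $\varsigma$. For $\varsigma \in (0,K]$, the function in \eqref{eqn:lem:optimal strategy:function} is a strictly concave parabola in $\vartheta$ whose unique global maximiser is $\vartheta^\psi_*$, and the first-order condition \eqref{eqn:lem:optimal strategy:foc} holds by construction. For $\varsigma = 0$, the function is constant in $\vartheta$, so $\vartheta^\psi_*$ is trivially a maximiser and \eqref{eqn:lem:optimal strategy:foc} is automatic. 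In either case the conclusion holds with the $\varsigma$-independent choice above.

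The last step, and the only one requiring actual computation, is to verify \eqref{eqn:lem:optimal strategy:estimate}. I will Taylor expand $w^\psi_{sy}$ and $w^\psi_{yy}$ in $\psi$ using the explicit form \eqref{eqn:candidate value function}, obtaining $w^\psi_{sy}(t,s,y) = -\bigl(U''(y)\wt w_s(t,s,y) + U'(y)\wt w_{sy}(t,s,y)\bigr)\psi + O(\psi^2)$ and $w^\psi_{yy}(t,s,y) = U''(y) + O(\psi)$, with remainders controlled uniformly on $\bfD \times (0,\psi_1)$ by Assumption~\ref{ass:second order}(i),(v). Dividing the two expansions, justified by the uniform lower bound on $|w^\psi_{yy}|$ from the previous paragraph, yields
$$-\frac{w^\psi_{sy}(t,s,y)}{w^\psi_{yy}(t,s,y)} = \left(\wt w_s(t,s,y) + \frac{U'(y)}{U''(y)}\wt w_{sy}(t,s,y)\right)\psi + O(\psi^2) = \wt\theta(t,s,y)\,\psi + O(\psi^2),$$
so $\vartheta^\psi_*(t,s,y) = \theta^\psi(t,s,y) + O(\psi^2)$ uniformly on $\bfD$, which is the claimed bound. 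The only technical subtlety lies in keeping the remainders uniform in $(t,s,y)$, and this follows from the $C^3$ smoothness of $U$ on $[y_l,y_u]$ combined with the global bounds on the derivatives of $\wt w,\wh w$ in Assumption~\ref{ass:second order}(i).
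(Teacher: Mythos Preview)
Your proof is correct. The first two steps---identifying the quadratic structure in $\vartheta$ (hence independence from $\varsigma$) and securing $w^\psi_{yy}\le -c_0/2$ uniformly---coincide with the paper's argument. Where you diverge is in establishing the estimate \eqref{eqn:lem:optimal strategy:estimate}: the paper treats $\vartheta^\psi_*$ as implicitly defined by $F_\lambda(\psi,\delta)=w^\psi_{sy}+\delta\,w^\psi_{yy}=0$, invokes the implicit function theorem to obtain smoothness in $\psi$, computes $\partial_\psi\vartheta^\psi_*|_{\psi=0}=\wt\theta$, and then appeals to an auxiliary lemma on uniform bounds for implicitly defined functions to control the second derivative. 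You instead exploit that $w^\psi$ is an \emph{explicit} quadratic polynomial in $\psi$, so $w^\psi_{sy}$ and $w^\psi_{yy}$ are known polynomials in $\psi$ with uniformly bounded coefficients, and a direct expansion of the ratio $-w^\psi_{sy}/w^\psi_{yy}$ suffices. Your route is more elementary and avoids the implicit-function machinery entirely; the paper's route has the advantage of being parallel to the proof of Lemma~\ref{lem:optimal volatility}, where the implicit function theorem is genuinely needed because $f$ is not polynomial in $\varsigma$.
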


\begin{proof}
By the definition of $H^\psi$ in \eqref{eqn:hamiltonian}, finding the maximiser of \eqref{eqn:lem:optimal strategy:function} is equivalent to finding the maximiser of
\begin{align}
\label{eqn:lem:optimal strategy:pf:function}
\RR\ni\vartheta \mapsto 2 (\vartheta - \bar V_s(t,s)) w^\psi_{sy}(t,s,y) + (\vartheta - \bar V_s(t,s))^2 w^\psi_{yy}(t,s,y).
\end{align}
This is simply a quadratic equation in $\vartheta$ and independent of $\varsigma$. First, we show that the coefficient of the quadratic term is uniformly negative for small $\psi$. Note from the definition of $w^\psi$ in \eqref{eqn:candidate value function} that
\begin{align}
\label{eqn:lem:optimal strategy:pf:quadratic coefficient}
w^\psi_{yy}(t,s,y)
&= U''(y) + \frac{\partial^2}{\partial y^2}\Big(-U'(y)\wt w(t,s,y) + U'(y) \wh w(t,s,y) \psi\Big) \psi.
\end{align}
Since $U$ is $C^3$ and $U''<0$ there is $\varepsilon > 0$ such that $U'' \leq -2\varepsilon$ on $[y_l, y_u]$. By \eqref{eqn:ass:pde derivatives}, the partial derivative on the right-hand side of \eqref{eqn:lem:optimal strategy:pf:quadratic coefficient} can be bounded uniformly in $(t,s,y)\in\bfD$ and ${\psi \in (-1,1)}$. Hence, there is $\psi_1\in(0,1)$ such that for all $(t,s,y)\in \bfD$ and ${\psi \in (-\psi_1,\psi_1)}$,
\begin{align}
\label{eqn:lem:optimal strategy:pf:negative quadratic coefficient}
w^\psi_{yy}(t,s,y)
&\leq - \varepsilon.
\end{align}
Therefore for each $\psi \in (-\psi_1,\psi_1)$, \eqref{eqn:lem:optimal strategy:pf:function} has a maximiser $\vartheta^\psi_*(t,s,y)$ (which is also a maximiser of \eqref{eqn:lem:optimal strategy:function} if $\psi\neq 0$) that satisfies the first-order condition
\begin{align}
\label{eqn:lem:optimal strategy:pf:foc}
w^\psi_{sy}(t,s,y) + (\vartheta^\psi_*(t,s,y) - \bar V_s(t,s)) w^\psi_{yy}(t,s,y)
&= 0,
\end{align}
which is equivalent to \eqref{eqn:lem:optimal strategy:foc}.

To prove \eqref{eqn:lem:optimal strategy:estimate}, we argue similarly as in the proof of Lemma~\ref{lem:optimal volatility} using the implicit function theorem. Define for each $\lambda = (t,s,y)\in\bfD$, the function $F_\lambda:(-\psi_1,\psi_1)\times \RR \to \RR$ by
\begin{align*}
F_\lambda(\psi,\delta)
&= w^\psi_{sy}(t,s,y) + \delta w^\psi_{yy}(t,s,y).
\end{align*}
By construction, $F_\lambda$ is a polynomial in $(\psi, \delta)$ and hence $C^\infty$. Fix $\lambda = (t,s,y) \in \bfD$. By the first-order condition \eqref{eqn:lem:optimal strategy:pf:foc}, we have
\begin{align}
\label{eqn:lem:optimal strategy:pf:implicit characterisation}
F_\lambda(\psi,\vartheta^\psi_* - \bar V_s(t,s))
&= 0, \quad \psi \in (-\psi_1,\psi_1),
\end{align}
where $\vartheta^\psi_* = \vartheta^\psi_*(t,s,y)$. Since $\frac{\partial F_\lambda}{\partial \delta}(\psi,\delta) = w^\psi_{yy}(t,s,y) \leq -\varepsilon < 0$ for all $\psi \in (-\psi_1,\psi_1)$ by \eqref{eqn:lem:optimal strategy:pf:negative quadratic coefficient}, the implicit function theorem yields that $\vartheta^\psi_*$ is $C^\infty$ in $\psi$. As $F_\lambda(0,\delta) = \delta U''(y)$, the uniqueness assertion of the implicit function theorem also gives $\vartheta^0_* = \bar V_s(t,s)$. To compute $\frac{\partial\vartheta^\psi_*}{\partial\psi}(0)$, we divide by $\psi > 0$ on both sides of \eqref{eqn:lem:optimal strategy:pf:implicit characterisation} and let $\psi \downarrow 0$. Using also the definition of  $w^\psi$ in \eqref{eqn:candidate value function}, we obtain
\begin{align*}
0
&= \lim_{\psi \downarrow 0}\frac{1}{\psi} (w^\psi_{sy}(t,s,y) + (\vartheta^\psi_* - \bar V_s(t,s))w^\psi_{yy})\\
&= -\frac{\partial^2}{\partial s \partial y} \left(U'(y) \wt w(t,s,y)\right) + \lim_{\psi\downarrow0} \frac{\vartheta^\psi_* - \vartheta^0_*}{\psi} U''(y)\\
&= - U'(y) \wt w_{sy}(t,s,y) - U''(y) \wt w_s(t,s,y) + \frac{\partial\vartheta^\psi_*}{\partial\psi}(0) U''(y).
\end{align*}
Solving for $\frac{\partial\vartheta^\psi_*}{\partial\psi}(0)$ gives $\frac{\partial\vartheta^\psi_*}{\partial\psi}(0) = \wt\theta(t,s,y)$. Now, expanding $\vartheta^\psi_*$ around $\psi = 0$ yields
\begin{align*}
\vartheta^\psi_*
&= \bar V_s(t,s) + \wt\theta(t,s,y)\psi + \frac{1}{2} \frac{\partial^2 \vartheta^\psi_*}{\partial\psi^2}(\psi_L)\psi^2
= \theta^\psi(t,s,y) + \frac{1}{2} \frac{\partial^2 \vartheta^\psi_*}{\partial\psi^2}(\psi_L)\psi^2
\end{align*}
for some $\psi_L=\psi_L(t,s,y;\psi)$ between $0$ and $\psi \in (0,\psi_1)$. Comparing this with \eqref{eqn:lem:optimal strategy:estimate}, it remains to show that $\frac{\partial^2 \vartheta^\psi_*}{\partial\psi^2}$ can be uniformly bounded in $(t,s,y) \in \bfD$ and $\psi \in (0,\psi_1)$. We already know that $\frac{\partial F_\lambda}{\partial \delta} = w^\psi_{yy}(t,s,y)$ is bounded away from zero, uniformly over $\lambda \in \bfD$, $\psi \in (-\psi_1,\psi_1)$, and $\delta\in\RR$. In addition, using our boundedness assumptions, it is straightforward to check that there is $M > 0$ such that for all $\lambda \in \bfD$, $\psi\in(-\psi_1,\psi_1)$, and $\delta\in\RR$,
\begin{align*}
\left\vert\frac{\partial^2 F_\lambda}{\partial \psi}(\psi,\delta)\right\vert,
\left\vert\frac{\partial^2 F_\lambda}{\partial \psi^2}(\psi,\delta)\right\vert
&\leq M(1+\delta),
\quad
\left\vert\frac{\partial^2 F_\lambda}{\partial \psi\partial\delta}(\psi,\delta)\right\vert
\leq M,
\end{align*}
and that $\frac{\partial^2 F_\lambda}{\partial\delta^2} \equiv 0$. Then it follows from Lemma~\ref{lem:implicit function bounds} (with $y_\lambda(\psi) = \vartheta^\psi_*(t,s,y) - \bar V_s(t,s)$) that there is $M'>0$ such that for all $\lambda \in \bfD$ and $\psi \in (-\psi_1,\psi_1)$,
\begin{align*}
\left\vert\frac{\partial^2 \vartheta^\psi_*}{\partial\psi^2}(\psi)\right\vert
&= \left\vert\frac{\partial^2(\vartheta^\psi_* - \bar V_s(t,s))}{\partial\psi^2}(\psi)\right\vert
\leq M'(1+\vert\vartheta^\psi_* - \bar V_s(t,s)\vert).
\end{align*}
But $\vert\vartheta^\psi_* - \bar V_s(t,s)\vert = \left\vert\frac{w^\psi_{sy}(t,s,y)}{w^\psi_{yy}(t,s,y)}\right\vert \leq \frac{K'}{\varepsilon}$ by \eqref{eqn:lem:optimal strategy:pf:foc}, \eqref{eqn:lem:optimal strategy:pf:negative quadratic coefficient}, and \eqref{eqn:candidate value function:bounds}. This completes the proof.
\end{proof}

We now provide an asymptotic expansion of the HJBI equation at both the delta hedge and the candidate strategy $\theta^\psi$, both with respect to the candidate volatility $\sigma^\psi$.

\begin{lemma}
\label{lem:horrible calculation}
As $\psi \downarrow 0$, uniformly in $(t,s,y)\in\bfD$,
\begin{align}
\label{eqn:lem:horrible calculation:first order}
w^\psi_t(t,s,y) + H^\psi(t,s,y;\bar V_s(t,s),\sigma^\psi(t,s,y))
&= O(\psi^2),\\
\label{eqn:lem:horrible calculation:second order}
w^\psi_t(t,s,y) + H^\psi(t,s,y;\theta^\psi(t,s,y),\sigma^\psi(t,s,y))
&= O(\psi^3).
\end{align}
\end{lemma}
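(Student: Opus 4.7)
The strategy is purely analytic: substitute the explicit polynomial $w^\psi = U(y) - U'(y)\wt w\psi + U'(y)\wh w\psi^2$ and the affine ansatz $\sigma^\psi = \bar\sigma + \wt\sigma\psi$, $\theta^\psi = \bar V_s + \wt\theta\psi$ into $w^\psi_t + H^\psi$, carry out a Taylor expansion of each ingredient in powers of $\psi$, and verify that the resulting polynomials vanish to the required order as consequences of the PDEs~\eqref{eqn:PDE:first order}--\eqref{eqn:PDE:second order} and the definitions~\eqref{eqn:theta tilde}--\eqref{eqn:sigma tilde}. Uniformity in $(t,s,y)\in\bfD$ is automatic once one has gathered the remainder terms, since Assumption~\ref{ass:second order} bounds $\bar V_{ss}$, $\bar\sigma$, $f^{(k)}$, $U^{(k)}$, $\wt w$, $\wh w$ and all their relevant derivatives uniformly on $\bfD\times[0,K]$, so all Taylor remainders can be absorbed into uniform $O(\psi^k)$ terms.

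\emph{Preliminary expansions.} First I would record the expansions I will use repeatedly: since $f(t,s,y;\bar\sigma)=f'(t,s,y;\bar\sigma)=0$ by \eqref{eqn:penalty function:minimum conditions}, one has
\[
\tfrac{1}{\psi}f(t,s,y;\bar\sigma+\wt\sigma\psi) = \tfrac{1}{2}f''(\bar\sigma)\wt\sigma^2\psi + \tfrac{1}{6}f^{(3)}(\bar\sigma)\wt\sigma^3\psi^2 + O(\psi^3),
\]
together with $\bar\sigma^2-(\sigma^\psi)^2 = -2\bar\sigma\wt\sigma\psi - \wt\sigma^2\psi^2$ and $(\sigma^\psi)^2 = \bar\sigma^2 + 2\bar\sigma\wt\sigma\psi + \wt\sigma^2\psi^2$. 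Direct differentiation of the defining formula for $w^\psi$ gives $w^\psi_t = -U'\wt w_t\psi + U'\wh w_t\psi^2$, $w^\psi_{ss} = -U'\wt w_{ss}\psi + U'\wh w_{ss}\psi^2$, $w^\psi_y = U' - (U''\wt w + U'\wt w_y)\psi + O(\psi^2)$, $w^\psi_{sy} = -(U''\wt w_s + U'\wt w_{sy})\psi + O(\psi^2)$, and $w^\psi_{yy} = U'' + O(\psi)$, all with bounded remainders by \eqref{eqn:ass:pde derivatives} and the regularity of $U$.

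\emph{First identity.} With $\vartheta=\bar V_s$ the $(\vartheta-\bar V_s)$ terms in $H^\psi$ vanish, so only the $\frac{1}{\psi}U'f$, the $\frac{1}{2}s^2\bar V_{ss}(\bar\sigma^2-\varsigma^2)w^\psi_y$, and the $\frac{1}{2}\varsigma^2 s^2 w^\psi_{ss}$ contributions survive. Collecting the $O(\psi)$ coefficient of $w^\psi_t + H^\psi(\bar V_s,\sigma^\psi)$ yields
\[
U'(y)\Bigl[-\wt w_t + \tfrac{1}{2}f''(\bar\sigma)\wt\sigma^2 - \bar\sigma\wt\sigma s^2 \bar V_{ss} - \tfrac{1}{2}\bar\sigma^2 s^2 \wt w_{ss}\Bigr].
\]
Substituting $\wt\sigma = \bar\sigma s^2\bar V_{ss}/f''(\bar\sigma)$ simplifies $\tfrac{1}{2}f''(\bar\sigma)\wt\sigma^2 - \bar\sigma\wt\sigma s^2\bar V_{ss}$ to $-\wt g$, so the bracketed expression equals $-(\wt w_t + \tfrac{1}{2}\bar\sigma^2 s^2\wt w_{ss} + \wt g)$, which vanishes by \eqref{eqn:PDE:first order}. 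The remainder is uniformly $O(\psi^2)$.

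\emph{Second identity.} At $\vartheta=\theta^\psi$ the previous computation still produces $0$ at orders $1$ and $\psi$; the extra $\wt\theta\psi$ contribution from $w^\psi_{sy}$ and the $(\wt\theta\psi)^2$ contribution from $w^\psi_{yy}$ are both $O(\psi^2)$ and add, together with the $O(\psi^2)$ parts already present (the $\tfrac{1}{6}f^{(3)}\wt\sigma^3$ term, the $-\tfrac{1}{2}\wt\sigma^2 s^2\bar V_{ss}\cdot U'$ term, the $-\bar\sigma\wt\sigma s^2\bar V_{ss}\cdot(U''\wt w + U'\wt w_y)$ cross term, the $\wh w_t$, $\tfrac{1}{2}\bar\sigma^2 s^2\wh w_{ss}$ pieces, and the $\bar\sigma\wt\sigma s^2\wt w_{ss}$ piece from $w^\psi_{ss}$ at $O(\psi^2)$), to
\[
U'(y)\Bigl[\wh w_t + \tfrac{1}{2}\bar\sigma^2 s^2 \wh w_{ss} + \wh g\Bigr],
\]
after substituting $\wt\theta = \wt w_s + (U'/U'')\wt w_{sy}$ and matching term by term against the definition of $\wh g$ in \eqref{eqn:source term:second order}. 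By \eqref{eqn:PDE:second order} this vanishes, leaving a remainder of order $O(\psi^3)$.

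The main obstacle is the bookkeeping in this last paragraph: six disparate $O(\psi^2)$ contributions must be reorganised into the exact source term $\wh g$. The calculation is mechanical once one fixes a systematic order (expand all coefficients first, then collect by powers of $\psi$), but it is here that the seemingly baroque form of \eqref{eqn:source term:second order} -- in particular the $U''/U'$ factor on the last line and the $\wt\theta$-, $\wt w_y$-, $\wt w_{sy}$-dependencies -- is exactly what is needed to cancel the $O(\psi^2)$ coefficient.
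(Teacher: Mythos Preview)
Your proposal is correct and follows essentially the same approach as the paper: Taylor expand $f$ about $\bar\sigma$ using \eqref{eqn:penalty function:minimum conditions}, substitute the polynomial ansatz for $w^\psi$, collect powers of $\psi$, and observe that the $O(\psi)$ and $O(\psi^2)$ coefficients vanish precisely because of the PDEs \eqref{eqn:PDE:first order}--\eqref{eqn:PDE:second order}, with uniformity inherited from the bounds in Assumption~\ref{ass:second order}. The paper is in fact terser than you are at the $O(\psi^2)$ step---it simply states that ``a lengthy calculation shows that the $O(\psi^2)$ term reduces to the PDE \eqref{eqn:PDE:second order} for $\wh w$'' and defers the bookkeeping to a Mathematica file---so your enumeration of the six contributing pieces is already more explicit than the published argument.
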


\begin{proof}
As $f$ is $C^4$ in $\varsigma$, Taylor's theorem together with \eqref{eqn:penalty function:minimum conditions} yields
\begin{align}
f(t,s,y;\varsigma)
&= \frac{1}{2}f''(t,s,y;\bar\sigma(t,s))(\varsigma-\bar\sigma(t,s))^2\notag\\
\label{eqn:lem:horrible calculation:pf:f expansion}
&\qquad+ \frac{1}{6} f^{(3)}(t,s,y;\bar\sigma(t,s))(\varsigma-\bar\sigma(t,s))^3
+ \frac{1}{24}f^{(4)}(t,s,y;\varsigma_L)(\varsigma-\bar\sigma(t,s))^4
\end{align}
for some $\varsigma_L = \varsigma_L(t,s,y;\varsigma)$ between $\bar\sigma(t,s)$ and $\varsigma\in[0,K]$. Recall that $f^{(4)}$ is uniformly bounded by \eqref{eqn:ass:penalty function}. Hence, using \eqref{eqn:lem:horrible calculation:pf:f expansion} for the candidate $\sigma^\psi(t,s,y)$, we obtain as $\psi\downarrow0$, uniformly in $(t,s,y)\in\bfD$,
\begin{align*}
&\frac{1}{\psi}f(t,s,y;\sigma^\psi(t,s,y))\\
&\; = \frac{1}{2}f''(t,s,y;\bar\sigma(t,s))\wt\sigma(t,s,y)^2 \psi
+ \frac{1}{6} f^{(3)}(t,s,y;\bar\sigma(t,s))\wt\sigma(t,s,y)^3 \psi^2 + O(\psi^3).
\end{align*}
Using this, it is easily seen from the definitions of $w^\psi$ and $H^\psi$ that the left-hand sides of \eqref{eqn:lem:horrible calculation:first order}--\eqref{eqn:lem:horrible calculation:second order} reduce to polynomials in $\psi$ up to terms of order $O(\psi^3)$. Using our boundedness assumptions, it is also straightforward to check that all the coefficients of these polynomials are uniformly bounded in $(t,s,y) \in \bfD$. Therefore, it suffices to check that the coefficients of the $O(1)$, $O(\psi)$ and, in the case of \eqref{eqn:lem:horrible calculation:second order}, $O(\psi^2)$ terms vanish. One readily verifies that the $O(1)$ term always vanishes and that the $O(\psi)$ term reduces in both cases to the PDE \eqref{eqn:PDE:first order} for $\wt w$. Finally, in the case of \eqref{eqn:lem:horrible calculation:second order}, a lengthy calculation shows that the $O(\psi^2)$ term reduces to the PDE \eqref{eqn:PDE:second order} for $\wh w$.\footnote{A Mathematica file containing these calculations is available from the authors upon request.}
\end{proof}

We next analyse the relevant minimised and maximised Hamiltonians if we plug in the leading-order candidate strategies and candidate volatilities from \eqref{eqn:candidate functions}, respectively.

\begin{lemma}
\label{lem:HJBI}
There are constants $C > 0$ and $\psi_0 > 0$ such that, for every $(t,s,y)\in \bfD$ and $\psi \in (0,\psi_0)$:
\begin{align}
\label{eqn:lem:HJBI:optimal strategy:first order}
w^\psi_t(t,s,y) + \inf_{\varsigma \in [0,K]} H^\psi(t,s,y;\bar V_s(t,s),\varsigma)
&\geq -C \psi^2,\\
\label{eqn:lem:HJBI:optimal strategy:second order}
w^\psi_t(t,s,y) + \inf_{\varsigma \in [0,K]} H^\psi(t,s,y;\theta^\psi(t,s,y),\varsigma)
&\geq -C \psi^3,\\
\label{eqn:lem:HJBI:optimal volatility}
w^\psi_t(t,s,y) + \sup_{\vartheta\in\RR} H^\psi(t,s,y;\vartheta,\sigma^\psi(t,s,y))
&\leq C \psi^3.
\end{align}
\end{lemma}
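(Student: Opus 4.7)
The three bounds follow in parallel from combining the point-wise optimizers of $H^\psi$ identified in Lemmas \ref{lem:optimal volatility} and \ref{lem:optimal strategy} with the asymptotic expansions of $H^\psi$ at the leading-order candidates provided by Lemma \ref{lem:horrible calculation}. The link between the two is a second-order Taylor expansion of $H^\psi$ around its \emph{exact} optimizer: the first-order term vanishes by the relevant first-order condition, and the quadratic remainder is controlled via the $O(\psi^2)$ distance between exact optimizer and leading-order candidate provided by \eqref{eqn:lem:optimal volatility:estimate} and \eqref{eqn:lem:optimal strategy:estimate}.

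\emph{Inequalities \eqref{eqn:lem:HJBI:optimal strategy:first order} and \eqref{eqn:lem:HJBI:optimal strategy:second order}.} For \eqref{eqn:lem:HJBI:optimal strategy:first order} I apply Lemma \ref{lem:optimal volatility} with $\wt\vartheta = 0$ to obtain the minimizer $\varsigma^\psi_* \in [0,K]$ of $\varsigma \mapsto H^\psi(t,s,y;\bar V_s, \varsigma)$; for \eqref{eqn:lem:HJBI:optimal strategy:second order} I apply it with $\wt\vartheta = \wt\theta(t,s,y)$, which is uniformly bounded on $\bfD$ thanks to \eqref{eqn:theta tilde} and Assumption \ref{ass:second order}. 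In either case, writing $\vartheta \in \{\bar V_s, \theta^\psi\}$ for the corresponding strategy, I expand
\begin{align*}
H^\psi(\vartheta, \sigma^\psi) = H^\psi(\vartheta, \varsigma^\psi_*) + \frac{\partial H^\psi}{\partial\varsigma}(\vartheta, \varsigma^\psi_*)(\sigma^\psi - \varsigma^\psi_*) + \frac{1}{2}\frac{\partial^2 H^\psi}{\partial\varsigma^2}(\vartheta, \xi)(\sigma^\psi - \varsigma^\psi_*)^2
\end{align*}
for some intermediate $\xi$. The linear term vanishes by \eqref{eqn:lem:optimal volatility:foc}. Direct differentiation of \eqref{eqn:hamiltonian} shows that $\frac{\partial^2 H^\psi}{\partial\varsigma^2} = \frac{U'(y)}{\psi}f''(t,s,y;\xi) + O(1)$ uniformly on $\bfD\times[0,K]$, while $(\sigma^\psi - \varsigma^\psi_*)^2 = O(\psi^4)$ by \eqref{eqn:lem:optimal volatility:estimate}; thus $H^\psi(\vartheta, \varsigma^\psi_*) = H^\psi(\vartheta, \sigma^\psi) + O(\psi^3)$. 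Combining this with \eqref{eqn:lem:horrible calculation:first order} yields \eqref{eqn:lem:HJBI:optimal strategy:first order}, and combining it with \eqref{eqn:lem:horrible calculation:second order} yields \eqref{eqn:lem:HJBI:optimal strategy:second order}.

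\emph{Inequality \eqref{eqn:lem:HJBI:optimal volatility}.} By Lemma \ref{lem:optimal strategy} the supremum of $\vartheta \mapsto H^\psi(t,s,y;\vartheta,\sigma^\psi)$ over $\vartheta \in \RR$ is attained at a point $\vartheta^\psi_*$ that does not depend on $\varsigma$. Taylor-expanding $H^\psi(\theta^\psi, \sigma^\psi)$ around $\vartheta^\psi_*$, the linear term vanishes by \eqref{eqn:lem:optimal strategy:foc}, while the second derivative $\frac{\partial^2 H^\psi}{\partial\vartheta^2} = \varsigma^2 s^2\, w^\psi_{yy}$ is uniformly bounded on $\bfD\times\RR\times[0,K]$ by \eqref{eqn:candidate value function:bounds} and \eqref{eqn:process bounds}. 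Together with $|\theta^\psi - \vartheta^\psi_*|^2 = O(\psi^4)$ from \eqref{eqn:lem:optimal strategy:estimate}, this yields $H^\psi(\vartheta^\psi_*, \sigma^\psi) = H^\psi(\theta^\psi, \sigma^\psi) + O(\psi^4)$, and \eqref{eqn:lem:horrible calculation:second order} then delivers \eqref{eqn:lem:HJBI:optimal volatility}.

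The only genuine subtlety is the singular $1/\psi$ factor in the penalty part of $H^\psi$: it inflates the quadratic remainder in the $\varsigma$-direction from $O(\psi^4)$ to $O(\psi^3)$ and is precisely why the first bound degrades to $O(\psi^2)$ while the other two remain of order $O(\psi^3)$. Everything else is a uniform bookkeeping exercise, using only the bounds supplied by Assumption \ref{ass:second order} and \eqref{eqn:candidate value function:bounds}; the common $\psi_0$ can be taken as the minimum of the two $\psi_1$'s furnished by Lemmas \ref{lem:optimal volatility} and \ref{lem:optimal strategy}.
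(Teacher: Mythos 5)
Your proof is correct and follows essentially the same route as the paper: Taylor-expand $H^\psi$ around the exact optimizers supplied by Lemmas~\ref{lem:optimal volatility} and \ref{lem:optimal strategy}, kill the linear term via the first-order conditions, bound the second derivative by $O(1/\psi)$ in the $\varsigma$-direction and $O(1)$ in the $\vartheta$-direction, and combine the resulting $O(\psi^3)$ (resp.\ $O(\psi^4)$) remainders with the expansions of Lemma~\ref{lem:horrible calculation}. The only slight inaccuracy is your closing remark: the first bound degrades to $O(\psi^2)$ because \eqref{eqn:lem:horrible calculation:first order} is itself only $O(\psi^2)$ (the pure delta hedge fails to match the next-order term), not because of the $1/\psi$-inflated quadratic remainder, which is $O(\psi^3)$ in both of the first two inequalities.
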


\begin{proof}
We first derive uniform bounds on the second-order partial derivatives of $H^\psi$ with respect to $\vartheta$ and $\varsigma$. As in the proof of Lemma~\ref{lem:optimal strategy} (cf.~\eqref{eqn:lem:optimal strategy:pf:negative quadratic coefficient}), there is $\psi_0\in(0,1)$ such that for all $(t,s,y)\in \bfD$ and $\psi \in (0,\psi_0)$, we have $w^\psi_{yy}(t,s,y) \leq -\varepsilon$. Together with \eqref{eqn:ass:reference price}--\eqref{eqn:ass:penalty function} and \eqref{eqn:candidate value function:bounds}, this implies that there is $K_1>0$ such that for all $(t,s,y)\in\bfD$, $\psi \in (0,\psi_0)$, $\wt\vartheta \in \RR$, and $\varsigma \in [0,K]$,
\begin{align}
\frac{\partial^2 H^\psi}{\partial \varsigma^2}(t,s,y;\bar V_s(t,s)+\wt\vartheta\psi,\varsigma)
&= \frac{1}{\psi}U'(y) f''(t,s,y;\varsigma) - s^2 \bar V_{ss}(t,s)w^\psi_y(t,s,y) \psi\notag\\
\label{eqn:lem:HJBI:pf:2nd derivative sigma:upper bound}
&\qquad + s^2\left(w^\psi_{ss}(t,s,y) + 2\wt\vartheta\psi w^\psi_{sy} + \wt\vartheta^2 \psi^2 w^\psi_{yy}(t,s,y) \right)
\leq \frac{K_1}{\psi}.
\end{align}
In view of \eqref{eqn:candidate value function:bounds}, choosing $K_1$ larger if necessary, we also have for all $(t,s,y)\in\bfD$, $\psi \in (0,\psi_0)$, $\vartheta \in \RR$, and $\varsigma \in [0,K]$,
\begin{align}
\label{eqn:lem:HJBI:pf:2nd derivative theta:lower bound}
\frac{\partial^2 H^\psi}{\partial \vartheta^2}(t,s,y;\vartheta,\varsigma)
&=\varsigma^2 s^2 w^\psi_{yy}(t,s,y)
\geq -K_1.
\end{align}
Also note that by definition of $\wt\theta$ in \eqref{eqn:theta tilde} and \eqref{eqn:ass:pde derivatives}, there is $L > 0$ such that $\vert\wt\theta\vert \leq L$ on $\bfD$. Choosing $\psi_0$ smaller if necessary, we may also assume that the estimates \eqref{eqn:lem:optimal volatility:estimate} and \eqref{eqn:lem:optimal strategy:estimate} of Lemmas~\ref{lem:optimal volatility}--\ref{lem:optimal strategy} (with that value of $L$) hold for $\psi \in (0,\psi_0)$ and that, using Lemma~\ref{lem:horrible calculation}, there is $C_2 > 0$ such that for all $(t,s,y)\in\bfD$ and $\psi \in (0,\psi_0)$,
\begin{align}
\label{eqn:lem:HJBI:pf:first order}
\left\vert w^\psi_t(t,s,y) + H^\psi(t,s,y;\bar V_s(t,s),\sigma^\psi(t,s,y)) \right\vert
&\leq C_2 \psi^2,\\
\label{eqn:lem:HJBI:pf:second order}
\left\vert w^\psi_t(t,s,y) + H^\psi(t,s,y;\theta^\psi(t,s,y),\sigma^\psi(t,s,y)) \right\vert
&\leq C_2 \psi^3.
\end{align}

We start with the proof of \eqref{eqn:lem:HJBI:optimal strategy:first order}--\eqref{eqn:lem:HJBI:optimal strategy:second order}. Fix $(t,s,y)\in\bfD$, $\psi\in(0,\psi_0)$, and $\wt\vartheta \in [-L,L]$ and define the function $h:[0,K] \to \RR$ by $h(\varsigma) = H^\psi(t,s,y;\bar V_s(t,s) + \wt\vartheta\psi, \varsigma)$. Let $\varsigma^\psi_* = \varsigma^\psi_*(t,s,y, \wt\vartheta)$ be the minimiser of $h$ from Lemma~\ref{lem:optimal volatility}. Expanding $h(\sigma^\psi)$ around $\varsigma^\psi_*$ and using the first-order condition \eqref{eqn:lem:optimal volatility:foc} gives
\begin{align}
\label{eqn:lem:HJBI:pf:sigma expansion}
h(\sigma^\psi)
&= h(\varsigma^\psi_*) + \frac{1}{2} \frac{\partial^2 H^\psi}{\partial \varsigma^2}(\varsigma_L)(\sigma^\psi - \varsigma^\psi_*)^2
\end{align}
for some $\varsigma_L = \varsigma_L(t,s,y,\wt\vartheta;\psi)$ between $\sigma^\psi = \sigma^\psi(t,s,y)$ and $\varsigma^\psi_*$. Rearranging \eqref{eqn:lem:HJBI:pf:sigma expansion} and using \eqref{eqn:lem:HJBI:pf:2nd derivative sigma:upper bound} as well as \eqref{eqn:lem:optimal volatility:estimate} yields
$h(\varsigma^\psi_*) \geq h(\sigma^\psi) - \frac{1}{2}K_1C_1^2 \psi^3$. We conclude that for all $(t,s,y)\in\bfD$, $\wt\vartheta\in[-L,L]$, and $\psi\in(0,\psi_0)$,
\begin{align}
\label{eqn:lem:HJBI:pf:lower bound}
H^\psi(t,s,y;\bar V_s(t,s) + \wt\vartheta\psi, \varsigma^\psi_*(t,s,y,\wt\vartheta))
&\geq H^\psi(t,s,y;\bar V_s(t,s) + \wt\vartheta\psi, \sigma^\psi(t,s,y)) -\frac{1}{2}K_1C_1^2 \psi^3.
\end{align}
Combining \eqref{eqn:lem:HJBI:pf:lower bound} for the choice $\wt\vartheta = \wt\theta(t,s,y)$  with \eqref{eqn:lem:HJBI:pf:second order} yields
\begin{align*}
&w^\psi_t(t,s,y) + \inf_{\varsigma \in [0,K]} H^\psi(t,s,y;\theta^\psi(t,s,y),\varsigma)\\
&\;= w^\psi_t(t,s,y) + H^\psi(t,s,y;\theta^\psi(t,s,y),\varsigma^\psi_*(t,s,y,\wt\theta(t,s,y)))\\
&\;\geq w^\psi_t(t,s,y) + H^\psi(t,s,y;\bar V_s(t,s) + \wt\vartheta\psi, \sigma^\psi(t,s,y)) - \frac{1}{2}K_1C_1^2 \psi^3 \\
&\;\geq -\big(C_2 + \frac{1}{2}K_1C_1^2 \big)\psi^3.
\end{align*}
This proves \eqref{eqn:lem:HJBI:optimal strategy:second order}. \eqref{eqn:lem:HJBI:optimal strategy:first order} follows analogously from \eqref{eqn:lem:HJBI:pf:lower bound} with $\wt\vartheta = 0$ and \eqref{eqn:lem:HJBI:pf:first order}.

The proof of \eqref{eqn:lem:HJBI:optimal volatility} is an almost verbatim repetition of the previous paragraph with the roles of $\theta$ and $\sigma$ exchanged. We give it for completeness. Fix $(t,s,y)\in\bfD$ and $\psi\in(0,\psi_0)$ and define the function $h:\RR \to \RR$ by $h(\vartheta) = H^\psi(t,s,y;\vartheta, \sigma^\psi(t,s,y))$. Let $\vartheta^\psi_* = \vartheta^\psi_*(t,s,y)$ be the maximiser of $h$ from Lemma~\ref{lem:optimal strategy}. Expanding $h(\theta^\psi)$ around $\vartheta^\psi_*$ and using the first-order condition \eqref{eqn:lem:optimal strategy:foc} gives
\begin{align}
\label{eqn:lem:HJBI:pf:theta expansion}
h(\theta^\psi)
&= h(\vartheta^\psi_*) + \frac{1}{2} \frac{\partial^2 H^\psi}{\partial \vartheta^2}(\vartheta_L)(\theta^\psi - \vartheta^\psi_*)^2
\end{align}
for some $\vartheta_L = \vartheta_L(t,s,y;\psi)$ between $\theta^\psi = \theta^\psi(t,s,y)$ and $\vartheta^\psi_*$. Rearranging \eqref{eqn:lem:HJBI:pf:theta expansion} and using \eqref{eqn:lem:HJBI:pf:2nd derivative theta:lower bound} as well as \eqref{eqn:lem:optimal strategy:estimate} yields
$h(\vartheta^\psi_*) \leq h(\theta^\psi) + \frac{1}{2}K_1C_1^2 \psi^4$. We conclude that for all $(t,s,y)\in\bfD$ and $\psi\in(0,\psi_0)$,
\begin{align}
\label{eqn:lem:HJBI:pf:upper bound}
H^\psi(t,s,y;\vartheta^\psi_*(t,s,y),\sigma^\psi(t,s,y))
&\leq H^\psi(t,s,y;\theta^\psi(t,s,y), \sigma^\psi(t,s,y)) + \frac{1}{2}K_1C_1^2 \psi^4.
\end{align}
Combining \eqref{eqn:lem:HJBI:pf:upper bound}  with \eqref{eqn:lem:HJBI:pf:second order} proves \eqref{eqn:lem:HJBI:optimal volatility} via
\begin{align*}
&w^\psi_t(t,s,y) + \sup_{\vartheta\in\RR} H^\psi(t,s,y;\vartheta,\sigma^\psi(t,s,y))\\
&\;= w^\psi_t(t,s,y) + H^\psi(t,s,y;\vartheta^\psi_*(t,s,y),\sigma^\psi(t,s,y))\\
&\;\leq w^\psi_t(t,s,y) + H^\psi(t,s,y;\theta^\psi(t,s,y), \sigma^\psi(t,s,y)) + \frac{1}{2}K_1C_1^2 \psi^4 \\
&\;\leq \big(C_2 + \frac{1}{2}K_1C_1^2 \psi_0 \big)\psi^3.\qedhere
\end{align*}
\end{proof}

\subsubsection{The lower bound \texorpdfstring{\eqref{eqn:first inequality}}{(\ref{eqn:first inequality})}}
\label{sec:first inequality}

Before we can prove the lower bound \eqref{eqn:first inequality}, we need two more preliminary results.

\renewcommand{\labelenumi}{(\roman{enumi})}
\begin{proposition}
\label{prop:inequalities preliminaries}
Fix $(\theta,\sigma,P)\in\fS$.
\begin{enumerate}
\item Let $\rho := \inf \lbrace t \in [0,T] : S_t \not\in (K^{-1},K) \rbrace \wedge T$ be the first time that $S$ leaves $(K^{-1},K)$. Then $\rho$ is a stopping time and, for each $\psi > 0$:
\begin{align}
\label{eqn:prop:inequalities preliminaries:S leaves interval}
U(Y_T)
&= w^\psi(\rho, S_\rho, Y_\rho) \quad P\text{-a.s.}
\end{align}
\item The local martingale part of the canonical decomposition of $Y$ under $P$, $\int_0^\cdot (\theta_t - \bar\Delta_t) \dd S_t$, is a bounded $P$-martingale.
\end{enumerate}
\end{proposition}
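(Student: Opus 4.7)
The plan is to prove the two parts separately, with the heart of the argument being an absorption phenomenon for $S$ at the boundary $\{K^{-1},K\}$.

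For part~(i), observe first that $\rho$ is a stopping time for $\FF$ since $S$ has continuous paths and $\{K^{-1},K\}$ is a closed set; the first hitting time of a closed set by a continuous process is measurable with respect to the raw filtration. To establish \eqref{eqn:prop:inequalities preliminaries:S leaves interval}, split according to $\{\rho = T\}$ versus $\{\rho < T\}$. On $\{\rho = T\}$ the identity is immediate from the terminal conditions $\wt w(T,\cdot) = \wh w(T,\cdot) = 0$ of \eqref{eqn:PDE:first order}--\eqref{eqn:PDE:second order} and the definition \eqref{eqn:candidate value function} of $w^\psi$. On $\{\rho < T\}$, the Dirichlet boundary conditions at $s \in \{K^{-1},K\}$ give $w^\psi(\rho, S_\rho, Y_\rho) = U(Y_\rho)$, so the problem reduces to showing $Y_T = Y_\rho$ $P$-a.s.\ on this event.

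The main obstacle is this frozen-$Y$ claim, and I would handle it through a martingale absorption argument. Under $(\theta,\sigma,P) \in \fS$, the process $S$ is a continuous local $P$-martingale bounded in $[K^{-1},K]$, hence a bounded $P$-martingale. On the event $\{\rho < T,\, S_\rho = K\}$, for any $t \in [\rho,T]$ optional sampling gives $E[S_t \mid \cF_\rho] = S_\rho = K$, while $S_t \leq K$ forces $S_t = K$ $P$-a.s., and path-continuity upgrades this to $S_u = K$ for all $u \in [\rho, T]$; the symmetric argument works at $K^{-1}$. Thus $S$ is absorbed at $S_\rho$ on $[\rho, T]$, which implies $\langle S \rangle_t = \langle S\rangle_\rho$ on $[\rho,T]$, so in particular $\sigma_t = 0$ for $\diff t \times P$-a.e.\ $(t,\omega)$ with $t \geq \rho(\omega)$. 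Combined with the boundary condition $\bar\sigma(t,K) = \bar\sigma(t,K^{-1}) = 0$ from \eqref{eqn:ass:reference volatility}, the drift integrand in the canonical decomposition \eqref{eqn:setup:Y} of $Y$ vanishes on $[\rho, T]$; the stochastic integral increment vanishes as well since $\diff S \equiv 0$ there. Hence $Y$ is constant on $[\rho, T]$, and $Y_T = Y_\rho$ follows.

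Part~(ii) is then a routine boundedness argument. By \eqref{eqn:ass:reference price}, \eqref{eqn:ass:reference volatility}, and the constraint $\sigma_t \in [0,K]$ from \eqref{eqn:process bounds}, the integrand $\tfrac{1}{2} S_t^2 \bar V_{ss}(t,S_t)(\bar\sigma(t,S_t)^2 - \sigma_t^2)$ of the finite-variation part of $Y$ is bounded in absolute value by $\tfrac{1}{2} K^3$, so this drift is uniformly bounded on $[0,T]$ by $\tfrac{1}{2}K^3 T$. Since $Y$ itself is uniformly bounded by \eqref{eqn:process bounds}, rearranging the decomposition shows that the continuous local martingale $\int_0^\cdot (\theta_t - \bar V_s(t,S_t))\dd S_t$ is a bounded process, and any bounded continuous local martingale is a uniformly integrable martingale.
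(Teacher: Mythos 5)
Your proof is correct and follows essentially the same route as the paper: the terminal/boundary conditions of $\wt w,\wh w$ reduce (i) to showing $Y$ is frozen after $\rho$, which both you and the paper obtain from $S$ being a bounded $[K^{-1},K]$-valued $P$-martingale (your optional-sampling absorption argument just spells out the paper's assertion that $S$ stays constant after hitting the boundary), together with the frozen quadratic variation and $\bar\sigma(\cdot,K^{\pm1})=0$ killing the drift; part (ii) is the same boundedness argument, where your uniform bound on the drift integrand is legitimate precisely because, as you note in (i), the integrand vanishes after $\rho$ where the interior bound \eqref{eqn:ass:reference price} no longer applies.
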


\begin{proof}
(i): It is an easy exercise to show that $\rho$ is a stopping time for the (non-augmented, non-right-continuous) filtration $\FF$. This uses the fact that all paths of $S$ are continuous and $(K^{-1},K)$ is open; cf.~\cite[Problem 2.7]{KaratzasShreve1998}. To prove \eqref{eqn:prop:inequalities preliminaries:S leaves interval}, fix $(\theta,\sigma,P)\in\fS$ and $\psi > 0$. On $\lbrace \rho = T \rbrace$, the terminal conditions of $\wt w$ and $\wh w$ in \eqref{eqn:PDE:first order}--\eqref{eqn:PDE:second order} yield
\begin{align*}
U(Y_T)
&= w^\psi(T,S_T,Y_T)
= w^\psi(\rho,S_\rho,Y_\rho) \quad P\text{-a.s.}
\end{align*}
On $\lbrace \rho < T \rbrace$, we have $S_\rho \in \lbrace K^{-1}, K \rbrace$ and the boundary conditions of $\wt w$ and $\wh w$ yield $w^\psi(\rho, S_\rho, Y_\rho) = U(Y_\rho)$ $P$-a.s. It remains to show that $Y_\rho = Y_T$ $P$-a.s.

Recall from \eqref{eqn:setup:Y} the dynamics of $Y$ under $P$:
\begin{align*}
Y
&= y_0 + \int_0^\cdot (\theta_t - \bar V_s(t,S_t)) \dd S_t + \int_0^\cdot \frac{1}{2}S_t^2 \bar V_{ss}(t,S_t) (\bar\sigma(t,S_t)^2 - \sigma_t^2) \dd t.
\end{align*}
By definition of $\ol\fS$, $S$ is a local martingale under $P$. By \eqref{eqn:process bounds}, $S$ is even a $[K^{-1},K]$-valued martingale under $P$. Hence, $S$ has to stay constant after hitting $K^{-1}$ or $K$. In particular, the local martingale term in the dynamics of $Y$ is constant after time $\rho$. Moreover, being constant, $S$ has constant quadratic variation after time $\rho$, so that by \eqref{eqn:setup:covariation}, $\sigma_t S_t = 0$ $\diff t \times P$-a.e. on $\lbrace(t,\omega) : t \geq \rho(\omega) \rbrace$. Together with the assumption \eqref{eqn:ass:reference volatility} that $\bar\sigma(\cdot, K) \equiv \bar\sigma(\cdot,K^{-1}) \equiv 0$, this yields that the drift term of the dynamics of $Y$ stays constant after time $\rho$ as well. Hence, $Y_\rho = Y_T$ $P$-a.s.

(ii): By \eqref{eqn:process bounds}, $Y$ and $\sigma$ are bounded. Moreover, by \eqref{eqn:ass:reference price}, $S_t^2 \bar V_{ss}(t,S_t)$ is bounded $\diff t\times P$-a.e. on $\lbrace(t,\omega) : t < \rho(\omega) \rbrace$. Since we also know from the proof of part~(i) that the drift part of $Y$ stays constant after time $\rho$, we conclude that the drift part is bounded. It follows that the local martingale part is also bounded.
\end{proof}

\begin{lemma}
\label{lem:probability estimate}
There is a constant $C > 0$ such that, for every $\psi \in (0,\psi_c)$, $\sigma \in \cV$, and $P \in \fP(\theta^\psi, \sigma)$ satisfying
\begin{align}
\label{eqn:lem:probability estimate:condition}
J^\psi(\sigma,P)
&\leq \inf_{\sigma' \in \cV} \inf_{P'\in\fP(\theta^\psi, \sigma')} J^\psi(\sigma', P') + 1,
\end{align}
we have
\begin{align}
\label{eqn:lem:probability estimate}
\PR{\tau < T}
&\leq C \psi
\end{align}
for the stopping time $\tau$ defined in Theorem~\ref{thm:second order}.
\end{lemma}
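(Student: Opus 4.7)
The plan is to exploit the penalty term in the objective: any $P$ that is nearly optimal for $\theta^\psi$ can only afford a small amount of volatility deviation from the reference volatility, which in turn forces the drift of $Y$ to be small. Combined with the fact that the martingale part of $Y$ is already $O(\psi)$ because $\theta^\psi - \bar\Delta = \wt\theta\,\1_{\{t<\tau\}}\psi$, this forces $\sup_t|Y_t-y_0|$ to be small in $L^2(P)$, and Markov's inequality then yields $P(\tau<T)\le C\psi$.

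\paragraph{Key steps.}
First I would obtain an $O(1)$ bound on the optimal value by inserting the candidate $(\theta^\psi,\sigma^\psi)$: a Taylor expansion of $f(\cdot;\sigma^\psi)$ around $\bar\sigma$ together with \eqref{eqn:penalty function:minimum conditions} shows $f(t,S_t,Y_t;\sigma^\psi_t)=O(\psi^2)$, so $\tfrac{1}{\psi}\int_0^T U'(Y_t)f(t,S_t,Y_t;\sigma^\psi_t)\,dt=O(\psi)$; hence for all sufficiently small $\psi$ the infimum in \eqref{eqn:lem:probability estimate:condition} is bounded by a constant, giving $J^\psi(\sigma,P)\le C_0$ uniformly. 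Using $U(Y_T)\ge U(y_l)$ (from \eqref{eqn:process bounds}) and $U'(Y_t)\ge U'(y_u)>0$ (since $U''<0$ and $Y_t\le y_u$), this yields
\begin{align*}
E^P\!\left[\int_0^T f(t,S_t,Y_t;\sigma_t)\,dt\right]
&\le C_3\,\psi.
\end{align*}
Second, the convexity bound $f''\ge 1/K$ from \eqref{eqn:ass:penalty function} combined with \eqref{eqn:penalty function:minimum conditions} gives $f(t,s,y;\varsigma)\ge \tfrac{1}{2K}(\varsigma-\bar\sigma(t,s))^2$, so
\begin{align*}
E^P\!\left[\int_0^T (\sigma_t-\bar\sigma(t,S_t))^2\,dt\right]
&\le C_4\,\psi.
\end{align*}

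\paragraph{From the volatility bound to the conclusion.}
The dynamics \eqref{eqn:setup:Y} split $Y-y_0$ into a martingale part $M_t=\int_0^t\wt\theta(u,S_u,Y_u)\,\1_{\{u<\tau\}}\psi\,dS_u$ and a drift part $A_t=\int_0^t \tfrac12 S_u^2\bar V_{ss}(u,S_u)(\bar\sigma(u,S_u)^2-\sigma_u^2)\,du$. Since $|\wt\theta|\le L$ on $\bfD$ (by \eqref{eqn:ass:pde derivatives} and \eqref{eqn:theta tilde}), $\diff\langle M\rangle_u\le L^2K^4\psi^2\,du$, so Doob's maximal inequality gives $E^P[\sup_t M_t^2]\le 4L^2K^4T\psi^2$. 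For $A$, the factorisation $\bar\sigma^2-\sigma^2=(\bar\sigma-\sigma)(\bar\sigma+\sigma)$ together with \eqref{eqn:process bounds}, \eqref{eqn:ass:reference price}, and Cauchy--Schwarz yields
\begin{align*}
\sup_{t\le T}|A_t|^2
&\le K^4T\int_0^T(\sigma_u-\bar\sigma(u,S_u))^2\,du,
\end{align*}
whence $E^P[\sup_t A_t^2]\le K^4TC_4\psi$. Adding the two and bounding $\psi^2\le\psi_c\psi$, we obtain $E^P[\sup_{t\le T}|Y_t-y_0|^2]\le C\psi$ for a constant $C$ independent of $\psi,\sigma,P$. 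Markov's inequality then gives
\begin{align*}
P(\tau<T)
&=P\!\Big(\sup_{t\le T}|Y_t-y_0|\ge 1\Big)
\le E^P\!\Big[\sup_{t\le T}|Y_t-y_0|^2\Big]
\le C\psi.
\end{align*}
The only mildly delicate point is the initial upper bound on the optimal value; once one has $J^\psi(\sigma,P)\le C_0$, everything else is routine estimation.
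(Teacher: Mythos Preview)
Your proposal is correct and follows essentially the same route as the paper: bound the infimum by plugging in $\sigma^\psi$, use the strict convexity of $f$ to convert the penalty bound into an $L^2$ bound on $\sigma-\bar\sigma$, decompose $Y-y_0$ into its martingale and drift parts, control each piece, and finish with Markov's inequality. The only minor difference is that the paper restricts all time integrals to $[0,\rho]$ (the first exit time of $S$ from $(K^{-1},K)$), whereas you integrate over $[0,T]$; this is harmless since $Y$ is constant after $\rho$ (Proposition~\ref{prop:inequalities preliminaries}(i)) and the bounds on $S^2\bar V_{ss}$ and $f''$ in \eqref{eqn:ass:reference price}--\eqref{eqn:ass:penalty function} are only stated on the open domain, so strictly speaking you should either invoke that fact or restrict to $[0,\rho]$ as the paper does.
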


\begin{proof}
As an auxiliary result, we first prove that there is a constant $C_1 > 0$ such that, for every $\psi \in (0,\psi_c)$, $\sigma \in \cV$, and $P \in \fP(\theta^\psi, \sigma)$ satisfying \eqref{eqn:lem:probability estimate:condition}, we have
\begin{align}
\label{eqn:lem:probability estimate:pf:auxiliary estimate}
\EX[P]{\int_0^\rho \left(\sigma_t - \bar\sigma(t,S_t)\right)^2 \dd t}
&\leq C_1 \psi,
\end{align}
where $\rho$ is the first time that $S$ leaves $(K^{-1},K)$.

To this end, first fix $(t,s,y)\in\bfD$. Expanding the function $[0,K] \ni \varsigma \mapsto f(t,s,y;\varsigma)$ around $\bar\sigma(t,s)$ and using the minimum conditions \eqref{eqn:penalty function:minimum conditions}, we obtain
\begin{align*}
f(t,s,y;\varsigma)
&= \frac{1}{2}f''(t,s,y;\varsigma^L(t,s,y;\varsigma)) (\varsigma - \bar\sigma(t,s))^2
\end{align*}
for some $\varsigma^L(t,s,y;\varsigma) \in [0,K]$. Then the assumption that $\frac{1}{K} \leq f'' \leq K$ on $\bfD\times[0,K]$ from \eqref{eqn:ass:penalty function} gives
\begin{align*}
\frac{2}{K} f(t,s,y;\varsigma)
&\leq (\varsigma - \bar\sigma(t,s))^2
\leq 2K f(t,s,y;\varsigma), \quad (t,s,y)\in\bfD, \varsigma\in[0,K].
\end{align*}
Using also that $U'(y)$ is positive, bounded, and bounded away from zero over $y \in (y_l,y_u)$, it follows that there is a constant $K'' > 0$ such that 
\begin{align}
\label{eqn:lem:probability estimate:pf:taylor estimate}
\frac{1}{K''} (\varsigma - \bar\sigma(t,s))^2
&\leq  U'(y) f(t,s,y;\varsigma)
\leq K'' (\varsigma - \bar\sigma(t,s))^2, \quad (t,s,y)\in\bfD, \varsigma\in[0,K].
\end{align}

Next, set $C_1 :=T (K'')^2 K^6 \psi_c + K''(U(y_u)+ 1 - U(y_l))$ and fix $\psi \in (0,\psi_c)$, $\sigma\in\cV$, and $P \in \fP(\theta^\psi,\sigma)$. Moreover, choose any $P^\psi \in \fP(\theta^\psi,\sigma^\psi)$.   On the one hand, using the left-hand inequality in \eqref{eqn:lem:probability estimate:pf:taylor estimate}, the definition of $J^\psi$ in \eqref{eqn:objective}, and \eqref{eqn:lem:probability estimate:condition} gives
\begin{align}
\label{eqn:lem:probability estimate:pf:10}
\begin{split}
\EX[P]{\int_0^\rho \left(\sigma_t - \bar\sigma(t,S_t)\right)^2 \dd t}
&\leq K''\psi \EX[P]{\frac{1}{\psi}\int_0^\rho U'(Y_t)f(t,S_t,Y_t;\sigma_t) \dd t}\\
&\leq K'' \psi (J^\psi(\sigma, P) - \EX[P]{U(Y_T)})\\
&\leq K'' \psi \left(\inf_{\sigma' \in \cV} \inf_{P'\in\fP(\theta^\psi, \sigma')} J^\psi(\sigma', P') + 1 - U(y_l)\right).
\end{split}
\end{align}
On the other hand, using first that $\sigma^\psi \in \cV$, and then the right-hand inequality in \eqref{eqn:lem:probability estimate:pf:taylor estimate}, we obtain
\begin{align}
\label{eqn:lem:probability estimate:pf:20}
\begin{split}
\inf_{\sigma' \in \cV} \inf_{P'\in\fP(\theta^\psi, \sigma')} J^\psi(\sigma', P')
&\leq J(\sigma^\psi, P^\psi)
= \EX[P^\psi]{\frac{1}{\psi}\int_0^T U'(Y_t) f(t,S_t,Y_t;\sigma^\psi_t) \dd t + U(Y_T)}\\
&\leq \frac{K''}{\psi} \EX[P^\psi]{\int_0^T \left(\sigma^\psi_t - \bar\sigma(t,S_t)\right)^2 \dd t} + U(y_u).
\end{split}
\end{align}
In view of \eqref{eqn:candidate functions}, \eqref{eqn:sigma tilde}, and \eqref{eqn:ass:reference volatility}--\eqref{eqn:source term:boundary}, we find that
\begin{align*}
\left\vert\sigma^\psi_t - \bar\sigma(t,S_t)\right\vert
&= \left\vert\wt\sigma(t,S_t,Y_t)\right\vert\psi
= \left\vert \frac{\bar\sigma(t,S_t) S_t^2 \bar V_{ss}(t,S_t)}{f''(t,S_t,Y_t;\bar\sigma(t,S_t))} \right\vert \psi
\leq K^3 \psi \quad \diff t\times P^\psi\text{-a.e.} 
\end{align*} 
Combining this with \eqref{eqn:lem:probability estimate:pf:10}--\eqref{eqn:lem:probability estimate:pf:20} yields \eqref{eqn:lem:probability estimate:pf:auxiliary estimate} via
\begin{align*}
\EX[P]{\int_0^\rho \left(\sigma_t - \bar\sigma(t,S_t)\right)^2 \dd t}
&\leq K''\psi\left(\frac{K''}{\psi}T K^6\psi^2 +U(y_u) + 1 - U(y_l)\right)\\
&\leq \left(T (K'')^2 K^6 \psi_c + K''(U(y_u)+ 1 - U(y_l)) \right)\psi
= C_1 \psi.
\end{align*}

To prove \eqref{eqn:lem:probability estimate}, first note that by definition of $\wt\theta$ in \eqref{eqn:theta tilde} and our assumptions on $\wt w$ and $U$, there is a constant $L>0$ (depending only on $K$, $y_l$, $y_u$, and $U$) such that $\vert\wt\theta\vert \leq L$ on $\bfD$. Moreover, by standard estimates for It\^o processes (cf.~\cite[Lemma V.11.5]{RogersWilliams2000}), there is a constant $C' > 0$ (depending only on $T$) such that
\begin{align*}
&\EX[P]{\sup_{0\leq t \leq \rho} \vert Y_t - y_0 \vert^2}\\
&\;\leq C' \EX[P]{\int_0^\rho\left(\left( \big(\theta^\psi_t - \bar V_s(t,S_t) \big)S_t \sigma_t\right)^2 + \left(\frac{1}{2}S_t^2 \bar V_{ss}(t,S_t)(\bar\sigma(t,S_t)^2 - \sigma_t^2)\right)^2\right) \dd t}.
\end{align*}
Now, set $C := C'TL^2K^4\psi_c + C'K^4 C_1$ and fix $\psi \in (0,\psi_c)$, $\sigma \in \cV$, and $P \in \fP(\theta^\psi, \sigma)$ satisfying \eqref{eqn:lem:probability estimate:condition}. Note that by \eqref{eqn:process bounds} and \eqref{eqn:ass:reference volatility}--\eqref{eqn:ass:reference price},
\begin{align*}
\left\vert \big(\theta^\psi_t - \bar V_s(t,S_t) \big)\sigma_t S_t \right\vert
&= \left\vert\wt\theta(t,S_t,Y_t)\1_{\lbrace t < \tau \rbrace}\psi S_t \sigma_t \right\vert
\leq L\psi K^2,\\
\left\vert\frac{1}{2}S_t^2 \bar V_{ss}(t,S_t)(\bar\sigma(t,S_t)^2 - \sigma_t^2)\right\vert
&\leq \frac{1}{2}K \left\vert\bar\sigma(t,S_t)+\sigma_t\right\vert\left\vert\bar\sigma(t,S_t) - \sigma_t\right\vert\\
&\leq  K^2 \left\vert\bar\sigma(t,S_t)-\sigma_t \right\vert\quad \text{$\diff t \times P$-a.e. on $\lbrace (t,\omega):t < \rho(\omega) \rbrace$}.
\end{align*}
Recall that $Y$ remains constant after time $\rho$. Therefore, Markov's inequality, the above estimates, and the auxiliary estimate \eqref{eqn:lem:probability estimate:pf:auxiliary estimate} yield the assertion:
\begin{align*}
\PR{\tau < T}
&\leq\PR{\tau \leq \rho}
\leq \PR{\sup_{0\leq t\leq \rho} \vert Y_t - y_0 \vert^2 \geq 1}
\leq \EX[P]{\sup_{0\leq t \leq \rho} \vert Y_t - y_0 \vert^2}\\
&\leq C'T\left(L\psi K^2\right)^2 + C'K^4 \EX[P]{\int_0^\rho \left(\bar\sigma(t,S_t)-\sigma_t\right)^2 \dd t}\\
&\leq \left(C'TL^2K^4\psi_c + C'K^4 C_1\right)\psi\\
&= C \psi.\qedhere
\end{align*}
\end{proof}

\begin{lemma}
\label{lem:first inequality}
As $\psi \downarrow 0$, inequality \eqref{eqn:first inequality} holds true:
\begin{align*}
\inf_{\sigma \in \cV} \inf_{P\in\fP(\theta^\psi, \sigma)} J^\psi(\sigma, P)
&\geq w^\psi_0 + o(\psi^2).
\end{align*}
\end{lemma}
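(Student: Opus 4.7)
The plan is to apply It\^o's formula to $w^\psi(t,S_t,Y_t)$ under any $P\in\fP(\theta^\psi,\sigma)$ with $\sigma\in\cV$, and to bound the resulting drift from below via the HJBI-type inequalities of Lemma~\ref{lem:HJBI}. A standard reduction lets me restrict attention to $(\sigma,P)$ that are near-optimal in the sense of \eqref{eqn:lem:probability estimate:condition}, so that Lemma~\ref{lem:probability estimate} furnishes the a priori estimate $P[\tau<T]\leq C_0\psi$.

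Let $\rho$ be the exit time of $S$ from $(K^{-1},K)$ as in Proposition~\ref{prop:inequalities preliminaries}. On $[0,\rho]$, Assumption~\ref{ass:second order}(i) and the $Y$-dynamics \eqref{eqn:setup:Y} allow me to apply It\^o's formula; a direct comparison with \eqref{eqn:hamiltonian} gives
\begin{align*}
\dd w^\psi(t,S_t,Y_t) &= \bigl(w^\psi_t + H^\psi(t,S_t,Y_t;\theta^\psi_t,\sigma_t)\bigr)\dd t - \tfrac{1}{\psi}U'(Y_t)f(t,S_t,Y_t;\sigma_t)\dd t + \dd M_t,
\end{align*}
where the local martingale $M$, stopped at $\rho$, is actually bounded (hence a true $P$-martingale) thanks to \eqref{eqn:process bounds}, \eqref{eqn:ass:reference price}, and \eqref{eqn:candidate value function:bounds}. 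Taking expectations on $[0,\rho]$, using $U(Y_T)=w^\psi(\rho,S_\rho,Y_\rho)$ from Proposition~\ref{prop:inequalities preliminaries}(i), and noting that after $\rho$ both $\sigma$ and $\bar\sigma(\cdot,S_\cdot)$ vanish (so that $f(t,S_t,Y_t;\sigma_t)=f(t,S_t,Y_t;\bar\sigma(t,S_t))=0$ on $[\rho,T]$ by \eqref{eqn:penalty function:minimum conditions}), I arrive at the identity
\begin{align*}
J^\psi(\sigma,P) - w^\psi_0 &= \EX[P]{\int_0^\rho\bigl(w^\psi_t + H^\psi(t,S_t,Y_t;\theta^\psi_t,\sigma_t)\bigr)\dd t}.
\end{align*}

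I now split this integral at $\rho\wedge\tau$. On $[0,\rho\wedge\tau)$, we have $\theta^\psi_t=\theta^\psi(t,S_t,Y_t)$ in the sense of \eqref{eqn:candidate functions}, so the integrand is bounded below by $-C\psi^3$ via Lemma~\ref{lem:HJBI}, inequality \eqref{eqn:lem:HJBI:optimal strategy:second order}. On $[\rho\wedge\tau,\rho]$, we have $t\geq\tau$ and hence $\theta^\psi_t=\bar V_s(t,S_t)$, so only the weaker estimate \eqref{eqn:lem:HJBI:optimal strategy:first order} of $-C\psi^2$ is available; however, this second region is nonempty only on $\{\tau<\rho\}\subset\{\tau<T\}$, whose probability is $O(\psi)$ by Lemma~\ref{lem:probability estimate}. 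Combining,
\begin{align*}
J^\psi(\sigma,P) - w^\psi_0 &\geq -CT\psi^3 - CT\psi^2\,P[\tau<T] \geq -(CT + CC_0T)\psi^3 = o(\psi^2),
\end{align*}
uniformly in the near-optimal $(\sigma,P)$; taking the infimum yields \eqref{eqn:first inequality}. The main obstacle is the interplay between the two HJBI bounds: the one available on $\{t\geq\tau\}$ is only $O(\psi^2)$, and the required improvement to $o(\psi^2)$ emerges only from combining it with the probability estimate $P[\tau<T]=O(\psi)$ of Lemma~\ref{lem:probability estimate}, which is precisely why the stopping time $\tau$ was introduced in the definition of $\theta^\psi$.
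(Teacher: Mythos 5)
Your argument is correct and follows essentially the same route as the paper's proof: pass to a near-optimal $(\sigma,P)$ so that condition \eqref{eqn:lem:probability estimate:condition} and hence Lemma~\ref{lem:probability estimate} apply, apply It\^o's formula to $w^\psi(t,S_t,Y_t)$ up to $\rho$, remove the stochastic-integral part in expectation, and split the drift integral at $\tau\wedge\rho$, using \eqref{eqn:lem:HJBI:optimal strategy:second order} before $\tau$ and \eqref{eqn:lem:HJBI:optimal strategy:first order} together with $\PR{\tau<T}=O(\psi)$ afterwards. One small imprecision: the stopped local martingale $M$ need not be \emph{bounded} (a bounded integrand against a bounded martingale does not in general produce a pathwise bounded integral); what the cited facts \eqref{eqn:process bounds}, \eqref{eqn:candidate value function:bounds} and Proposition~\ref{prop:inequalities preliminaries}(ii) actually give is that $M$ is a true (square-integrable) martingale, which is all your argument needs and is exactly how the paper argues.
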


\begin{proof}
Fix $\varepsilon > 0$. Choose $C > 0$ large enough and $\psi_0\in(0,\psi_c)$ small enough, so that we may use the assertions of Lemmas~\ref{lem:HJBI} and \ref{lem:probability estimate}. Moreover, choosing $\psi_0$ even smaller if necessary, we may assume that $TC(\psi_0 + C \psi_0) \leq \frac{1}{2}\varepsilon$ and $\frac{1}{2}\varepsilon\psi_0^2 \leq 1$. Fix $\psi \in (0,\psi_0)$. We need to show that 
\begin{align*}
\inf_{\sigma \in \cV} \inf_{P\in\fP(\theta^\psi, \sigma)} J^\psi(\sigma, P) - w^\psi_0
&\geq -\varepsilon\psi^2.
\end{align*}
Choose $\sigma \in \cV$ and $P \in \fP(\theta^\psi, \sigma)$ such that $J^\psi(\sigma,P) \leq \inf_{\sigma' \in \cV} \inf_{P'\in\fP(\theta^\psi, \sigma')} J^\psi(\sigma', P') +  \frac{1}{2}\varepsilon\psi^2$ (in particular, condition \eqref{eqn:lem:probability estimate:condition} of Lemma~\ref{lem:probability estimate} holds, so we may use \eqref{eqn:lem:probability estimate} later). Then
\begin{align}
\label{eqn:lem:first inequality:pf:10}
\inf_{\sigma' \in \cV} \inf_{P'\in\fP(\theta^\psi, \sigma')} J^\psi(\sigma', P') - w^\psi_0
&\geq J^\psi(\sigma,P) - w^\psi_0 - \frac{1}{2}\varepsilon\psi^2.
\end{align}
Let $\rho$ be the first time that $S$ leaves $(K^{-1},K)$. By Proposition~\ref{prop:inequalities preliminaries} (i), $U(Y_T) = w^\psi(\rho, S_\rho, Y_\rho)$ $P$-a.s. Using this together with the definition of $J^\psi$ in \eqref{eqn:objective}, It\^o's formula for the process $w^\psi(t,S_t,Y_t)$ (up to time $\rho$), and the formulas \eqref{eqn:setup:Y} and \eqref{eqn:setup:covariation} (with $\theta$ replaced by $\theta^\psi$) describing the dynamics of $S$ and $Y$ under $P$, we obtain
\begin{align}
\label{eqn:lem:first inequality:pf:20}
\begin{split}
J^\psi(\sigma, P) - w^\psi_0
&\geq\EX[P]{\int_0^\rho w^\psi_s(t,S_t,Y_t) \dd S_t + \int_0^\rho w^\psi_y(t,S_t,Y_t)(\theta^\psi_t - \bar\Delta_t) \dd S_t}\\
&\qquad+ \EX[P]{\int_0^\rho \left(w^\psi_t(t,S_t,Y_t) + H^\psi(t,S_t,Y_t;\theta^\psi_t,\sigma_t)\right) \dd t},
\end{split}
\end{align}
where $H^\psi$ is the Hamiltonian defined in \eqref{eqn:hamiltonian}. We claim that both stochastic integrals inside the first expectation on the right-hand side of \eqref{eqn:lem:first inequality:pf:20} are true martingales under $P$. First, recall that $S$ and $\int_0^\cdot (\theta_t^\psi - \bar\Delta_t)\dd S_t$ are bounded martingales under $P$; cf.~Proposition~\ref{prop:inequalities preliminaries} (ii). Since $w^\psi_s$ and $w^\psi_y$ are bounded by \eqref{eqn:candidate value function:bounds}, the claim follows. So the first expectation in \eqref{eqn:lem:first inequality:pf:20} vanishes and it remains to estimate the second term.

Splitting the $\diff t$-integral into two parts separated by $\tau\wedge\rho$, using that $\theta^\psi_t = \theta^\psi(t,S_t,Y_t)$ on $\lbrace t < \tau \rbrace$ and $\theta^\psi_t = \bar V_s(t,S_t)$ on $\lbrace t \geq \tau \rbrace$, and applying \eqref{eqn:lem:HJBI:optimal strategy:second order}, \eqref{eqn:lem:HJBI:optimal strategy:first order} and, in the penultimate inequality, \eqref{eqn:lem:probability estimate}, we obtain
\begin{align}
\label{eqn:lem:first inequality:pf:30}
\begin{split}
J^\psi(\sigma, P) - w^\psi_0
&\geq - \EX[P]{\int_0^{\rho\wedge\tau} C\psi^3 \diff t + \int_{\rho\wedge\tau}^\rho C\psi^2 \diff t}
\geq -TC\psi^2(\psi + \PR{\tau < T})\\
&\geq -TC\psi^2(\psi + C \psi)
\geq -\frac{1}{2}\varepsilon\psi^2.
\end{split}
\end{align}
Combining \eqref{eqn:lem:first inequality:pf:30} with \eqref{eqn:lem:first inequality:pf:10} completes the proof.
\end{proof}

\subsubsection{The upper bound \texorpdfstring{\eqref{eqn:second inequality}}{(\ref{eqn:second inequality})}}
\label{sec:second inequality}

\begin{lemma}
\label{lem:second inequality}
As $\psi \downarrow 0$, inequality \eqref{eqn:second inequality} holds true:
\begin{align*}
\sup_{\theta \in \cA} \inf_{P\in\fP(\theta,\sigma^\psi)} J^\psi(\sigma^\psi, P)
&\leq w^\psi_0 + o(\psi^2).
\end{align*}
\end{lemma}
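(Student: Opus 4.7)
The plan mirrors the proof of Lemma \ref{lem:first inequality}, but with the roles of $\theta$ and $\sigma$ interchanged and using the HJBI \emph{upper} bound \eqref{eqn:lem:HJBI:optimal volatility} in place of the lower bounds \eqref{eqn:lem:HJBI:optimal strategy:first order}--\eqref{eqn:lem:HJBI:optimal strategy:second order}. Because \eqref{eqn:lem:HJBI:optimal volatility} controls $w^\psi_t + H^\psi(\cdot;\vartheta,\sigma^\psi(\cdot))$ uniformly in $\vartheta$ with a single rate $C\psi^3$, no case split around the stopping time $\tau$ is required; the argument will thus be slightly cleaner than its lower-bound counterpart.

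First I would fix $\psi\in(0,\psi_c)$ and any $\theta\in\cA$, and choose any $P\in\fP(\theta,\sigma^\psi)$ (non-empty by $\cA\times\cV\subset\cZ$). Let $\rho$ be the first exit time of $S$ from $(K^{-1},K)$. Since $\sigma^\psi(t,s,y)=0$ whenever $s\in\{K^{-1},K\}$ (both $\bar\sigma$ and $\wt\sigma$ vanish there by \eqref{eqn:ass:reference volatility} and \eqref{eqn:source term:boundary}), the minimum property \eqref{eqn:penalty function:minimum conditions}, together with the constancy of $(S,Y)$ after $\rho$ established in the proof of Proposition~\ref{prop:inequalities preliminaries}(i), yields $U'(Y_t)f(t,S_t,Y_t;\sigma^\psi_t)=0$ $\dd t\times P$-a.e.~on $\{t\geq\rho\}$. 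Combined with the identity $U(Y_T)=w^\psi(\rho,S_\rho,Y_\rho)$ from Proposition~\ref{prop:inequalities preliminaries}(i), this gives
\begin{align*}
J^\psi(\sigma^\psi,P)
&= \EX[P]{w^\psi(\rho,S_\rho,Y_\rho) + \frac{1}{\psi}\int_0^\rho U'(Y_t)f(t,S_t,Y_t;\sigma^\psi_t)\dd t}.
\end{align*}

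Next I would apply It\^o's formula to $w^\psi(t,S_t,Y_t)$ up to $\rho$, using the dynamics \eqref{eqn:setup:Y} and covariations \eqref{eqn:setup:covariation} of $(S,Y)$ under $P$. Unrolling the definition \eqref{eqn:hamiltonian} of $H^\psi$ identifies the drift of $w^\psi(t,S_t,Y_t)$ as $w^\psi_t(t,S_t,Y_t) + H^\psi(t,S_t,Y_t;\theta_t,\sigma^\psi_t) - \frac{1}{\psi}U'(Y_t)f(t,S_t,Y_t;\sigma^\psi_t)$. The two resulting stochastic integrals $\int_0^{\cdot\wedge\rho}w^\psi_s \dd S_t$ and $\int_0^{\cdot\wedge\rho}w^\psi_y(\theta_t-\bar\Delta_t)\dd S_t$ are bounded $P$-martingales: by \eqref{eqn:candidate value function:bounds} their quadratic variations are dominated by $(K')^2[S]_T$ and $(K')^2\langle\int(\theta-\bar\Delta)\dd S\rangle_T$, respectively, and both majorants are $P$-integrable thanks to the boundedness of $S$ (as a martingale) and of $\int_0^\cdot(\theta_t-\bar\Delta_t)\dd S_t$ (Proposition~\ref{prop:inequalities preliminaries}(ii)). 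Consequently their expectations vanish, and substitution produces
\begin{align*}
J^\psi(\sigma^\psi,P) - w^\psi_0
&= \EX[P]{\int_0^\rho \bigl(w^\psi_t(t,S_t,Y_t) + H^\psi(t,S_t,Y_t;\theta_t,\sigma^\psi_t)\bigr)\dd t}.
\end{align*}

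By the HJBI upper bound \eqref{eqn:lem:HJBI:optimal volatility}, the integrand on the right-hand side is pointwise bounded by $C\psi^3$, \emph{uniformly} in $(t,S_t,Y_t)$ and in the control value $\theta_t$. Hence $J^\psi(\sigma^\psi,P)\leq w^\psi_0 + TC\psi^3 = w^\psi_0 + o(\psi^2)$, with the error uniform in $\theta\in\cA$ and $P\in\fP(\theta,\sigma^\psi)$; taking the infimum over $P$ and then the supremum over $\theta$ delivers the claim. The main (still minor) obstacle is the careful bookkeeping needed to verify the true-martingale property of the two stochastic integrals above, for which the bounds \eqref{eqn:candidate value function:bounds} and Proposition~\ref{prop:inequalities preliminaries}(ii) are essential; once these are in place, the uniform HJBI estimate \eqref{eqn:lem:HJBI:optimal volatility} removes any need for a $\tau$-induced case analysis, making the argument strictly shorter than that of Lemma~\ref{lem:first inequality}.
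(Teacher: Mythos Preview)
Your proposal is correct and follows essentially the same approach as the paper: restrict the penalty integral to $[0,\rho]$ using $\sigma^\psi_t=\bar\sigma(t,S_t)=0$ after $\rho$, apply It\^o to $w^\psi(t,S_t,Y_t)$, dispose of the two local martingale terms via the boundedness of $w^\psi_s,w^\psi_y$ and Proposition~\ref{prop:inequalities preliminaries}(ii), and conclude with the uniform HJBI upper bound \eqref{eqn:lem:HJBI:optimal volatility}. The only cosmetic difference is that the paper first picks an $\varepsilon$-near-optimal $\theta$ whereas you establish the bound $J^\psi(\sigma^\psi,P)\le w^\psi_0+TC\psi^3$ directly for every $(\theta,P)$; just make sure to restrict $\psi$ to the range where Lemma~\ref{lem:HJBI} applies rather than all of $(0,\psi_c)$.
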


The proof is analogous to the proof of Lemma~\ref{lem:first inequality}, but easier.

\begin{proof}
Fix $\varepsilon > 0$. Choose $C > 0$ large enough and $\psi_0\in(0,\psi_c)$ small enough, so that we may use the assertions of Lemma~\ref{lem:HJBI}. Moreover, choosing $\psi_0$ even smaller if necessary, we may assume that $TC\psi_0 \leq \frac{1}{2}\varepsilon$. Fix $\psi \in (0,\psi_0)$. We need to show that 
\begin{align*}
\sup_{\theta \in \cA} \inf_{P\in\fP(\theta,\sigma^\psi)} J^\psi(\sigma^\psi, P) - w^\psi_0
&\leq \varepsilon \psi^2.
\end{align*}
Choose $\theta \in \cA$ such that $\inf_{P\in\fP(\theta,\sigma^\psi)} J^\psi(\sigma^\psi,P) + \frac{1}{2}\varepsilon\psi^2
\geq \sup_{\theta' \in \cA} \inf_{P\in\fP(\theta',\sigma^\psi)} J^\psi(\sigma^\psi, P)$ and fix any $P \in \fP(\theta,\sigma^\psi)$. Then
\begin{align}
\label{eqn:lem:second inequality:pf:10}
\sup_{\theta' \in \cA} \inf_{P\in\fP(\theta',\sigma^\psi)} J^\psi(\sigma^\psi, P) - w^\psi_0
&\leq J^\psi(\sigma^\psi,P) - w^\psi_0 + \frac{1}{2}\varepsilon\psi^2.
\end{align}

Let $\rho$ be the first time that $S$ leaves $(K^{-1},K)$. Recall from the proof of Proposition~\ref{prop:inequalities preliminaries}~(i) that $S$ has to stay constant (at $K^{-1}$ or $K$) after time $\rho$ (as it is a $[K^{-1},K]$-valued $P$-martingale). Hence, $\sigma^\psi_t = \bar\sigma(t,S_t) = 0$ on $\lbrace t \geq \rho \rbrace$ by \eqref{eqn:candidate functions} and \eqref{eqn:source term:boundary}. Then $f(t,S_t,Y_t;\sigma^\psi_t) = 0$ on $\lbrace t \geq \rho \rbrace$ by \eqref{eqn:penalty function:minimum conditions} and thus
\begin{align*}
J^\psi(\sigma^\psi,P)
&= \EX[P]{\frac{1}{\psi}\int_0^\rho U'(Y_t)f(t,S_t,Y_t;\sigma^\psi_t) \dd t + U(Y_T)}.
\end{align*}
Using this and proceeding as in the proof of Lemma~\ref{lem:first inequality}, we obtain
\begin{align}
\label{eqn:lem:second inequality:pf:20}
\begin{split}
J^\psi(\sigma^\psi, P) - w^\psi_0
&=\EX[P]{\int_0^\rho w^\psi_s(t,S_t,Y_t) \dd S_t + \int_0^\rho w^\psi_y(t,S_t,Y_t)(\theta_t - \bar\Delta_t) \dd S_t}\\
&\qquad+ \EX[P]{\int_0^\rho \left(w^\psi_t(t,S_t,Y_t) + H^\psi(t,S_t,Y_t;\theta_t,\sigma^\psi_t)\right) \dd t}.
\end{split}
\end{align}
By the same argument as in the proof of Lemma~\ref{lem:first inequality}, the first expectation in \eqref{eqn:lem:second inequality:pf:20} vanishes and it remains to estimate the second term. Using \eqref{eqn:lem:HJBI:optimal volatility} from Lemma~\ref{lem:HJBI} yields
\begin{align}
\label{eqn:lem:second inequality:pf:30}
J^\psi(\sigma, P) - w^\psi_0
&\leq  \EX[P]{\int_0^\rho C\psi^3 \dd t}
\leq TC\psi^3
\leq \frac{1}{2}\varepsilon\psi^2.
\end{align}
Combining \eqref{eqn:lem:second inequality:pf:30} with \eqref{eqn:lem:second inequality:pf:10} completes the proof.
\end{proof}

\subsection{Feynman--Kac representation}
\label{sec:proof of Feynman-Kac}

\begin{proof}[Proof of Proposition~\ref{prop:Feynman-Kac}]
Fix $(t,s) \in [0,T]\times[K^{-1},K]$ and let ${\check{\bar\sigma}(u,\cdot):\RR\to \RR}$ be a continuous extension of $\left.\bar\sigma(u,\cdot)\right\vert_{(K^{-1},K)}$ to $\RR$ for each $u\in[0,T]$. By \eqref{eqn:ass:reference volatility}, $\check{\bar\sigma}$ can also be chosen bounded. Therefore, a standard result (see, e.g., \cite[Theorems 21.9 and 21.7]{Kallenberg2002}) yields the existence of a weak solution to the SDE
\begin{align}
\label{eqn:prop:Feynman-Kac:pf:SDE}
\diff \check S_u
&= \check S_u \check{\bar\sigma}(u,\check S_u) \dd \check W_u, \quad \check S_t = s.
\end{align}
That is, there is a filtered probability space $(\check\Omega,\check\cF,\check\FF,\check P)$ supporting a Brownian motion $\check W$ and a process $\check S = (\check S_u)_{u \in [t,T]}$ satisfying \eqref{eqn:prop:Feynman-Kac:pf:SDE}. Now, define the process $S = (S_u)_{u\in[t,T]}$ by $S_u = \check S_{\rho \wedge u}$ where $\rho := {\inf\lbrace u\in[t,T] : S_u \not\in(K^{-1},K) \rbrace \wedge T}$ is the first time that $S$ leaves $(K^{-1},K)$. Hence, $S$ evolves like $\check S$, but is stopped as soon as it hits $K^{-1}$ or $K$. Using that $\bar\sigma(u,\cdot) = 0$ on $\lbrace K^{-1},K \rbrace$ by \eqref{eqn:ass:reference volatility}, it is easy to show that $S$ has the reference dynamics, i.e.,
\begin{align*}
\diff S_u
&= S_u \bar\sigma(u,S_u) \dd \check W_u, \quad S_t = s.
\end{align*}

Next, fix $y \in (y_l,y_u)$. Applying It\^o's formula to $\wt w(u,S_u,y)$ (up to time $\rho$) and using the terminal and boundary conditions of $\wt w$ in \eqref{eqn:PDE:first order} gives
\begin{align*}
0
=\wt w(\rho,S_\rho,y)
&= \wt w(t,s,y)
+ \int_t^\rho \wt w_s(u,S_u,y) \dd S_u\\
&\qquad
+ \int_t^\rho \left\lbrace \wt w_t(u,S_u,y) + \frac{1}{2}\bar\sigma(u,S_u)^2 S_u^2 \wt w_{ss}(u,S_u,y)\right\rbrace \dd u.
\end{align*}
Using the PDE \eqref{eqn:PDE:first order} to substitute the drift term and rearranging terms, we obtain
\begin{align}
\label{eqn:prop:Feynman-Kac:pf:before expectation}
\wt w(t,s,y)
&= -\int_t^\rho \wt w_s(u,S_u,y) \dd S_u
+ \int_t^\rho \wt g(u,S_u,y) \dd u.
\end{align}
Using our definition that $\wt g(t,s,y) = 0$ for $s \in \lbrace K^{-1},K\rbrace$ (cf.~\eqref{eqn:source term:boundary}) and the fact that $S$ stays constant (at $K^{-1}$ or $K$) after time $\rho$, we may replace the upper limit of the last integral in \eqref{eqn:prop:Feynman-Kac:pf:before expectation} by $T$. Moreover, by the boundedness of $\bar\sigma$ and $\wt w_s$ (cf.~\eqref{eqn:ass:pde derivatives}), the stochastic integral is a martingale under $\check P$. Therefore, taking expectations in \eqref{eqn:prop:Feynman-Kac:pf:before expectation} yields the Feynman--Kac representation \eqref{eqn:Feynman-Kac} (where the integrand is understood as zero if $S_u \in \lbrace K^{-1}, K \rbrace$).
\end{proof}

\subsection{Existence of probability scenarios}
\label{sec:existence}

\begin{proof}[Proof of Theorem~\ref{thm:existence}]
To prove the first assertion, fix $(\theta,\sigma) \in \cA_0\times\cV_0$. The goal is to construct continuous, adapted processes $S'$ and $Y'$ on some probability space $(\Omega',\cF',\FF',P')$ satisfying \eqref{eqn:process bounds} as well as
\begin{align}
\label{eqn:thm:existence:Sprime:goal}
S'_t
&= s_0 + \int_0^t S'_u \sigma_u((S',Y')) \dd W'_u,\\
\label{eqn:thm:existence:Yprime:goal}
Y'_t
&= y_0 + \int_0^t (\theta_u((S',Y')) - \bar V_s(u,S'_u)) \dd S'_u + \int_0^t \frac{1}{2}(S'_u)^2 \bar V_{ss}(u,S'_u) \big(\bar\sigma(u,S'_u)^2 - \sigma_u((S',Y'))^2 \big) \dd u,
\end{align}
for a Brownian motion $W'$. Comparing this with \eqref{eqn:setup:Y}--\eqref{eqn:setup:covariation}, we see that the image measure $P:=P'\circ(S',Y')^{-1}$ on $(\Omega,\cF)$ then satisfies $(\theta,\sigma,P)\in\fS$.

We proceed as follows. Note that by \eqref{eqn:A0}, the diffusion coefficient of $Y'$ changes to $0$ at time $\tau$. Hence, we first apply a general existence result to obtain a weak solution up to time $\tau$, ignoring for the moment that $S'$ or $Y'$ might exceed the bounds \eqref{eqn:process bounds}. Then we employ Theorem~\ref{thm:SDE:existence} to extend the process after time $\tau$. Finally, we can stop the extended process at the first time that $S$ leaves $(K^{-1},K)$ to obtain a candidate for $(S',Y')$.

First, introduce the cut-off function $h:\RR \to [K^{-1},K]$, $h(s) := (s \wedge K)\vee K^{-1}$, and recall that $\theta$ and $\sigma$ are of the form \eqref{eqn:A0} and \eqref{eqn:V0}, respectively. Consider the SDE
\begin{align}
\label{eqn:thm:existence:pf:S1}
S^{(1)}_t
&= s_0 + \int_0^t h(S^{(1)}_u) \acute\sigma(u,S^{(1)}_u,Y^{(1)}_u) \dd W'_u,\\
\label{eqn:thm:existence:pf:Y1}
\begin{split}
Y^{(1)}_t
&= y_0 + \int_0^t \breve\theta_u((S^{(1)},Y^{(1)})) h(S^{(1)}_u) \acute\sigma(u,S^{(1)}_u,Y^{(1)}_u) \dd W'_u\\
&\qquad + \int_0^t \frac{1}{2} h(S^{(1)}_u)^2 \check{\bar V}_{ss}(u,h(S^{(1)}_u)) \left(\check{\bar\sigma}(u,h(S^{(1)}_u))^2 - \acute\sigma(u,S^{(1)}_u,Y^{(1)}_u)^2\right) \dd u,
\end{split}
\end{align}
where $\check{\bar V}_{ss}$ and $\check{\bar\sigma}$ are Lipschitz continuous extensions of $\bar V_{ss},\bar\sigma:(0,T)\times(K^{-1},K) \to \RR$ to the closure $[0,T]\times[K^{-1},K]$.
By our assumptions, the drift and diffusion coefficients of this SDE are uniformly bounded and continuous on $\Omega$ for each fixed $u\in(0,T)$. Therefore, the SDE has a weak solution (see, e.g., \cite[Theorems 21.9 and 21.7]{Kallenberg2002}). In other words, there exist a filtered probability space $(\Omega',\cF',\FF',P')$ carrying a $(P',\FF')$-Brownian motion $W'$ and continuous, $\FF'$-adapted processes $S^{(1)}$, $Y^{(1)}$ satisfying \eqref{eqn:thm:existence:pf:S1}--\eqref{eqn:thm:existence:pf:Y1}. Recall that $\tau$ is an $\FF$-stopping time on $(\Omega,\cF)$. Applying Theorem~\ref{thm:SDE:existence} to $X^{(1)}:=(S^{(1)}, Y^{(1)})$ and the $\FF$-stopping time $\tau$, there exists an $\FF'$-adapted process $X^{(2)} = (S^{(2)}, Y^{(2)})$ satisfying
\begin{align*}
S^{(2)}_t
&= s_0 + \int_0^t h(S^{(2)}_u) \acute\sigma(u,S^{(2)}_u,Y^{(2)}_u) \dd W'_u,\\
Y^{(2)}_t
&= y_0 + \int_0^t \breve\theta_u((S^{(2)},Y^{(2)})) h(S^{(2)}_u) \acute\sigma(u,S^{(2)}_u,Y^{(2)}_u) \1_{\lbrace u<\tau(X^{(1)})\rbrace} \dd W'_u\\
&\qquad + \int_0^t \frac{1}{2} h(S^{(2)}_u)^2 \check{\bar V}_{ss}(u,h(S^{(2)}_u)) \big(\check{\bar\sigma}(u,h(S^{(2)}_u))^2 - \acute\sigma(u,S^{(2)}_u,Y^{(2)}_u)^2 \big) \dd u;
\end{align*}
note that only the diffusion coefficient of the $Y$-component is set to $0$ after time $\tau(X^{(1)})$ and that the remaining drift and diffusion coefficients are uniformly bounded and Lipschitz continuous in all variables. Next, define $X'=(S',Y')$ by $S'_t := S^{(2)}_{t \wedge \rho(X^{(2)})}$ and $Y'_t := Y^{(2)}_{t \wedge \rho(X^{(2)})}$ where $\rho := \inf \lbrace t \in [0,T] : S_t \not\in (K^{-1},K) \rbrace \wedge T$ is the first time that $S$ leaves $(K^{-1},K)$; note that $\rho$ is an $\FF$-stopping time on $(\Omega,\cF)$. Then
\begin{align}
\label{eqn:thm:existence:pf:Sprime1}
S'_t
&= s_0 + \int_0^t h(S'_u) \acute\sigma(u,S'_u,Y'_u)\1_{\lbrace u<\rho(X^{(2)})\rbrace} \dd W'_u,\\
\label{eqn:thm:existence:pf:Yprime1}
\begin{split}
Y'_t
&= y_0 + \int_0^t \breve\theta_u((S',Y')) h(S'_u) \acute\sigma(u,S'_u,Y'_u) \1_{\lbrace u<\tau(X^{(1)})\rbrace}\1_{\lbrace u<\rho(X^{(2)})\rbrace} \dd W'_u\\
&\qquad + \int_0^t \frac{1}{2} h(S'_u)^2 \check{\bar V}_{ss}(u,h(S'_u)) \big(\check{\bar\sigma}(u,h(S'_u))^2 - \acute\sigma(u,S'_u,Y'_u)^2 \big) \1_{\lbrace u < \rho(X^{(2)})\rbrace} \dd u.
\end{split}
\end{align}
Now note that on $\lbrace u < \rho(X^{(2)})\rbrace$, $S'_u \in (K^{-1},K)$, so $h(S'_u) = S'_u$, $\check{\bar\sigma}(u,S'_u) = \bar\sigma(u,S'_u)$ and $\check{\bar V}_{ss}(u,S'_u) = \bar V_{ss}(u,S'_u)$. Moreover, for each $\omega' \in \Omega'$, as $X^{(1)}_u(\omega') = X^{(2)}_u(\omega')$ for $u \leq \tau(X^{(1)}(\omega'))$ by construction of $X^{(2)}$, Galmarino's test \cite[Theorem IV.100]{DellacherieMeyer1978} implies that $\tau(X^{(1)}(\omega')) = \tau(X^{(2)}(\omega'))$. Similarly, we find that $\rho(X^{(2)}) = \rho(X')$. Combining these observations with \eqref{eqn:thm:existence:pf:Sprime1}--\eqref{eqn:thm:existence:pf:Yprime1} yields
\begin{align}
\label{eqn:thm:existence:pf:Sprime2}
S'_t
&= s_0 + \int_0^t S'_u \acute\sigma(u,S'_u,Y'_u)\1_{\lbrace u<\rho(X')\rbrace} \dd W'_u,\\
\label{eqn:thm:existence:pf:Yprime2}
\begin{split}
Y'_t
&= y_0 + \int_0^t \breve\theta_u((S',Y')) S'_u \acute\sigma(u,S'_u,Y'_u) \1_{\lbrace u<(\tau \wedge \rho)(X^{(2)})\rbrace} \dd W'_u\\
&\qquad + \int_0^t \frac{1}{2} (S'_u)^2 \bar V_{ss}(u,S'_u) \big(\bar\sigma(u,S'_u)^2 - \acute\sigma(u,S'_u,Y'_u)^2 \big) \1_{\lbrace u < \rho(X')\rbrace} \dd u.
\end{split}
\end{align}
First, note that $\acute\sigma(u,S'_u,Y'_u)\1_{\lbrace u<\rho(X')\rbrace} = \sigma_u((S',Y'))$ by \eqref{eqn:V0} and the fact that $S'$ stays constant after having hit $K^{-1}$ or $K$. In particular, $S'$ satisfies \eqref{eqn:thm:existence:Sprime:goal} as desired. Next, let us analyse the drift term in \eqref{eqn:thm:existence:pf:Yprime2}. To this end, note that on $\lbrace u \geq \rho(X') \rbrace$, $S'_u = S'_{\rho(X')} \in \lbrace K^{-1},K \rbrace$, and so $\bar\sigma(u,S'_u) = 0$ by \eqref{eqn:ass:reference volatility}. Thus, the drift term can be rewritten as 
\begin{align*}
\int_0^t \frac{1}{2} (S'_u)^2 \bar V_{ss}(u,S'_u) \big(\bar\sigma(u,S'_u)^2 - \sigma_u((S',Y'))^2 \big) \dd u
\end{align*}
in accordance with \eqref{eqn:thm:existence:Yprime:goal}. Finally, let us turn to the diffusion term in \eqref{eqn:thm:existence:pf:Yprime2}. $\tau \wedge \rho$ is an $\FF$-stopping time on $(\Omega, \cF)$. Moreover, for each $\omega'\in\Omega'$, as $X^{(2)}_u(\omega') = X'_u(\omega')$ for $u \leq \rho(X^{(2)}(\omega'))$ by construction of $X'$, Galmarino's test implies that $(\tau \wedge \rho)(X^{(2)}(\omega')) = (\tau \wedge \rho)(X'(\omega'))$. Using this, the diffusion term in \eqref{eqn:thm:existence:pf:Yprime2} can be rewritten as
\begin{align*}
\int_0^t \breve\theta_u((S',Y'))\1_{\lbrace u < \tau(X')\rbrace} S'_u \acute\sigma(u,S'_u,Y'_u) \1_{\lbrace u<\rho(X')\rbrace} \dd W'_u
&= \int_0^t \left(\theta_u((S',Y')) - \bar V_s(u,S'_u)\right)\dd S'_u,
\end{align*}
where we use \eqref{eqn:A0} and \eqref{eqn:thm:existence:pf:Sprime2} in the last step. This shows that $Y'$ satisfies \eqref{eqn:thm:existence:Yprime:goal} as desired.

It remains to check that $(S', Y')$ does not leave $[K^{-1},K]\times(y_l, y_u)$. It is clear that $S'$ evolves in $[K^{-1},K]$ as the diffusion coefficient in \eqref{eqn:thm:existence:Sprime:goal} is set to zero as soon as $S'$ hits $K^{-1}$ or $K$. Concerning $Y'$, we see from the definition of $\tau$ that the diffusion coefficient in \eqref{eqn:thm:existence:Yprime:goal} is set to $0$ as soon as $\vert Y' - y_0 \vert = 1$. Moreover, using \eqref{eqn:ass:reference price} and the fact that all volatilities take values in $[0,K]$, the absolute value of the drift term in \eqref{eqn:thm:existence:Yprime:goal} is bounded by $\frac{1}{2}K^3T$. It follows that $Y'$ evolves in $[y_0 - 1 - \frac{1}{2}K^3T, y_0 + 1 + \frac{1}{2}K^3 T] \subset (y_l,y_u)$. This completes the proof of the first assertion in Theorem~\ref{thm:existence}.

For the second assertion, we have to show that $(\theta^\psi, \sigma^\psi) \in \cA_0\times\cV_0$ for $\psi > 0$ small enough. Recall from Theorem~\ref{thm:second order} that
\begin{align*}
\theta^\psi_t
&= \bar V_s(t,S_t) + \wt\theta(t,S_t,Y_t)\1_{\lbrace t < \tau \rbrace} \psi,\\
\sigma^\psi_t
&= \bar\sigma(t,S_t) + \wt\sigma(t,S_t,Y_t)\psi,
\end{align*}
with
\begin{align*}
\wt\theta(t,s,y)
&= \wt w_s(t,s,y) + \frac{U'(y)}{U''(y)}\wt w_{sy}(t,s,y),\\
\wt\sigma(t,s,y)
&= \frac{\bar\sigma(t,s) s^2 \bar V_{ss}(t,s)}{f''(t,s,y;\bar\sigma(t,s))}.
\end{align*}
Under our assumptions, for any $\psi > 0$, $\breve\theta_t := \wt\theta(t,S_t,Y_t)\psi$ is clearly progressively measurable as well as bounded and continuous as a function on $[0,T]\times\Omega$. Thus, $\theta^\psi \in \cA_0$.

Setting $\acute\sigma(t,s,y):= \bar\sigma(t,s)+\wt\sigma(t,s,y)\psi$, we can write $\sigma^\psi_t = \acute\sigma(t,S_t,Y_t)\1_{\lbrace S_t \in (K^{-1},K)\rbrace}$ since $\bar\sigma(t,\cdot) = 0$ on $\lbrace K^{-1},K\rbrace$ by \eqref{eqn:ass:reference volatility}. To see that $\sigma^\psi \in \cV_0$, we thus have to check that $\acute\sigma(t,s,y)$ is Lipschitz continuous on $(0,T)\times(K^{-1},K)\times (y_l,y_u)$ (then it can be extended to a Lipschitz continuous function on $[0,T]\times\RR^2$) and takes values in $[0,K]$. Lipschitz continuity of $\wt\sigma$ follows from our assumptions (this uses in particular that $\frac{1}{K} \leq f'' \leq K$ from \eqref{eqn:ass:penalty function}). Whence, $\acute\sigma$ is Lipschitz continuous as well. Moreover, $\wt\sigma$ is bounded. Combining this with \eqref{eqn:ass:reference volatility}, we conclude that for $\psi > 0$ small enough, $\sigma^\psi$ takes values in $[0,K]$.
\end{proof}

\appendix

\section{Calculus}

The following result is an extension of the implicit function theorem that allows the defining function to depend on a parameter. In particular, it provides parameter-independent bounds for the first and second derivatives of the implicitly defined functions.

\begin{lemma}
\label{lem:implicit function bounds}
Let $\Lambda \neq \emptyset$ be a set and $U, V$ open subsets of $\RR$. For each $\lambda \in \Lambda$, let $F_\lambda: U \times V \to \RR$ and $y_\lambda: U \to V$ be twice continuously differentiable functions with
\begin{align}
\label{eqn:lem:implicit function bounds:defining equation}
F_\lambda(x,y_\lambda(x))
&= 0, \quad x \in U.
\end{align}
If there are constants $M_0 > 1$ and $M_1 \geq 0$ such that for each $\lambda \in \Lambda$ and all $(x,y) \in U \times V$,
\begin{align}
\label{eqn:lem:implicit function bounds:input bounds}
\left\vert \frac{\partial F_\lambda}{\partial x}(x,y)\right\vert,
\left\vert \frac{\partial^2 F_\lambda}{\partial x^2}(x,y) \right\vert
&\leq M_0 + M_1 \vert y\vert,
\quad
\left\vert\frac{\partial^2 F_\lambda}{\partial x \partial y}(x,y) \right\vert,
\left\vert\frac{\partial^2 F_\lambda}{\partial y^2} \right\vert
\leq M_0,
\quad
\left\vert \frac{\partial F_\lambda}{\partial y}(x,y)\right\vert
\geq \frac{1}{M_0},
\end{align}
then there is a constant $\wt M > 0$ such that for all $\lambda \in \Lambda$ and $x\in U$,
\begin{align*}
\vert y_\lambda'(x) \vert
&\leq \wt M(1 + M_1 \vert y_\lambda(x) \vert)
\quad\text{and}\quad
\vert y_\lambda''(x) \vert
\leq \wt M (1+M_1\vert y_\lambda (x)\vert + M_1\vert y_\lambda(x)\vert^2).
\end{align*}
Moreover, if $\frac{\partial^2 F_\lambda}{\partial y^2} \equiv 0$, then for all $\lambda \in \Lambda$ and $x\in U$,
\begin{align*}
\vert y_\lambda''(x) \vert
&\leq \wt M (1+M_1\vert y_\lambda (x)\vert).
\end{align*}
\end{lemma}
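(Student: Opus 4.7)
The plan is to apply the implicit function theorem in its differentiated form and then just track constants carefully. By differentiating \eqref{eqn:lem:implicit function bounds:defining equation} in $x$ and using $\frac{\partial F_\lambda}{\partial y} \neq 0$ (guaranteed by the lower bound), one obtains
\begin{align*}
y_\lambda'(x)
&= -\frac{\partial F_\lambda/\partial x}{\partial F_\lambda/\partial y}(x,y_\lambda(x)),
\end{align*}
so the bounds $|\partial F_\lambda/\partial x| \leq M_0+M_1|y|$ and $|\partial F_\lambda/\partial y| \geq 1/M_0$ immediately give $|y_\lambda'(x)| \leq M_0^2 + M_0 M_1 |y_\lambda(x)|$. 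This already proves the asserted bound on $y_\lambda'$ with $\wt M := M_0^2$.

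Differentiating a second time (this uses the assumed $C^2$ regularity of $F_\lambda$ and $y_\lambda$), we obtain
\begin{align*}
\frac{\partial F_\lambda}{\partial y} y_\lambda''(x)
&= -\frac{\partial^2 F_\lambda}{\partial x^2} - 2\frac{\partial^2 F_\lambda}{\partial x \partial y} y_\lambda'(x) - \frac{\partial^2 F_\lambda}{\partial y^2}y_\lambda'(x)^2.
\end{align*}
Solving for $y_\lambda''$ and invoking \eqref{eqn:lem:implicit function bounds:input bounds} termwise yields
\begin{align*}
|y_\lambda''(x)|
&\leq M_0 \Big[(M_0 + M_1|y_\lambda(x)|) + 2 M_0 |y_\lambda'(x)| + M_0 |y_\lambda'(x)|^2\Big].
\end{align*}
Substituting the first-derivative estimate, the middle term is linear in $|y_\lambda|$ with coefficient proportional to $M_1$, while $|y_\lambda'|^2$ contributes a constant, an $M_1|y_\lambda|$ term, and an $M_0^2 M_1^2 |y_\lambda|^2$ term. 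Since $\wt M$ may depend on both $M_0$ and $M_1$, we can absorb the factor $M_0^2 M_1$ into $\wt M$ and bound $M_0^2 M_1^2 |y_\lambda|^2 \leq \wt M M_1 |y_\lambda|^2$, yielding the stated bound $|y_\lambda''(x)|\leq \wt M(1+M_1|y_\lambda(x)|+M_1|y_\lambda(x)|^2)$. The special case $\partial^2 F_\lambda/\partial y^2 \equiv 0$ eliminates the $|y_\lambda'|^2$ contribution entirely, so only the constant and $M_1|y_\lambda|$ terms survive, giving $|y_\lambda''(x)| \leq \wt M(1+M_1|y_\lambda(x)|)$.

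There is no real obstacle here beyond keeping the $M_0, M_1$ bookkeeping honest: the argument is a direct computation via implicit differentiation, and the only subtle point is noting that the quadratic-in-$|y_\lambda|$ term from $(y_\lambda')^2$ carries a factor $M_1^2$ which must be matched against the asserted $M_1$ coefficient, which is legitimate precisely because $\wt M$ is allowed to depend on $M_1$.
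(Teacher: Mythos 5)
Your proposal is correct and follows essentially the same route as the paper's proof: implicit differentiation of $F_\lambda(x,y_\lambda(x))=0$ once and twice, solving for $y_\lambda'$ and $y_\lambda''$ using the uniform lower bound on $\partial F_\lambda/\partial y$, and then substituting the first-derivative estimate into the second. Your bookkeeping point about absorbing the $M_1^2\vert y_\lambda\vert^2$ contribution into $\wt M M_1 \vert y_\lambda\vert^2$ is legitimate (the paper's ``sufficiently large constant'' does the same), and the treatment of the case $\partial^2 F_\lambda/\partial y^2\equiv 0$ matches the paper's.
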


\begin{proof}
Taking the derivative of \eqref{eqn:lem:implicit function bounds:defining equation} with respect to $x$ yields
\begin{align}
\label{eqn:lem:implicit function bounds:pf:first derivative}
\frac{\partial F_\lambda}{\partial x}(x,y_\lambda(x)) + \frac{\partial F_\lambda}{\partial y}(x,y_\lambda(x))y_\lambda'(x)
&= 0, \quad \lambda \in \Lambda, x \in U.
\end{align}
Solving this for $y_\lambda'(x)$ and using the bounds \eqref{eqn:lem:implicit function bounds:input bounds} then gives
\begin{align}
\label{eqn:lem:implicit function bounds:pf:first bound}
\vert y_\lambda'(x)\vert
&\leq M_0(M_0 + M_1 \vert y_\lambda(x) \vert), \quad \lambda \in \Lambda, x \in U.
\end{align}
Taking the derivative in \eqref{eqn:lem:implicit function bounds:pf:first derivative}, we obtain for all $x\in U$,
\begin{align*}
\frac{\partial^2 F_\lambda}{\partial x^2}(x,y_\lambda(x)) + 2\frac{\partial^2 F_\lambda}{\partial x \partial y}(x,y_\lambda(x)) y_\lambda'(x)
+\frac{\partial^2 F_\lambda}{\partial y^2}(x,y_\lambda(x)) (y_\lambda'(x))^2 + \frac{\partial F_\lambda}{\partial y}(x,y_\lambda(x))y_\lambda''(x)
&= 0.
\end{align*}
Again, solving for $y_\lambda''(x)$ and using the bounds \eqref{eqn:lem:implicit function bounds:input bounds} as well as \eqref{eqn:lem:implicit function bounds:pf:first bound} gives
\begin{align}
\label{eqn:lem:implicit function bounds:pf:second bound}
\vert y_\lambda''(x) \vert
&\leq \ol M (1+M_1\vert y_\lambda (x)\vert + M_1\vert y_\lambda(x)\vert^2), \quad \lambda \in \Lambda, x \in U,
\end{align}
for some sufficiently large constant $\ol M$. Finally, if $\frac{\partial^2 F_\lambda}{\partial y^2} \equiv 0$, it is easily seen that the quadratic term in \eqref{eqn:lem:implicit function bounds:pf:second bound} vanishes.
\end{proof}

\section{Stochastic differential equations}

Fix an abstract filtered probability space $(\Omega, \cF, \FF = (\cF_t)_{t\geq0}, P)$ carrying an $r$-dimensional Brownian motion $W = (W^1_t,\ldots, W^r_t)_{t\geq0}$. The goal of this section is to prove the existence of  solutions of a class of stochastic differential equations (SDEs) whose coefficients change at a stopping time. More precisely, we consider SDEs of the form
\begin{align}
\label{eqn:SDE:SDE}
X_t
&= \xi + \int_0^t \sigma(s,X)\dd W_s + \int_0^t b(s,X)\dd s, \quad t \geq 0,
\end{align}
where $\xi$ is an $\RR^d$-valued $\cF_0$-measurable random vector, $X = (X^1_t,\ldots,X^d_t)_{t\geq0}$ is a continuous semimartingale in $\RR^d$, and
\begin{align}
\label{eqn:SDE:diffusion coefficient}
\sigma(s,X)
&= \sigma^{(1)}(s,X)\1_{\lbrace s<\tau(X)\rbrace} + \sigma^{(2)}(s,X_s)\1_{\lbrace s\geq\tau(X)\rbrace},\\
b(s,X)
&= b^{(1)}(s,X)\1_{\lbrace s<\tau(X)\rbrace} + b^{(2)}(s,X_s)\1_{\lbrace s\geq\tau(X)\rbrace}.\notag
\end{align}
Here, $\tau$ is a stopping time for the filtration induced by the canonical process on $C(\RR_+;\RR^d)$, $\sigma^{(1)},b^{(1)}$ are functions on $\RR_+\times C(\RR_+;\RR^d)$ that are progressive for the same filtration, and $\sigma^{(2)},b^{(2)}$ are measurable functions on $\RR_+ \times \RR^d$; all codomains are understood to be of suitable dimension.

First of all, note that we cannot apply general existence results directly to the coefficients $\sigma$ and $b$ since the stopping time $\tau$ is typically not a continuous function on $C(\RR_+;\RR^d)$. However, existence for this type of SDE is of course expected provided that solutions exist for both sets of coefficients separately. The obvious idea is to solve the SDE for the first set of cofficients, stop the solution at $\tau$, and solve from there the SDE with the second set of coefficients. This can be made precise as follows:

\begin{theorem}
\label{thm:SDE:existence}
Suppose that the process $X^{(1)}$ on $(\Omega,\cF,\FF,P)$ satisfies
\begin{align}
\label{eqn:thm:SDE:existence:X1}
X^{(1)}_t
&= \xi + \int_0^t \sigma^{(1)}(s,X^{(1)}) \dd W_s + \int_0^t b^{(1)}(s,X^{(1)}) \dd s, \quad t \geq 0.
\end{align}
Moreover, assume that there is a constant $K > 0$ such that for all $t, t' \geq 0$ and $x, x' \in \RR^d$,
\begin{align}
\label{eqn:thm:SDE:existence:Lipschitz:second}
\vert \sigma^{(2)}(t,x) - \sigma^{(2)}(t',x')\vert + \vert b^{(2)}(t,x)-b^{(2)}(t',x')\vert
&\leq K \vert (t,x) - (t',x') \vert,\\
\label{eqn:thm:SDE:existence:linear growth:second}
\vert \sigma^{(2)}(t,x)\vert + \vert b^{(2)}(t,x) \vert
&\leq K (1+ \vert (t,x) \vert)
\end{align}
where $\vert \cdot \vert$ denotes the Euclidean norm in the suitable dimension. Then there is a continuous, $\FF$-adapted, $\RR^d$-valued process $X$ satisfying \eqref{eqn:SDE:SDE}.
\end{theorem}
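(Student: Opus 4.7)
The plan is to paste together two solutions: first stop the given $X^{(1)}$ at the random time $\tau_1 := \tau(X^{(1)})$, and then solve a classical Lipschitz SDE starting from $X^{(1)}_{\tau_1}$ at time $\tau_1$ with the Markovian coefficients $\sigma^{(2)}, b^{(2)}$. Since $\tau$ is a stopping time for the canonical filtration on $C(\RR_+;\RR^d)$ and $X^{(1)}$ is continuous and $\FF$-adapted, Galmarino's test \cite[Theorem~IV.100]{DellacherieMeyer1978} shows that $\tau_1$ is an $\FF$-stopping time and that $X^{(1)}_{\tau_1}$ is $\cF_{\tau_1}$-measurable.

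Next, I would invoke standard strong existence for Lipschitz SDEs with random initial time and initial condition (e.g., \cite[Theorem~V.11.2]{RogersWilliams2000}, applied to the time-shifted Brownian motion $W'_t := W_{t+\tau_1} - W_{\tau_1}$ together with the $\cF_{\tau_1}$-measurable initial condition $X^{(1)}_{\tau_1}$, using \eqref{eqn:thm:SDE:existence:Lipschitz:second}--\eqref{eqn:thm:SDE:existence:linear growth:second}). This yields a continuous, $\FF$-adapted process $X^{(2)}$ with $X^{(2)}_{\tau_1} = X^{(1)}_{\tau_1}$ satisfying
\begin{align*}
X^{(2)}_t
&= X^{(1)}_{\tau_1} + \int_{\tau_1}^t \sigma^{(2)}(s,X^{(2)}_s) \dd W_s + \int_{\tau_1}^t b^{(2)}(s,X^{(2)}_s) \dd s, \quad t \geq \tau_1.
\end{align*}
Now define the pasted process
\begin{align*}
X_t
&:= X^{(1)}_t \1_{\{t < \tau_1\}} + X^{(2)}_t \1_{\{t \geq \tau_1\}}.
\end{align*}
By construction $X$ is continuous and $\FF$-adapted, and $X$ agrees with $X^{(1)}$ on $[0,\tau_1]$.

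To conclude, I need to verify \eqref{eqn:SDE:SDE}. A second application of Galmarino's test, using that $X$ and $X^{(1)}$ coincide on $[0,\tau_1]$, gives $\tau(X) = \tau(X^{(1)}) = \tau_1$. Hence, combining this with progressive measurability of $\sigma^{(1)}, b^{(1)}$ on path space (which implies $\sigma^{(1)}(s,X) = \sigma^{(1)}(s,X^{(1)})$ for $s \leq \tau_1$), the definition \eqref{eqn:SDE:diffusion coefficient} becomes
\begin{align*}
\sigma(s,X)
&= \sigma^{(1)}(s,X^{(1)})\1_{\{s<\tau_1\}} + \sigma^{(2)}(s,X^{(2)}_s)\1_{\{s\geq\tau_1\}},
\end{align*}
and analogously for $b(s,X)$. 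Splitting the integrals in \eqref{eqn:SDE:SDE} at $\tau_1$ and substituting \eqref{eqn:thm:SDE:existence:X1} (stopped at $\tau_1$) on $[0,\tau_1)$ and the defining equation for $X^{(2)}$ on $[\tau_1,\infty)$ then yields the desired identity. The main (albeit mild) obstacle is that the switching time in \eqref{eqn:SDE:diffusion coefficient} is $\tau(X)$, not $\tau(X^{(1)})$, so care is needed to apply Galmarino's test in the right direction to identify the two \emph{a posteriori}; everything else reduces to classical Lipschitz SDE theory and routine manipulations of stopped stochastic integrals.
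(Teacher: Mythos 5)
Your proposal is correct and follows essentially the same route as the paper: stop at $\tau(X^{(1)})$, solve the Lipschitz SDE for $(\sigma^{(2)},b^{(2)})$ after shifting the filtration and Brownian motion, paste the two pieces, and identify $\tau(X)=\tau(X^{(1)})$ a posteriori via Galmarino's test together with progressivity of $\sigma^{(1)},b^{(1)}$. The only difference is cosmetic: where you appeal to a standard existence result ``with random initial time,'' the paper makes this step precise by appending the time variable as an extra state component (drift $1$, zero diffusion) so that the shifted SDE has an $\cF_{\tau_1}$-measurable initial condition and time-homogeneous Lipschitz coefficients, and then undoes the time change.
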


\begin{proof}
The solution prior to time $\hat\tau := \tau(X^{(1)})$ is already given. To construct the part of the solution after time $\hat\tau$, consider the time-shifted filtration $\wh\FF = (\wh\cF_t)_{t \geq 0}$ defined by $\wh\cF_t = \cF_{\hat\tau + t}$, and the time-shifted $(\wh\FF,P)$-Brownian motion $\wh W = (\wh W_t)_{t \geq 0}$ defined by $\wh W_t := W_{\hat\tau + t} - W_{\hat\tau}$. By our assumptions \eqref{eqn:thm:SDE:existence:Lipschitz:second}--\eqref{eqn:thm:SDE:existence:linear growth:second}, the coefficients of the SDE
\begin{align*}
\wh Y_t
&= \begin{pmatrix}\wh X\\\wh A\end{pmatrix}_t
= \zeta
+ \int_0^t \begin{pmatrix}\sigma^{(2)}(\wh A_s,\wh X_s)\\0\end{pmatrix}\dd \wh W_s
+ \int_0^t \begin{pmatrix}b^{(2)}(\wh A_s,\wh X_s)\\1\end{pmatrix} \dd s.
\end{align*}
fulfil the standard Lipschitz and linear growth assumptions that guarantee the existence of a $P$-a.s.~unique strong solution for any $\wh\cF_0$-measurable random vector $\zeta$ in $\RR^{d+1}$. In particular, there exists an $\wh\FF$-progressive process $\wh Y = (\wh X, \wh A)$ for the initial condition $\zeta := (X^{(1)}_{\hat\tau},\hat\tau)$. Clearly, $\wh A_t = \hat\tau + t$ plays the role of the shifted time variable. A simple time change now yields that $X^{(2)}_t := \wh X_{t-\hat\tau} \1_{\lbrace t \geq \hat\tau \rbrace}$ is $\FF$-progressive and satisfies
\begin{align}
\label{eqn:thm:SDE:existence:pf:X2}
X^{(2)}_t
&= X^{(1)}_{\hat\tau}
+ \int_{\hat\tau}^t \sigma^{(2)}(s,X^{(2)}_s) \dd W_s
+ \int_{\hat\tau}^t b^{(2)}(s,X^{(2)}_s) \dd s \quad \text{on } \lbrace t \geq \hat\tau \rbrace.
\end{align}

Finally, we verify that the process $X_t := X^{(1)}_t \1_{\lbrace t < \hat\tau \rbrace} + X^{(2)}_t\1_{\lbrace t \geq \hat\tau \rbrace}$ is a solution to the original SDE \eqref{eqn:SDE:SDE}. As $X^{(2)}_{\hat\tau} = X^{(1)}_{\hat\tau}$,
\begin{align*}
X_t
&= X^{(1)}_{t \wedge \hat\tau} + X^{(2)}_{t \vee \hat\tau} - X^{(2)}_{\hat\tau}.
\end{align*}
Plugging in \eqref{eqn:thm:SDE:existence:X1} and \eqref{eqn:thm:SDE:existence:pf:X2} gives
\begin{align}
X_t
&= \xi + \int_0^{t \wedge \hat\tau}  \sigma^{(1)}(s,X^{(1)}) \dd W_s
+ \int_0^{t\wedge\hat\tau} b^{(1)}(s,X^{(1)}) \dd s\notag\\
&\qquad +\int_{\hat\tau}^{t \vee \hat\tau} \sigma^{(2)}(s,X^{(2)}_s) \dd W_s
+ \int_{\hat\tau}^{t\vee\hat\tau} b^{(2)}(s,X^{(2)}_s) \dd s\notag\\
&= \xi + \int_0^t \sigma^{(1)}(s,X^{(1)})\1_{\lbrace s < \hat\tau \rbrace} \dd W_s
+ \int_0^t b^{(1)}(s,X^{(1)})\1_{\lbrace s < \hat\tau \rbrace} \dd s\notag\\
&\qquad +\int_0^t \sigma^{(2)}(s,X^{(2)}_s)\1_{\lbrace s\geq\hat\tau \rbrace} \dd W_s
+ \int_0^t b^{(2)}(s,X^{(2)}_s)\1_{\lbrace s\geq\hat\tau \rbrace} \dd s.
\label{eqn:thm:SDE:existence:pf:verification}
\end{align}
As $X^{(1)}_s = X_s$ on $\lbrace s \leq \hat\tau \rbrace$, Galmarino's test implies that $\hat\tau = \tau(X^{(1)}) = \tau(X)$. Moreover, since $\sigma^{(1)}$ is progressive, $\sigma^{(1)}(s,X^{(1)})$ only depends on the path of $X^{(1)}$ up to time $s$. Using also the definition of $\sigma$ in \eqref{eqn:SDE:diffusion coefficient}, we obtain
\begin{align*}
\sigma^{(1)}(s,X^{(1)})\1_{\lbrace s < \hat\tau \rbrace} + \sigma^{(2)}(s,X^{(2)}_s)\1_{\lbrace s \geq \hat\tau \rbrace}
&= \sigma(s,X).
\end{align*}
Using this and the analogous statement for the drift coefficients, we see from \eqref{eqn:thm:SDE:existence:pf:verification} that $X$ is a solution to \eqref{eqn:SDE:SDE}.
\end{proof}

\small
\providecommand{\bysame}{\leavevmode\hbox to3em{\hrulefill}\thinspace}
\providecommand{\MR}{\relax\ifhmode\unskip\space\fi MR }
\providecommand{\MRhref}[2]{%
  \href{http://www.ams.org/mathscinet-getitem?mr=#1}{#2}
}
\providecommand{\href}[2]{#2}

\end{document}